\documentclass[aps,prx,twocolumn,amsmath,nofootinbib,longbibliography,amssymb,superscriptaddress,10pt]{revtex4-2}


%
\usepackage[english]{babel}
\usepackage{etoolbox}
\usepackage[shortlabels]{enumitem}
\usepackage{amsmath,amsthm,amssymb,amsfonts}
\usepackage{stmaryrd}
\usepackage{accents}
\usepackage{bm}
\usepackage{leftindex}
\usepackage[mathscr]{euscript}
\usepackage[linkcolor=blue,colorlinks=true]{hyperref}
\setcounter{secnumdepth}{3}

\usepackage{graphicx}
\usepackage[caption=false]{subfig}
\graphicspath{
    {figures/}
}

\newtheorem{theorem}{Theorem}

\newtheorem{lemma}[theorem]{Lemma}

\docsvlist{A,B,C,D,E,F,G,H,I,J,K,L,M,N,O,P,Q,R,S,T,U,V,W,X,Y,Z}

\docsvlist{A,B,C,D,E,F,G,H,I,J,K,L,M,N,O,P,Q,R,S,T,U,V,W,X,Y,Z}

\docsvlist{n,e,E}

\def \bra {\langle}
\def \ket {\rangle}
\def \lr {\leftrightarrow}
\def \ts {\textsuperscript}

\begin{document}

\title{
Absence of Floating Phase in Superconductors with Time-reversal Symmetry Breaking on any Lattice
}
\author{Andrew C. Yuan}
\affiliation{Department of Physics, Stanford University, Stanford, CA 93405, USA}
\date{\today}

\begin{abstract}
Due to the interplay of multi-component order parameters (e.g., a twisted bilayer superconductor with inter-layer Josephson coupling or a frustrated ($n\ge 3$)-band superconductor), a superconductor can possess a $U(1)\times \mathbb{Z}_2$ symmetry, corresponding to the superconducting $T_c$ and time-reversal symmetry breaking transition $T_\text{TRSB}$, respectively.
It was then conjectured that in this class of Hamiltonians, 
there exists a vast parameter regime $\sO$ such that the system exhibits \textit{vestigial} TRSB, i.e., $T_\text{TRSB} > T_c$, while at the boundary $\partial \sO$, the system possesses a single phase transition $T_\text{TRSB}=T_c$.
In this paper, we provide evidence towards this conjecture by mathematically eliminating the possibility of a \textit{floating phase}, i.e., $T_\text{TRSB} < T_c$, for the strong coupling regime.
More specifically, we prove that the correlation functions of $U(1)$ spins are bounded above by that of $\mathbb{Z}_2$ spins for all temperatures and lattice structures (e.g., $\mathbb{Z}^d$ for all $d$).
In particular, this guarantees the existence of high-$T_c$ TRSB (and consequently topological) superconductivity in a large class of Hamiltonians.
Note that the same property can also be proven for a certain parameter regime ($\Delta \ge 4/5$) of the generalized XY model on any lattice structure, despite belonging to an entirely distinct class of $U(1)\times \dZ_2$ Hamiltonians.

\end{abstract}
\maketitle
\section{Introduction}

In the last several years, there has been considerable excitement generated by various theoretical \cite{wang2017topological,kivelson2020proposal,yuan2021strain,yuan2023multiband,laughlin1998magnetic,yuan2023inhomogeneity,bojesen2014phase} 
and experimental \cite{ghosh2020recent,ghosh2021thermodynamic,schemm2014observation,grinenko2021state} proposals concerning the existence of time-reversal symmetry breaking (TRSB) pairing states in various unconventional superconductors.
However, conclusive experimental evidence of these states has so far not been reported.

Recently, it was proposed that high-$T_c$ topological superconductivity can be reliably achieved via twisting two identical 2D layers of BSCCO ($d_{x^2-y^2}$-wave pairing symmetry) relative to each other \cite{can2021high,zhao2023time}. 
At relative orientation $45^\circ$, the 1\ts{st} order inter-layer Josephson coupling $-J_1 \cos \phi$ vanishes so that the 2\ts{nd} order coupling $J_2 \cos 2\phi$ (with a sign that favors an inter-layer phase difference $\phi=\pm \pi/2$) becomes the dominant inter-layer interaction.
This inter-layer term was conjectured to induce a TRSB transition ($d_{x^2-y^2} \pm i d_{xy}$ state) with a critical temperature that is on the same order as the superconducting transition, i.e., $T_\text{TRSB} \approx T_c$,
and thus generating high $T_c$ topological superconductivity
(albeit with a gap magnitude proportional to $J_2$, which is expected to be very small \cite{yuan2023inhomogeneity} in the actual system).
Similar proposals were also made for $30^\circ$-twisted bilayer graphene \cite{yao2018quasicrystalline,pezzini202030,deng2020interlayer,liu2023charge} and inherent 2-component systems \cite{maccari2022effects}, which is especially enticing since it is not limited to 2D layers and may also circumvent the issue of a small gap magnitude in the previous proposal.
Alternatively, it was proposed that in a $(n\ge 3)$-band superconductor, the inter-band 1\ts{st} order Josephson coupling can cause sufficient frustration among the inter-band phase differences so that the ground state configurations may exhibit phase differences which differ from 0 or $\pi$ and thus result in TRSB \cite{bojesen2014phase}.
Similar proposals have been argued to occur in the hole doped Ba$_{1-x}$K$_x$Fe$_2$As$_2$ pnictide superconductor \cite{maiti2013s+} and other classes of materials \cite{mukherjee2011role,lee2009pairing,platt2012mechanism,yerin2017anomalous,yerin2022magneto}.

\begin{figure}[ht]
\includegraphics[width=0.8\columnwidth]{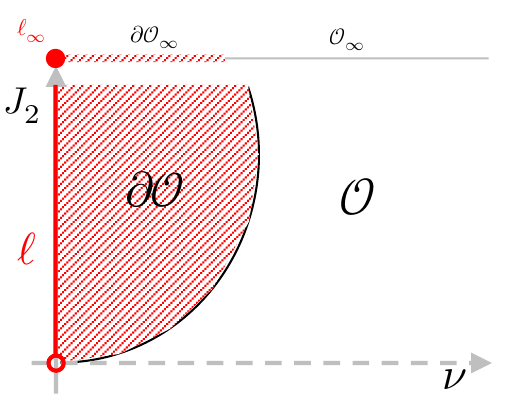}
\caption{Schematic sketch. 
Using the example of a 2-component system \cite{can2021high,maccari2022effects} with phases $\phi^\pm$, the parameter space involving $(\nu,J_2)$ describes the coupling strength $\nu$ of the current-current interaction $-\nu \nabla \phi^+ \cdot \nabla \phi^-$ and that of an inter-component $J_2 \cos 2(\phi^+ -\phi^-)$ term. See, e.g., Eq. (1) of Ref. \cite{maccari2022effects}.
The regime $\sO$ is where vestigial order $T_\text{TRSB} >T_c$ is expected to occur, while the boundary $\partial \sO$ denotes points in which there is only a single phase transition $T_\text{TRSB} =T_c$. 
In particular, it includes the critical regime $\ell \subseteq \partial \sO$, i.e., $\nu=0$, which has been studied extensively via numerics (in 2D \cite{song2022phase} and in 3D \cite{bojesen2014phase,maccari2022effects}). 
The regimes $\ell_\infty, \partial \sO_\infty, \sO_\infty$ correspond to the strong coupling limit, i.e., $J_2 \to \infty$.
We emphasize that the shape of $\partial \sO$ should not be taken too seriously, but from numerics (in 2D \cite{liu2023charge} and 3D \cite{maccari2022effects}), it's expected that for finite $J_2>0$, there exists vestigial order for large $\nu$ and a single phase transition at small $\nu$.
Similarly, the size of $\partial \sO_\infty$ should not be taken seriously, but from numerics \cite{song2022phase,bojesen2014phase}, it's expected that $\ell_\infty \subseteq \partial \sO_\infty$.
Also write $\bar{\sO}\equiv \sO \cup \partial \sO$ and similarly $\bar{\sO}_\infty$.
}
\label{fig:phase-diag}
\end{figure}

Such proposals all fall within the same class of Hamiltonians (defined explicitly in Sec. \eqref{sec:model}) which possesses a $U(1)\times \dZ_2$ symmetry, corresponding to the superconducting $T_c$ and TRSB transition $T_\text{TRSB}$, respectively.
To achieve high $T_c$ topological superconductivity, it is then imperative that the TRSB transition occurs on the same order as the superconducting transition $T_\text{TRSB}\approx T_\text{c}$.
In fact, it was conjectured that in a certain parameter regime (schematically denoted by $\sO$, see Fig. \ref{fig:phase-diag}), the system exhibits \textit{vestigial order}\footnote{
The term denotes the existence of a temperature region $T_{c}<T<T_\text{TRSB}$, in which the individual multi-component order parameters are zero (e.g., schematically, $\bra \psi_1\ket =\bra \psi_2\ket =0$), while the higher order terms are nontrivial (e.g.,$\bra \psi_1^*\psi_2\ket \ne 0$) and thus ``vestigial".
} so that $T_\text{TRSB} > T_c$, while the two transitions coincide exactly $T_\text{TRSB}= T_c$ at the boundary $\partial \sO$ (see Fig. \ref{fig:phase-diag}) \cite{can2021high,bojesen2014phase,liu2023charge,maccari2022effects}.

Within the context of mean-field theory including applied either to a Ginzburg-Landau effective field theory \cite{can2021high,bojesen2014phase,liu2023charge,maccari2022effects,stanev2010three,maiti2013s+} or an XY model on a regularized lattice \cite{yuan2023exactly}, the conjecture is relatively well-understood\footnote{
For example, at the critical regime $\ell$ in Fig. \ref{fig:phase-diag}, the 2\ts{nd} order Josephson coupling first occurs in the quartic order of Ginzburg-Landau theory and thus does not affect the critical temperature of the system (see, for example, Eq. (2) of Ref. \cite{can2021high}).
A rigorous version of this argument can be found in the Appendix of Ref. \cite{yuan2023exactly}.}.
However, it is well known that mean-field theory is reliable only when the effective number of neighboring spins is large as occurs when there are long-range interactions or in high dimensions (generally believed to be $d\ge 4$).
In low dimension ($d=2,3$) systems, fluctuations due to local nearest-neighbor interactions may play an important fact and thus it is conventional to rely on numerical results (in 2D \cite{song2022phase,liu2023charge} and in 3D \cite{bojesen2014phase,maccari2022effects}) and RG calculations (in 2D \cite{zeng2021phase,liu2023charge}) to provide us insight into the problem.
Unfortunately, in certain scenarios, the distinct approaches contradict with each other; for example, at the critical regime $\ell$, numerical evidence agrees with the conjecture $T_\text{TRSB}=T_c$, while latter RG calculations \cite{zeng2021phase} suggest a vestigial TRSB phase $T_\text{TRSB}> T_c$ (although further detailed RG calculations seem to reconcile this matter \cite{how2023absence}).
Therefore, the main goal of this paper is to obtain exact results on the short-range nearest-neighbor model.

For the majority of this paper, we shall consider a class of Hamiltonians which are physically motivated by the strong coupling regime (taking $J_2 \to \infty$ in Fig. \ref{fig:phase-diag}), and prove mathematically that the possibility of a \textit{floating phase}\footnote{The term denotes the existence of a temperature region, $T_{\text{TRSB}}<T<T_c$, in which the multi-component order parameters are individually nonzero, but not yet coupled together to have a fixed phase difference. Hence, each component is ``floating" independently.} is excluded on any lattice structure\footnote{Trivial cases where the transition temperature is $=0,\infty$ are also included} so that $T_\text{TRSB} \ge T_c$.
More specifically, we prove that the correlation functions of $U(1)$ XY spins is bounded above by that of $\dZ_2$ Ising spins for all temperatures and lattice structures (e.g., $\dZ^d$ for all $d$).
In particular, this guarantees the existence of high-$T_c$ superconductivity with TRSB and possibly topological superconductivity\footnote{Even though TRSB is necessary and in general related to topological superconductivity, it is unfortunately, not a sufficient condition. For example, the $d+ig$ state \cite{kivelson2020proposal} breaks TRS but has nodal points along the diagonals.} for a large class of Hamiltonians.
Interestingly, at the boundary $\partial \sO_\infty$ where the system is believed to have a single phase transition $T_\text{TRSB} = T_c$, the correlation inequality implies if the $U(1)$ SC transition has a diverging correlation length at the conjectured $T_\text{TRSB} = T_c$, then so must the $\dZ_2$ TRSB transition.
In a 2D system where the SC transition is believed to be BKT, this implies that the conjectured single phase transition cannot be first order.


The paper is organized as follows. Sec. \eqref{sec:model} introduces the classical statistical model under investigation as well as its motivation.
We also provide a quick sketch of subtle mathematical regularizations necessary to define the thermodynamic limit on a general lattice structure. 
Sec. \eqref{sec:cluster} introduces the \textit{(random) cluster representation}, which acts as an exact dual graphical representation closely related to the Wolff algorithm.
Sec. \eqref{sec:critical} studies the class of Hamiltonian at the critical regime $\ell_\infty$ and uses the cluster representation to prove the main statement, i.e., Theorem \eqref{thm:cluster}.
It should be noted that within this subclass of Hamiltonians, the correlation inequality can be further strengthened using an independent graphical representation, known as the \textit{(random) current representation}. 
In fact, the current representation provides important insight into the underlying reason behind why only a single phase transition is observed at the critical regime\footnote{Moreover, if we make the further arbitrary simplification of replacing the $U(1)$ XY spins with $\dZ_2$ or $\dZ_4$ clock spins, it can be proven rigorously that there exists only a single phase transition. See Appendix \eqref{app:simple} for details} $\ell_\infty$. 
However, due to its complexity and limited use for the general parameter regime, we leave the proof in Appendix \eqref{app:current}.
Sec. \eqref{sec:general} then generalizes the proof to the full class of Hamiltonians $\bar{\sO}_\infty$, i.e., Theorem \eqref{thm:cluster-general} (see also Fig. \ref{fig:nofloat}).

\subsection{A Digression to the generalized XY Model}

It is worth mentioning that the techniques developed in this paper can also be used to understand an entirely different class of $U(1)\times \dZ_2$ Hamiltonian, commonly known as the generalized XY model \cite{lee1985strings,korshunov1986phase}, i.e.,
\begin{equation}
    \label{eq:general-XY}
    H(\theta) =-\sum_{e=ij} \left[ \Delta \cos (\nabla_e \theta) +(1-\Delta) \cos (2\nabla_e \theta)\right] 
\end{equation}
Where the summation is over nearest-neighbors (edges $e=ij$), $0\le \Delta \le 1$ and $\nabla_e \theta = \theta_i-\theta_j$.
In fact, the proof nearly follows directly from Ref. \cite{dubedat2022random} and thus will only be mentioned as a digression.

On a 2D square lattice $\dZ^2$, the model has been studied extensively \cite{song2021hybrid,carpenter1989phase,nui2018correlation,hubscher2013stiffness}, all of which reaching a consistent phase diagram shown in Fig. \ref{fig:general-XY}.
From numerics \cite{song2021hybrid}, it is expected that there exists a critical $\Delta_c \approx 0.33$ such that if $\Delta \ge \Delta_c$, then only a single phase transition occurs $T_\text{nem} = T_\text{ferro}$, while for $\Delta <\Delta_c$, a split transition occurs $T_\text{ferro} < T_\text{nem}$, i.e., an analogue of a floating phase\footnote{Since there is only one component in this system, the term ``floating" is less meaningful.}.
Extension of the model to 3D has also been tentatively explored \cite{gao2022fractional}.

Though we lack comments on the split transition regime, we can show that if $4/5 \le \Delta \le 1$\footnote{This is not surprising, since the Hamiltonian does not become metastable (local min) at $\nabla_e \theta=\pi$ until $\Delta < 4/5$.}, then a similar correlation inequality can be proven and thus implying that $T_\text{ferro} \ge T_\text{nem}$ on any lattice structure.
Since in this regime, the system is expected to have a single phase transition $T_\text{ferro} = T_\text{nem}$ on $\dZ^2$, our previous argument regarding diverging correlation lengths also holds and thus the transition cannot be of first order, consistent with numerics \cite{song2021hybrid}.
We postpone the exact statement to Sec. \eqref{sec:digression}, after introducing the cluster representation in Sec. \eqref{sec:cluster}.

\begin{figure}[t]
\includegraphics[width=1\columnwidth]{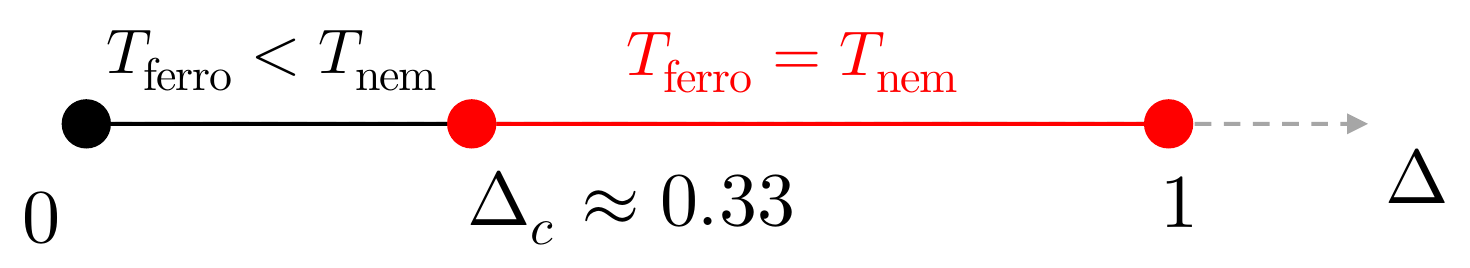}
\caption{Phase diagram of 2D generalized XY model (Eq. \eqref{eq:general-XY}) \cite{song2021hybrid,carpenter1989phase,nui2018correlation,hubscher2013stiffness}. We write $T_\text{ferro},T_\text{nem}$ instead of $T_\text{TRSB},T_c$ since physically, the $\dZ_2$ transition is into a ferromagnetic state instead of TRSB, while the $U(1)$ transition is into a nematic state \cite{carpenter1989phase}.
}
\label{fig:general-XY}
\end{figure}
\section{Model}
\label{sec:model}

For the majority of this paper, we shall consider the following classical statistical $U(1)\times \dZ_2$ Hamiltonian (critical regime $\ell_\infty$ in Fig. \ref{fig:phase-diag})
\begin{equation}
    \label{eq:U1-Z2}
    H(\sigma,\tau) = -\sum_{e =ij} \kappa_e (1 +\tau_i\tau_j) \cos \nabla_e \theta
\end{equation}
Where the summation is over all nearest-neighbors (edges $e=ij$) of an arbitrary graph, $\kappa_e > 0$ denote ferromagnetic coupling constants on each edge $e$, and $\nabla_e \theta = \theta_i -\theta_j$.
Note $\tau_i, \theta_i$ denote the $\dZ_2$ Ising spins and $U(1)$ XY spins corresponding to TRSB and SC, respectively.
For the general parameter regime $\bar{\sO}_\infty$ (see Fig. \ref{fig:phase-diag}), the Hamiltonian is further modified so that
\begin{equation}
    \label{eq:U1-Z2-general}
    H_{\lambda,\alpha}(\sigma,\tau) = H -\sum_{e =ij} \left[\lambda_e \tau_i \tau_j +\alpha_e \cos (2\nabla_e \theta)\right]
\end{equation}
Where $\lambda_e \ge \alpha_e \ge 0$ for all edges $e$\footnote{
Heuristically, $\lambda$ and $\alpha$ are correlated to $T_\text{TRSB}$ and $T_c$, respectively. 
Since the critical Hamiltonian in Eq. \eqref{eq:U1-Z2} is conjectured to have a single phase transition, it is expected that if $\lambda \ge \alpha$, then $T_\text{TRSB} \ge T_c$. 
One of the main results of the paper is to make this statement exact.}.. 


\subsection{Motivation: The Strong Coupling Limit}

Let us first consider the critical subclass $\ell_\infty$ of Hamiltonians in Eq. \eqref{eq:U1-Z2}.
When $\kappa_e=1$ is constant, the critical subclass $\ell_\infty$ describes the strong coupling regime of numerous physical motivations. 
For example, the derivation for the case of a frustrated $(n\ge 3)$-band superconductor can be found in Ref. \cite{bojesen2014phase}; our critical Hamiltonian corresponds to Eq. (8) of Ref. \cite{bojesen2014phase} at the critical point of $(K_1,K_2)=(1,0)$.
Alternatively, in the scenario of TRSB induced in twisted bilayer systems \cite{can2021high,zhao2023time}, one can model the 2D layers via identical standard XY models (so that $\kappa=1$ is the in-plane superfluid stiffness), coupled by a 2\ts{nd} order Josephson coupling $J_2 > 0$, i.e.,
\begin{equation}
    \label{eq:H-J2}
    H_{J_2} (\phi^\pm) = \sum_{s = \pm} H^\text{XY}(\phi^s ) +J_2 \sum_{i} \cos 2\phi_i
\end{equation}
Where $\phi^\pm$ denotes the phases of the XY spins in each layer and $\phi \equiv \phi^+-\phi^-$ denotes the phase difference across the junction.
In this model, it is clear that for any $J_2 >0$, the Hamiltonian is minimized when the phase difference $\phi$ chooses between $\pm \pi/2$ and thus the $\dZ_2$ Ising order corresponds to TRSB $T_\text{TRSB}$.
Moreover, if we take the strong coupling limit $J_2 \to \infty$ \cite{interchangeLimits}, the phase difference $\phi$ is forced to choose between $\pm \pi/2$ at each lattice site $i$ and thus induces an explicit $\dZ_2$ Ising order, i.e., $\phi_i  = \tau_i \times \pi/2$ where $\tau_i=\pm 1$.
On the other hand, the average phase $\theta = (\phi^+ +\phi^-)/2$ is unrestricted and thus induces a $U(1)$ XY spin $\sigma_i = e^{i\theta_i}$ corresponding to the superconducting transition $T_c$.
By change of variables\footnote{A short proof of why $\phi^\pm \to \phi,\theta$ is well-regulated can be found in the Lemma \eqref{app-lem:var-map} in the Appendix.}, the Hamiltonian in Eq. \eqref{eq:H-J2} reduces to Eq. \eqref{eq:U1-Z2} in the strong coupling limit $J_2 \to \infty$.

One may wonder why we should study the strong coupling limit $\ell_\infty$, since it ``appears" to be a singular point.
Apart from simplifying the problem, one reason is that the strong coupling limit is not singular; rather, it is continuously connected to large but finite coupling strengths\footnote{And thus warrants the notation $J_2 \to \infty$ instead of $J_2 = \infty$.} in the sense of correlation functions as argued previously \cite{interchangeLimits}.
Moreover, the authors of Ref. \cite{bojesen2014phase} numerically studied the strong coupling limit on the 3D cubic lattice and showed that the phase diagram does not change qualitatively compared to finite coupling.
Similar numerical results \cite{song2022phase,liu2023charge} were also performed in 2D, in which it was shown that the phase diagram is insensitive to the actual values of the coupling strength $J_2$ (though they did not investigate the strong coupling limit).
In higher dimension $d\ge 4$, mean-field theory predicts a similar behavior \cite{yuan2023inhomogeneity,can2021high}, and thus we believe that studying the strong coupling limit can reflect the properties of the finite coupling model, though admittedly, this is not determined definitively.

Let us now consider the general class $\bar{\sO}_\infty$ of Hamiltonians in Eq. \eqref{eq:U1-Z2-general}. 
As described in Eq. (15) of Ref. \cite{liu2023charge}, the general Hamiltonian of a twisted bilayer system can be written as
\begin{align}
    \label{eq:H-J2-general}
    H_{J_2,\lambda,\alpha} &= H_{J_2} -\lambda \sum_{e=ij} \cos (\nabla_e \phi^+ -\nabla_e \phi^-) \nonumber\\
    &\quad\quad\quad -\alpha \sum_{e=ij} \cos (\nabla_e \phi^+ + \nabla_e\phi^-)
\end{align}
Where $\phi,\theta$ are the phase difference and average phase of each layer $\phi^\pm$ as before and $H_{J_2}$ is defined in Eq. \eqref{eq:H-J2}.
Note that by expanding the $\cos$ terms up to quadratic order, the Hamiltonian obtains a current-current interaction $-\nu \nabla_e \phi^+ \cdot \nabla_e \phi^-$ where $\nu = \lambda-\alpha$ \cite{diag} (in comparison with Eq. (1) of Ref. \cite{maccari2022effects}).
We then repeat the previous construction by taking $J_2 \to \infty$ so that Eq. \eqref{eq:H-J2-general} reduces to Eq. \eqref{eq:U1-Z2-general}.

\subsection{Regularization and the Thermodynamic Limit}

We note that the Hamiltonians in Eq. \eqref{eq:U1-Z2}, \eqref{eq:U1-Z2-general} are only well-defined on a finite graph $G=(V,E)$ where $V$ denotes the finite number of lattice sites and $E$ denotes the possible edges.
To take the thermodynamic limit, the convention is to 
\begin{enumerate}[(a)]
    \item Fix an infinite graph $G_\infty = (V_\infty, E_\infty)$, e.g., the standard $\dZ^d$, with edge couplings $\kappa: E_\infty \to [0,\infty)$;
    \item Study systems on finite subgraphs $G = (V,E)$ with thermal averages $\bra \cdots \ket_G$, e.g., finite boxes $\Lambda_L =\{-L,...,L \}^d \subseteq \dZ^d$;
    \item And take the limit as $G\nearrow G_\infty$ exhausts the infinite graph (e.g., $L\to \infty$), where the existence of the limit also needs to be proved\footnote{Notice that this usually corresponds to free/open boundary conditions.}.
\end{enumerate}
In the case of the $U(1)\times \dZ_2$ Hamiltonians in Eq. \eqref{eq:U1-Z2}, \eqref{eq:U1-Z2-general}, the system is fully ferromagnetic ($\kappa_e,\lambda_e,\alpha_e \ge 0$ for all edges $e$) and thus Ginibre's inequality can be applied \cite{ginibre1970general}, so that correlation functions
\begin{equation}
    \bra \tau_0 \tau_R \ket_G, \quad \bra \cos 2(\theta_0-\theta_R)\ket_G
\end{equation}
Are monotonically increasing as the subgraph $G$ increases, and thus their thermodynamic limits $G\nearrow G_\infty$ are well-defined. 

Note that, in this paper, we shall study the high-order $U(1)$ correlations $\bra \cos 2(\theta_0-\theta_R)\ket$ rather than the conventional correlations $\bra \cos (\theta_0 -\theta_R)\ket$.
Apart from our inability to produce useful inequalities regarding the latter correlations, there is a physically relevant reason towards this perspective\footnote{Note that the Hamiltonian in Eq. \eqref{eq:U1-Z2} and \eqref{eq:U1-Z2-general} is well-defined, independent of its physical motivations, and thus should also warrant the study of the conventional correlations $\bra \cos (\theta_0 -\theta_R)\ket$. Hence, the argument provided should be regarded as heuristics.}.
Indeed, $\theta$ is physically related to the average phase, i.e., $\theta = (\phi^+ +\phi^-)/2$.
However, note that $\cos \theta$ is not invariant under rotation $\phi^+ \mapsto \phi^+ +2\pi$, and thus, in this sense, the $2\pi$-invariant higher order term $\cos (2\theta)$ is the physically relevant term to consider.

\section{Random Cluster Representation}
\label{sec:cluster}

The cluster representation dates back to Fortuin and Kasteleyn \cite{fortuin1972random}, where it was originally developed to study the Ising and more generally, the $q$-state Potts model.
The correspondence between phase transitions in spin models and percolation in the corresponding cluster representation has proven useful in numerous occasions in obtaining rigorous results \cite{duminil2017lectures,duminil2017continuity,duminil2016discontinuity,pfister1997random,aoun2023phase}.
Numerically, the correspondence has allowed the closely related Wolff algorithm to take advantage of non-local cluster spin updates, in contrast to local spin flips as in the Metropolis-Hasting algorithm \cite{binder2022monte}. 
In some sense, the Wolff algorithm provides an alternative (and possibly more familiar) perspective in constructing the cluster representation and thus we will provide a short review using the example of the standard Ising model.

\subsection{Short Review of the Wolff Algorithm}
\label{sec:cluster-Is}
\begin{figure}[ht]
\subfloat[\label{fig:cluster-spin}]{%
  \centering
  \includegraphics[width=0.4\columnwidth]{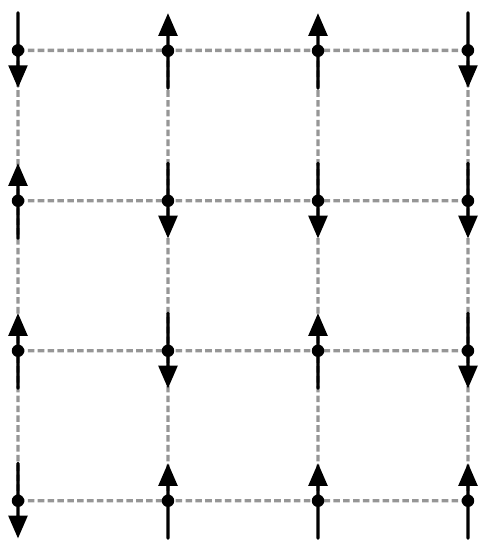}
}
\hspace{0.1\columnwidth}
\subfloat[\label{fig:cluster-spingraph}]{%
  \centering
  \includegraphics[width=0.4\columnwidth]{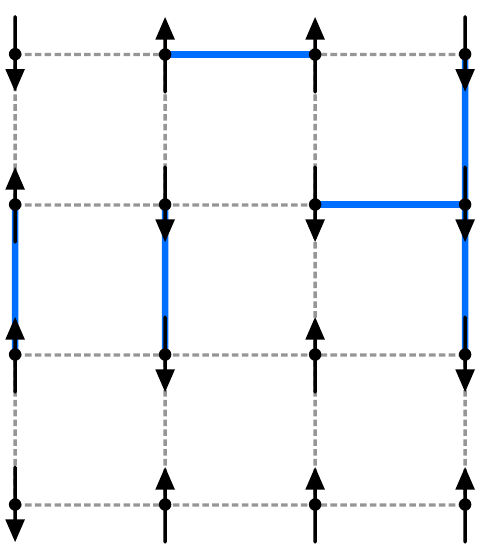}
}
\\
\subfloat[\label{fig:cluster-graph}]{%
  \centering
  \includegraphics[width=0.4\columnwidth]{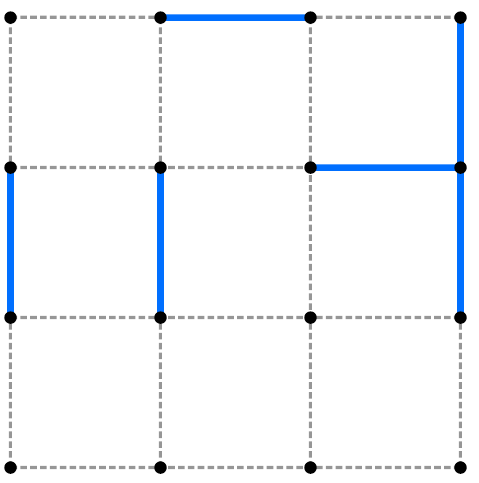}
}
\caption{Cluster Representation on $\dZ^2$. 
(a). Some Ising spin configuration chosen from the Ising Gibbs state. 
(b). A constructed subgraph $\omega$ based on the Ising spin configuration in Fig. \ref{fig:cluster-spin}. 
Note that not all ordered edges are included in $\omega$, since an edge $e$ is included with some probability. However, all disordered edges are not included (based on the edge probability defined in Eq. \eqref{eq:cluster-Is-edge}). (c). The subgraph after integrating over all possible spin configurations.}
\label{fig:cluster}
\end{figure}
The standard Ising model on a finite graph $G$, 
\begin{equation}
    H^\text{Is}(\tau) = -\sum_{e=ij} \tau_i\tau_j
\end{equation}
Defines a probability distribution over Ising spin configurations, i.e.,
\begin{equation}
    \label{eq:cluster-Is-spinprob}
    \dP^\text{Is} [ \tau] \propto \prod_{e=ij} w^\text{Is}_e(\tau), \quad w_e^\text{Is}(\tau) = e^{\beta \tau_i \tau_j}
\end{equation}
Where the product is over all edges with a weight $w_e^\text{Is}$ that depends on the spin configuration $\tau$ implicitly through its values on the endpoints $i,j$ of edge $e$.

Given a fixed spin configuration $\tau$ (see Fig. \ref{fig:cluster-spin}), the Wolff algorithm then generates a subgraph $\omega: E \to \{0,1\}$ in the following manner\footnote{In practice, the Wolff algorithm starts from a random lattice site, and constructs a cluster by adding edges inductively until the cluster stops growing. 
However, the following process is equivalent since each edge is included/excluded independently of other edges (conditionally independent with respect to a fixed spin configuration $\tau$)}. 
Let
\begin{equation}
    \label{eq:cluster-Is-edge}
    p_e^\text{Is} (\tau) = \left[ 1-\frac{w_e^\text{Is}(\tau^i)}{w_e^\text{Is}(\tau)}\right] 1\{\tau_i \tau_j = +1\}
\end{equation}
Where $\tau^i$ is the spin configuration derived from $\tau$ by flipping $\tau_i \mapsto -\tau_i$ (where $i$ is an endpoint of edge $e$) and keeping all other spins the same.
As shown in Fig. \ref{fig:cluster-spingraph}, each edge $e$ is included in the subgraph $\omega$, i.e., $\omega_e = 1$ with probability $p_e^\text{Is}(\tau)$ (otherwise, set $\omega_e =0$).
Note that the procedure for each edge $e$ is independent of each other (conditionally independent with respect to the fixed spin configuration $\tau$) and thus defines a probability distribution\footnote{
This is often referred to as the \textit{Edwards-Sokal} coupling \cite{duminil2017lectures}. 
Note that we are slightly abusing notation since $\dP^\text{Is}$ was originally defined on over spin configurations; now, it is used for $(\omega,\tau)$.
} on subgraphs $\omega$ and spin configurations $\tau$, i.e.,
\begin{align}
    \dP^\text{Is}[\omega|\tau] &= \prod_{e\in \omega} p_e^\text{Is}(\tau) \prod_{e\notin \omega} (1-p_e^\text{Is}(\tau)) \\
    \dP^\text{Is}[(\omega,\tau)] &= \dP^\text{Is}[\omega|\tau] \dP^\text{Is}[\tau]
\end{align}
By integrating over all spin configurations $\tau$, one then obtains the cluster representation, i.e., a probability distribution over subgraphs of the system (see Fig. \ref{fig:cluster-graph})
\begin{equation}
    \dP^\text{Is}[\omega] = \sum_{\tau}\dP^\text{Is}[(\omega,\tau)]
\end{equation}

In language of the Wolff algorithm and Monte Carlo \cite{binder2022monte}, the edge probabilities $p_e^\text{Is}(\tau)$ are chosen to satisfy detailed balance.
In language of probability theory, the edge probabilities are chosen so that the operation of flipping all spins in a cluster keeps the probability distribution invariant \cite{dubedat2022random,duminil2017lectures}, i.e.,

\begin{equation}
    \dP^\text{Is}[(\omega,\tau)]=\dP^\text{Is}[(\omega,\tau^{C(\omega)})]
\end{equation}
Where $C(\omega)$ is some cluster in $\omega$ that has been chosen to be flipped\footnote{To be concrete, one can define $C_v(\omega)$ as the unique cluster in $\omega$ intersecting an arbitrarily chosen site $v\in V$} and $\tau^{C(\omega)}$ is obtained from the original spin configuration $\tau$ via flipping all spins in $C(\omega)$, i.e., $\tau_i \mapsto -\tau_i$ for all $i\in C(\omega)$, and keep all remaining spins the same.

The property of cluster-flip invariance allows us to further prove that the correlation functions are in 1-1 correspondence with percolation events in the cluster representation \cite{duminil2017lectures,dubedat2022random}, i.e.,
\begin{equation}
    \label{eq:cluster-Is}
    \bra \tau_0\tau_R\ket^{\text{Is}}_{G,\beta} = \dP_{G,\beta}^\text{Is}[0\lr_\omega R]
\end{equation}
where $\{0\lr_\omega R\}$ is the event of all subgraphs $\omega$ which connect lattice sites $0, R$, and we have added the subscripts to denote dependence of inverse temperature $\beta$ and graph $G$.
Indeed, the argument is quite straightforward. Note that
\begin{align}
    \label{eq:correspondence-Is-explain-1}
    \bra \tau_0\tau_R\ket^{\text{Is}} = \dE^\text{Is} [\tau_0\tau_R (1\{0\lr_{\omega} R\}+1\{0\not\lr_{\omega} R\})]
\end{align}
If $\{0 \lr_\omega R\}$, then by the definition of the edge probability $p_e^\text{Is}$ (if an edge is included, the edge is ordered), we see that $\tau_0\tau_R =1$.
Conversely, if $\{0 \not\lr_\omega R\}$, then we can flip the spins $\tau$ in the cluster of $\omega$ containing lattice site 0, so that $\tau_0\mapsto -\tau_0$ but $\tau_R \mapsto \tau_R$. Since this operation leaves the probability invariant, we see that
\begin{equation}
    \label{eq:correspondence-Is-explain-2}
    \dE^\text{Is} [\tau_0\tau_R1\{0\not\lr_{\omega} R\}] =-\dE^\text{Is} [\tau_0\tau_R1\{0\not\lr_{\omega} R\}]=0
\end{equation}
The correlation-percolation correspondence is thus established.
\section{A Digression to the generalized XY Model}
\label{sec:digression}

In this section, let us digress from the main model defined in Eq. \eqref{eq:U1-Z2}, \eqref{eq:U1-Z2-general}, and consider the generalized XY model \cite{lee1985strings,korshunov1986phase} on an arbitrary lattice.
As discussed in Ref. \cite{song2021hybrid}, the correlation functions $\bra \cos(\theta_0-\theta_R)\ket,\bra \cos 2(\theta_0-\theta_R)\ket$ determine the phase transitions $T_\text{ferro}, T_\text{nem}$, respectively.
Hence, it is essential to extend the correlation-percolation correspondence discussed in Sec. \eqref{sec:cluster-Is} to the generalized XY model.

Indeed, the cluster representation was generalized to the standard XY model (and more generally to $O(n)$ models) \cite{chayes1998discontinuity} roughly two decades ago. 
By noticing that the sign $\xi$ of the $x$-component $\cos \theta $ of the XY spins can be used as a ``substitute" of the Ising spin in the cluster representation, the author established the correspondence between conventional correlations $\bra \cos(\theta_0 -\theta_R)\ket^\text{XY}$ and percolation (analogous to Eq. \eqref{eq:cluster-Is})\footnote{Since $\bra \cos(\theta_0 -\theta_R)\ket^\text{XY} = 2 \bra \cos \theta_0 \cos \theta_R\ket^\text{XY}$, the spin-spin correlations are schematically similar to $\sim \bra \xi_0 \xi_R \ket^\text{XY}$.}.
However, despite the straightforward generalization of the Wolff algorithm to the standard XY model, it wasn't until recently \cite{dubedat2022random} was there significant progress on establishing a similar correspondence for the higher-order correlation\footnote{
In fact, the authors were only able to extend the correspondence to $k=2$ in $\bra \cos k(\theta_0 -\theta_R) \ket^\text{XY}$. 
We refer to reader to their paper \cite{dubedat2022random} for their reasoning why higher order terms are more difficult. 
Alternatively, we provide the following argument. Note that $\bra \cos 2(\theta_0 -\theta_R) \ket^\text{XY} = 8\bra \sin \theta_0 \cos \theta_0 \sin \theta_R \cos \theta_R \ket^\text{XY} \sim \bra \xi_0 \eta_0 \xi_R \eta_R\ket^\text{XY}$ where $\xi,\eta=\pm 1$ are the signs of the $x,y$ components $\cos \theta, \sin \theta$ of the XY spins and can be treated as independent Ising spins. 
For higher order terms, there are not enough independent Ising spins that can derived from the original XY spins.
} $\bra \cos 2(\theta_0 -\theta_R) \ket^\text{XY}$.

The philosophy and techniques developed for the standard Ising and XY models can thus be straightforwardly extended to the generalized XY model.
Following Ref. \cite{dubedat2022random}, we can prove the following
\begin{theorem}
    \label{thm:general-XY}
    Consider the following generalized XY model on any finite graph $G$ with $4/5 \le \Delta_e < 1$ for all edges $e$
    \begin{equation}
        H(\theta) =-\sum_{e=ij} \left[ \Delta_e \cos (\nabla_e \theta) +(1-\Delta_e ) \cos (2\nabla_e \theta)\right] 
    \end{equation}
    Then for any temperature $\beta$ and lattice sites $0,R$ in $G$, there exists a constant $C>0$ depending only on $\beta$ and the number of edges adjacent to lattice sites $0,R$, respectively, such that
    \begin{equation}
        \bra \cos 2(\theta_0 -\theta_R) \ket_{G,\beta} \le C \bra \cos (\theta_0 -\theta_R) \ket_{G,\beta} 
    \end{equation}
\end{theorem}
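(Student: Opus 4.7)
The plan is to adapt the random cluster representation of Ref.~\cite{dubedat2022random} from the standard XY model to the generalized XY model of Eq.~\eqref{eq:general-XY}. Following the heuristic of Sec.~\eqref{sec:digression}, one exploits the two auxiliary Ising-like sign spins $\xi_i = \operatorname{sign}(\cos\theta_i)$ and $\eta_i = \operatorname{sign}(\sin\theta_i)$, which are acted on by the $\dZ_2 \times \dZ_2$ group generated by the reflections $\theta_i \mapsto \pi - \theta_i$ (flipping $\xi_i$) and $\theta_i \mapsto -\theta_i$ (flipping $\eta_i$). The target is to construct two coupled random-cluster processes $\omega^\xi, \omega^\eta$ so that $\bra \cos(\theta_0-\theta_R) \ket$ corresponds, up to angular moduli, to the single connection event $\{0 \lr_{\omega^\xi} R\}$, while $\bra \cos 2(\theta_0-\theta_R) \ket = 8\bra \xi_0\eta_0\xi_R\eta_R\,|\sin\theta_0 \cos\theta_0 \sin\theta_R \cos\theta_R|\ket$ is captured by the joint event $\{0 \lr_{\omega^\xi} R\} \cap \{0 \lr_{\omega^\eta} R\}$. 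The desired inequality will then follow by bounding the joint connection probability in terms of the single one, up to a local factor depending only on $\beta$ and the degrees of $0$ and $R$.

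The first step is to verify that the edge inclusion probabilities in the construction of Ref.~\cite{dubedat2022random} remain in $[0,1]$ when the second-harmonic term $(1-\Delta_e)\cos(2\nabla_e\theta)$ is added. The single-edge energy $-\Delta_e\cos(\nabla_e\theta) - (1-\Delta_e)\cos(2\nabla_e\theta)$ has second derivative $4-5\Delta_e$ at $\nabla_e\theta=\pi$, so the threshold $\Delta_e \ge 4/5$ is exactly what prevents the appearance of a secondary local minimum at $\pi$ that would spoil the monotonicity of the relevant weight ratios. Granted this positivity, cluster-flip invariance of the joint Edwards--Sokal measure is preserved because $\cos(2(\pi-\theta)) = \cos(2\theta) = \cos(-2\theta)$, so the second-harmonic term contributes a factor unchanged by either the $\xi$- or the $\eta$-flip and the full edge weight transforms as in the standard XY case treated in Ref.~\cite{dubedat2022random}. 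The correspondence with percolation then follows by repeating the argument of Sec.~\eqref{sec:cluster-Is}, Eqs.~\eqref{eq:correspondence-Is-explain-1}--\eqref{eq:correspondence-Is-explain-2}, applied separately to each cluster family, yielding an identity of the heuristic form
\begin{equation*}
    \bra \cos(\theta_0-\theta_R)\ket \;\sim\; \dE\!\left[|\cos\theta_0\cos\theta_R|\,1\{0\lr_{\omega^\xi}R\}\right]
\end{equation*}
and an analogous one for the second-order correlation with a joint connection indicator and angular moduli $|\sin\theta_0\cos\theta_0 \sin\theta_R\cos\theta_R|$.

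To conclude, the idea is a local union bound: any path from $0$ to $R$ in $\omega^\eta$ must use at least one edge incident to $0$ and one incident to $R$, so conditioning on such a pair of edges and using the $\dZ_2\times\dZ_2$ symmetry of the Edwards--Sokal coupling decouples the $\eta$-contribution into an $\eta$-independent factor bounded by a constant $C(\beta,\deg(0),\deg(R))$; the remaining factor is at most $\bra |\cos\theta_0\cos\theta_R|\,1\{0\lr_{\omega^\xi}R\}\ket$ after the trivial estimate $|\sin\theta| \le 1$. The main obstacle will be performing this decoupling rigorously, because the two cluster families $\omega^\xi,\omega^\eta$ are \emph{not} conditionally independent given $\theta$: the joint-to-single comparison therefore requires a careful analysis of how the two Edwards--Sokal couplings interact in the generalized model. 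This is already the delicate step in Ref.~\cite{dubedat2022random} for the standard XY case, and I expect it to extend with only bookkeeping modifications provided the $4/5$ threshold is enforced uniformly on each edge of the lattice.
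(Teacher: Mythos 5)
Your setup matches the paper's: the cluster representation of Ref.~\cite{dubedat2022random} adapted via the auxiliary sign spins $\xi,\eta$, and the observation that $4/5\le\Delta_e<1$ ensures the effective edge weight $\rho_e(\cos\nabla_e\theta)$ is convex and strictly increasing on $[-1,1]$ (you correctly diagnose this as preventing a metastable minimum at $\nabla_e\theta=\pi$). Where you go off track is the concluding step and, in particular, where the constant $C$ comes from.

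You propose to bound the joint connection event $\{0\lr_{\omega^\xi}R\}\cap\{0\lr_{\omega^\eta}R\}$ in terms of the single event via a local union bound and a $\dZ_2\times\dZ_2$-symmetry ``decoupling'' of the $\eta$-contribution, and you flag the lack of conditional independence of $\omega^\xi,\omega^\eta$ as the delicate point. This is both unnecessary and the wrong place to extract the constant. In the construction the two subgraphs $\hat\xi,\hat\eta$ live on a single probability space, and the relevant joint event is connection in the \emph{intersection} graph $\hat\xi\hat\eta$. Since $\hat\xi\hat\eta\subseteq\hat\xi$ edge-by-edge, the inclusion $\{0\lr_{\hat\xi\hat\eta}R\}\subseteq\{0\lr_{\hat\xi}R\}$ is deterministic and
\begin{equation*}
    \dP\!\left[0\lr_{\hat\xi\hat\eta}R\right]\le\dP\!\left[0\lr_{\hat\xi}R\right]
\end{equation*}
with no comparison constant, no conditioning on boundary edges, and no issue of dependence. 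The constant $C(\beta,\deg 0,\deg R)$ instead enters on the \emph{other} side of the sandwich, in the lower bound
\begin{equation*}
    C\,\dP\!\left[0\lr_{\hat\xi}R\right]\le\bra\cos(\theta_0-\theta_R)\ket,
\end{equation*}
which is needed because the Edwards--Sokal identity expresses the spin correlation with angular moduli $|\cos\theta_0\cos\theta_R|$ that cannot be bounded below pointwise; one must condition on a good angular sector near $0$ and $R$, and that is precisely what produces a factor depending on $\beta$ and the degrees (cf.\ the lower-bound argument of Theorem~\ref{app-thm:corr-U1}). Combined with the easy upper bound $\bra\cos2(\theta_0-\theta_R)\ket\le 2\,\dP[0\lr_{\hat\xi\hat\eta}R]$, the chain closes without any decoupling of $\omega^\eta$. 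So the obstacle you anticipated does not arise, but only because you should not be taking that route: the percolation-event comparison is trivial, and the hard (and degree-local) content sits entirely in the lower bound for the first-harmonic correlation.
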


We note that if the standard correlations $\bra \cos (\theta_0-\theta_R)\ket$ spins are disordered (exponentially decaying with respect to $R$) at some temperature $T$, then so must the higher order correlations $\bra \cos 2(\theta_0-\theta_R)\ket$.
Hence, Theorem \eqref{thm:general-XY} implies that $T_\text{ferro} \ge T_\text{nem}$ provided that $\Delta_e \ge 4/5$ for all edges $e$. 
\begin{proof}[Sketch of Proof]
    The Hamiltonian defines a probability distribution over spin configurations, i.e.,
    \begin{align}
        \dP[\theta] &\propto \prod_{e=ij} w_e(\theta) \\
        w_e(\theta) &\equiv  \rho_e(\cos\nabla_e \theta)\\
        &=\exp \left[ 2\beta (1-\Delta_e) \left[ \cos \nabla_e \theta +\frac{\Delta_e}{4(1-\Delta_e)}\right]^2\right] \nonumber
    \end{align}
    Since $4/5\le \Delta_e <1$, the map $x\mapsto \rho_e (x)$ is convex and strictly increasing, and thus satisfies the sufficient conditions \cite{dubedat2022random} to establish the correlation-percolation correspondence with respect to the higher order correlations $\bra \cos 2(\theta_0 -\theta_R)\ket$.
    In particular, we can define a probability distribution $\dP$ over subgraphs $\hat{\xi},\hat{\eta}:E\to \{0,1\}$ such that
    \begin{align}
        \bra \cos 2(\theta_0 -\theta_R) \ket &\le 2 \dP[0\lr_{\hat{\xi}\hat{\eta}} R] \\
        C \dP[0\lr_{\hat{\xi}} R] &\le \bra \cos (\theta_0-\theta_R)\ket
    \end{align}
    Where $C$ depends only on $\beta$ and the number of edges adjacent to lattice sites $0,R$, respectively, and $\hat{\xi}\hat{\eta}$ is the intersection of the graphs $\hat{\xi},\hat{\eta}$ \cite{dubedat2022random}. 
    Since the event $\{0\lr_{\hat{\xi}\hat{\eta}} R\}$ is included in $\{0\lr_{\hat{\xi}} R\}$, it's clear that
    \begin{equation}
        \dP[0\lr_{\hat{\xi}\hat{\eta}} R] \le \dP[0\lr_{\hat{\xi}} R]
    \end{equation}
    And thus the statement follows.
\end{proof}
\section{Critical Regime $\lambda,\alpha = 0$}
\label{sec:critical}

\subsection{Correlation Inequality}

The philosophy and techniques developed for the standard Ising and XY models can thus be straightforwardly extended to the critical Hamiltonian in Eq. \eqref{eq:U1-Z2} (see Appendix \eqref{app:cluster}).
The only difficulty lies in finding a relation (inequality) between the percolation events so that
\begin{theorem}[see Appendix \eqref{app:cluster}]
    \label{thm:cluster}
    Let the critical Hamiltonian $H$ in Eq. \eqref{eq:U1-Z2} be defined on a finite graph $G$. Then for any temperature
    \begin{equation}
        \bra \cos 2(\theta_0-\theta_R) \ket_{G,\beta} \le 2 \bra \tau_0\tau_R\ket_{G,\beta}
    \end{equation}
\end{theorem}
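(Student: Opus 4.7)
The plan is to adapt the random-cluster machinery so that both correlations become percolation probabilities on a common joint probability space, and then use a monotone edge-by-edge coupling to nest one event inside the other. Conditional on the Ising configuration $\tau$, the Hamiltonian in Eq.~\eqref{eq:U1-Z2} reduces to a standard XY model on the agreement subgraph $G^+(\tau)=\{e:\tau_i=\tau_j\}$ with coupling $2\beta\kappa_e$. Applying Dub\'edat's cluster representation \cite{dubedat2022random} conditionally would yield two Bernoulli subgraphs $\hat\xi,\hat\eta\subseteq G^+(\tau)$ (conditionally independent given the spin configuration) built from the signs $\xi_i=\operatorname{sign}\cos\theta_i$ and $\eta_i=\operatorname{sign}\sin\theta_i$, with edge probabilities $p^{\hat\xi}_e=1\{\xi_i=\xi_j,\tau_i=\tau_j\}(1-e^{-4\beta\kappa_e x_ix_j})$ and the analogous formula for $\hat\eta$ (with $y_i=|\sin\theta_i|$ replacing $x_i=|\cos\theta_i|$). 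Averaging over $\tau$ then gives $\bra\cos 2(\theta_0-\theta_R)\ket\le 2\,\dP[0\lr_{\hat\xi\hat\eta} R]$ with $\hat\xi\hat\eta=\hat\xi\cap\hat\eta$.

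Next I would build an Ising-type subgraph $\omega^\tau$ on the same probability space via an Edwards--Sokal recipe with edge probability $p^{\omega^\tau}_e=1\{\tau_i=\tau_j,\cos\nabla_e\theta\ge 0\}(1-e^{-2\beta\kappa_e\cos\nabla_e\theta})$. The goal is cluster-flip invariance of the joint $(\tau,\theta,\omega^\tau)$ measure, after which the standard flipping argument yields $\dP[0\lr_{\omega^\tau} R]\le\bra\tau_0\tau_R\ket$. A per-edge check shows invariance holds automatically on edges with $\cos\nabla_e\theta\ge 0$ (the Edwards--Sokal balance $(1-p_e)w_e^{\text{agree}}=w_e^{\text{disagree}}$); for the excluded edges with $\cos\nabla_e\theta<0$ one must argue their contribution to the ``not connected'' expectation of $\tau_0\tau_R$ is non-negative, which I expect to follow from Ginibre's inequality on the joint measure or from an auxiliary Wolff-reflection cluster absorbing the antiferromagnetic bonds.

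The key step is then the edge-level coupling $\hat\xi_e\hat\eta_e\le\omega^\tau_e$, realized by coupling three independent Bernoullis on each edge. If $\xi_i\ne\xi_j$ or $\eta_i\ne\eta_j$ the left side vanishes; otherwise both signs agree, $\cos\nabla_e\theta=x_ix_j+y_iy_j\ge 0$, and one needs the pointwise bound $(1-e^{-a})(1-e^{-b})\le 1-e^{-(a+b)/2}$ with $a=4\beta\kappa_e x_ix_j$ and $b=4\beta\kappa_e y_iy_j$. Substituting $u=e^{-a/2}$, $v=e^{-b/2}\in(0,1]$ reduces this to $u^2+v^2-u^2v^2\ge uv$, which is immediate from AM--GM (since $u^2+v^2\ge 2uv$ and $2-uv\ge 1$). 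Chaining $\bra\cos 2(\theta_0-\theta_R)\ket\le 2\,\dP[0\lr_{\hat\xi\hat\eta} R]\le 2\,\dP[0\lr_{\omega^\tau} R]\le 2\bra\tau_0\tau_R\ket$ then finishes the proof. The main obstacle is the handling of the $\cos\nabla_e\theta<0$ edges in the Ising side: isolating the ferromagnetic portion of the cluster representation while still yielding the one-sided bound $\dP[0\lr_{\omega^\tau} R]\le\bra\tau_0\tau_R\ket$ (rather than the weaker $(1+\bra\tau_0\tau_R\ket)/2$ that one gets from the trivial percolation inclusion) is the essential content that Appendix~\eqref{app:cluster} must supply.
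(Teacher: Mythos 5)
Your per-edge coupling inequality is the paper's key lemma in different clothes. Write $a=4\beta\kappa_e\cos\theta_i\cos\theta_j$ and $b=4\beta\kappa_e\sin\theta_i\sin\theta_j$, both nonnegative once $\xi_i\xi_j=\eta_i\eta_j=1$; since the paper observes in Appendix~\eqref{app:cluster} that $p_e^\sigma=p_e^\xi p_e^\eta$, your bound $(1-e^{-a})(1-e^{-b})\le 1-e^{-(a+b)/2}$ is exactly the inequality $p_e^\sigma\le p_e^\tau$ of Theorem~\eqref{app-thm:cluster-relation}: expanding either side gives the same quantity $e^{-a}+e^{-b}-e^{-(a+b)/2}-e^{-(a+b)}\ge 0$, and your AM--GM substitution $u=e^{-a/2},v=e^{-b/2}$ is a cleaner route than the paper's $\cosh\ge 1$ step. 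The genuine divergence is the edge rule for the Ising-side cluster. You restrict to $\{\tau_i=\tau_j,\ \cos\nabla_e\theta\ge 0\}$, and as you correctly notice this breaks Edwards--Sokal balance on boundary edges with $\cos\nabla_e\theta<0$; the paper does \emph{not} patch this with Ginibre or an auxiliary reflection. Instead it keeps the full Wolff indicator $\{\tau_i\tau_j\cos\nabla_e\theta>0\}$, which allows an edge to open with $\tau_i\tau_j=-1$ whenever $\cos\nabla_e\theta<0$. With that rule, $(1-p_e^\tau)\,w_e=\min\{w_e(\sigma,\tau),w_e(\sigma,\tau^i)\}$ is symmetric under $\tau\mapsto\tau^i$ on \emph{every} edge, so cluster-flip invariance for $\tau$ holds outright with no auxiliary device, and the nesting $\hat\sigma\subseteq\hat\tau$ still holds because $p_e^\sigma$ already forces $\tau_i\tau_j=+1$ and hence $\cos\nabla_e\theta\ge 0$.

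That said, your instinct that the Ising correlation--percolation correspondence is where the difficulty lives is sound, merely mislocated. With the paper's full Wolff rule the $\hat\tau$-clusters are no longer $\tau$-monochromatic (an open edge may carry $\tau_i\tau_j=-1$), so on $\{0\lr_{\hat\tau}R\}$ the product $\tau_0\tau_R$ equals the path parity $\prod_e\mathrm{sign}(\cos\nabla_e\theta)$, which is not identically $+1$. The flip cancellation then yields only
\begin{equation*}
  \bra\tau_0\tau_R\ket \;=\; \dE\bigl[\tau_0\tau_R\,1\{0\lr_{\hat\tau}R\}\bigr] \;\le\; \dP[0\lr_{\hat\tau}R],
\end{equation*}
which is an inequality in the \emph{wrong} direction for the chain; the identity $\bra\tau_0\tau_R\ket=\dP[0\lr_{\hat\tau}R]$ asserted in Theorem~\eqref{app-thm:cluster-flip-Z2} does not follow from cluster-flip invariance alone (already on a single edge, $\dP[0\lr_{\hat\tau}R]$ is $O(\beta)$ while $\bra\tau_0\tau_R\ket$ is $O(\beta^2)$). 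So the ``essential content'' you ask the appendix to supply --- a one-sided bound $\dP[0\lr_{\cdot}R]\le\bra\tau_0\tau_R\ket$ --- really is the crux, and neither your restricted-edge construction nor the paper's unrestricted one closes it within the cluster framework as written: yours loses flip invariance, the paper's loses monochromaticity. The safe route, which the paper itself invokes to remove the factor of $2$, is the random-current representation of Appendix~\eqref{app:current}, where $\bra\tau_0\tau_R\ket=\dP[0\lr_{\hat n}R]$ holds exactly because the switching lemma operates on unsigned currents and the parity issue never arises.
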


It should be noted that for the subclass of Hamiltonians in Eq. \eqref{eq:U1-Z2}, the correlation inequality in Theorem \eqref{thm:cluster} can be improved slightly so that the factor of 2 can be removed, i.e.,
\begin{equation}
    \bra \cos 2(\theta_0-\theta_R) \ket_{G,\beta} \le \bra \tau_0\tau_R\ket_{G,\beta}
\end{equation}
The proof is independent of the cluster representation and relies on a distinct representation called the (random) current representation\footnote{
As discussed in the introduction, the current representation provides important insight into why only a single phase transition is observed at the critical regime $\ell_\infty$. 
See Appendix \eqref{app:current}.
}.
In any case,  we see that if the $\dZ_2$ Ising spins are disordered (exponentially decaying with respect to $R$) at some temperature $T$, then so must the $U(1)$ XY spins, and thus there cannot be a floating phase, i.e.,
\begin{equation}
    T_\text{TRSB} \ge T_c
\end{equation}
\begin{proof}[Sketch of Proof]
To be concrete, the Hamiltonian in Eq. \eqref{eq:U1-Z2} defines a probability distribution over spin configurations, i.e.,
\begin{align}
    \dP[(\sigma,\tau)] &\propto \prod_{e=ij} w_e(\sigma,\tau) \\
    w_e(\sigma,\tau) &= e^{\beta \kappa_e (1+\tau_i\tau_j) \cos(\nabla_e \theta)}
\end{align}

Given a fixed spin configuration $\sigma,\tau$, we can generate a subgraph $\hat{\tau}: E\to \{0,1\}$ using the edge probabilities (see also explicit form in Eq. \eqref{eq:cluster-edge-tau-explicit})
\begin{align}
    p_e^\tau (\sigma,\tau) = \left[ 1-\frac{w_e(\sigma,\tau^i)}{w_e(\sigma,\tau)}\right] 1\{w_e(\sigma,\tau^i) < w_e(\sigma,\tau)\}
\end{align}
We then define a probability distribution $\dP[(\hat{\tau},\sigma,\tau)]$ as discussed in Sec. \eqref{sec:cluster}, and establish the correspondence between correlations $\bra \tau_0 \tau_R\ket$ and percolation $\dP [0\lr_{\hat{\tau}} R]$, analogous to Eq. \eqref{eq:cluster-Is}.

Similarly, we can generate a subgraph $\hat{\sigma}$ using the edge probabilities \cite{dubedat2022random} (see also explicit form in Eq. \eqref{eq:cluster-edge-sigma-explicit})
\begin{align}
    \label{eq:cluster-edge-sigma}
    p_e^{\sigma}(\sigma,\tau)&=\left[1 +\frac{w_e(\sigma^{\xi\eta,i},\tau)}{w_e(\sigma,\tau)}\right.\\
    &\quad\left.- \frac{w_e(\sigma^{\xi,i},\tau)}{w_e(\sigma,\tau)}-\frac{w_e(\sigma^{\eta,i},\tau)}{w_e(\sigma,\tau)} \right]\nonumber  \\
    &\quad\times  1\{w_e(\sigma^{\xi,i},\tau), w_e(\sigma^{\eta,i},\tau) < w_e(\sigma,\tau) \}    \nonumber
\end{align}
Where $\xi,\eta =\pm 1$ denote the signs of the $x,y$ components of the XY spin $\sigma=e^{i\theta}$, and $\sigma^{\xi,i}$ denotes the spin configuration derived from $\sigma$ by flipping the spin at site $i$ along the $x$-component, i.e., $\xi_i \mapsto -\xi_i$ or $\theta_i \mapsto \pi -\theta_i$, and keeping all other sites the same. 
We similarly define $\sigma^{\eta,i}$ (via $\theta_i\mapsto -\theta_i$) and $\sigma^{\xi\eta,i}$ (via $\theta_i \mapsto \theta_i +\pi$). 
We then define a probability distribution $\dP[(\hat{\sigma},\sigma,\tau)]$ as before, and establish the correspondence between correlations $\bra \cos 2(\theta_0 -\theta_R)\ket$ and percolation $\dP [0\lr_{\hat{\tau}} R]$.

Note that in our construction, we extended the probability distribution $\dP$ over spin configurations to either that of $(\hat{\tau},\sigma,\tau)$ or $(\hat{\sigma},\sigma,\tau)$. 
However, if we wish to compare the percolation events $\dP[0\lr_{\hat{\tau}} R]$ and $\dP[0\lr_{\hat{\sigma}} R]$, it is necessary to extend the probability distribution $\dP$ to that of $(\hat{\sigma},\hat{\tau}, \sigma, \tau)$, or simply that of $(\hat{\sigma},\hat{\tau})$ after integrating over all spin $(\sigma,\tau)$ configurations\footnote
{More specifically, if we did not abuse notation and use $\dP$ for simplicity, the correlation $\bra \tau_0\tau_R\ket$ would correspond to $\dP^{\tau} [0\lr_{\hat{\tau}} R]$ where $\dP^\tau$ is the probability defined over $(\hat{\tau},\sigma,\tau)$. 
Similarly, the correlation $\bra \cos 2(\theta_0 -\theta_R)\ket$ would correspond to $\dP^\sigma [0\lr_{\hat{\sigma}}R]$ where $\dP^\sigma$ is the probability defined over $(\hat{\sigma},\sigma,\tau)$. 
The relation between $\dP^\tau, \dP^\sigma$ is not yet clear.
}.

One way that has turned out to be useful is to consider the conditional probability with respect to fixed spin configuration $\sigma, \tau$.
Since each edge $e$ is constructed independently, we shall consider a fixed edge $e=ij$ so that $\hat{\tau}_e,\hat{\sigma}_e =0,1$ correspond to Bernoulli random variables. 
To extend the conditional probability to both $(\hat{\sigma},\hat{\tau})$, we must define a correlation $c_e(\sigma,\tau)$ between the two variables so that the following probability is well-defined, i.e., the following probabilities are all $\ge 0$
\begin{align}
    \label{eq:cluster-corr-1}
    \dP[\hat{\sigma}_e=1,\hat{\tau}_e =1|\sigma,\tau] &= c_e\\
    \label{eq:cluster-corr-2}
    \dP[\hat{\sigma}_e=0,\hat{\tau}_e =1|\sigma,\tau] &= p_e^\tau - c_e\\
    \label{eq:cluster-corr-3}
    \dP[\hat{\sigma}_e=1,\hat{\tau}_e =0|\sigma,\tau] &= p_e^\sigma-c_e\\
    \label{eq:cluster-corr-4}
    \dP[\hat{\sigma}_e=0,\hat{\tau}_e =0|\sigma,\tau] &= 1-p_e^\tau -p_e^\sigma +c_e
\end{align}
If such a condition is satisfied by choosing the correlation $c_e$ appropriately, then by integrating over say, $\hat{\tau}_e = 0,1$, the conditional probability of $\hat{\sigma}_e$ (with respect to the fixed spin configuration $(\sigma,\tau)$) will be exactly what was required, i.e., $p_e^\sigma$.

Indeed, the key observation is to notice that\footnote{This is the main reason why we chose to use the higher order correlation $\bra \cos 2(\theta_0 -\theta_R) \ket$, since the edge probability corresponding to the conventional correlation $\bra \cos (\theta_0 -\theta_R)\ket$ has no simple relation with $p_e^\tau$.} $p_e^\tau \ge p_e^\sigma$ regardless of the spin $(\sigma,\tau)$ configuration (see Theorem \eqref{app-thm:cluster-relation} in Appendix \eqref{app:cluster}) and thus we can choose the correlation $c_e \equiv p_e^\sigma$ so that the previous conditions in Eq. \eqref{eq:cluster-corr-1}-\eqref{eq:cluster-corr-4} are satisfied.
In particular, Eq. \eqref{eq:cluster-corr-3} is always zero and thus within this setup, we have defined a probability distribution $\dP$ over $(\hat{\sigma},\hat{\tau},\sigma,\tau)$ such that $\hat{\sigma}$ is always (with probability $=1$) a subgraph of $\hat{\tau}$.
Therefore, if lattice sites $0,R$ are connected within $\hat{\sigma}$, it must also be connected within $\hat{\tau}$, i.e.,
\begin{equation}
    \dP[0\lr_{\hat{\sigma}} R] \le \dP[0\lr_{\hat{\tau}} R]
\end{equation}
The statement then follows (the extra factor of $2$ is due to the fact that the correlation $\bra \cos 2(\theta_0-\theta_R)\ket$ is not strictly equal to $\dP[0\lr_{\hat{\sigma}} R]$. See Appendix \eqref{app:cluster} for details.)
\end{proof}

\subsection{Order of Transition}
\label{sec:impl-order}
Another interesting question is the nature (first order or higher) of the transition, provided that the conjectured single phase transition exists, i.e., $T_\text{TRSB} = T_c$.
We shall provide the discussion for the critical regime $\ell_\infty$, though the argument extends straightforwardly to the boundary $\partial \sO_\infty$.
Indeed, from mean-field theory \cite{can2021high,yuan2023inhomogeneity}, we know that the transition is of second order for all $\ell:J_2>0$ (and not just the strong coupling limit $\ell_\infty:J_2 \to \infty$) and thus should hold true for $d\ge 4$ dimensions in $\dZ^d$.
In Ref. \cite{bojesen2014phase}, the authors numerically found that the transition is first order in $d=3$ dimensions (regardless of the coupling strength), and thus brings into question whether there is a change in behavior between $d=3,4$ dimensions.
Although we do not have a definitive answer for this question, the following discussion may prove insightful.

Consider $d=2$ dimensions.
For the standard XY model, it is usually argued that the transition is continuous (higher than first order) due to the absence of spontaneous symmetry breaking by Mermin-Wagner\footnote{More rigorously, the transition for the standard XY model can be proven to be continuous in any dimension $d$ by using the Lieb-Simon-Rivasseau inequality \cite{simon1980correlation,lieb1980refinement,rivasseau1980lieb} (see also Ref. \cite{bauerschmidt2016ferromagnetic} using the Lebowitz inequality).}.
For the critical Hamiltonian with constant $\kappa_e=1$, recent 2D numerical results suggest that the $U(1)$ correlations undergo a BKT phase transition \cite{song2022phase}, and thus one would can argue that $U(1)$ correlation length diverges as $T \searrow T_c$. 
If so, the inequalities in Theorem \eqref{thm:cluster} would imply that the $\dZ_2$ correlation length also diverges (again assuming that $T_c = T_\text{TRSB}$).
Hence, the conjectured single phase transition is presumably continuous in $d=2$ dimensions.
With that in mind, the transition is continuous in $d=2$ and $d\ge 4$ dimension, what reason could cause the transition to become discontinuous in $d=3$ dimensions?


Admittedly, the previous argument is not definitive, and one may attempt to circumvent it.
For example, Mermin-Wagner by itself does not imply that the transition is BKT or continuous.
Indeed, in Ref. \cite{van2002first,van2005first,van2006first}, the authors constructed a counter example in which the system has a continuous symmetry (thus obeying Mermin-Wagner) and yet exhibited long-range order in $d=2$ dimension, i.e., even though the magnetization is $=0$, the spin-spin correlation functions do not decay to zero.
More specifically, they proved that the counterexample exhibited a first order phase transition in $d=2$.
However, we argue the above counter example does not apply to our system: 
\begin{enumerate}[(a)]
    \item The constructed example is quite unphysical since they require to take a parameter $p \to \infty$, in which the phase transition changes from being higher order at $p=1$ to first order as $p\to \infty$. 
    In comparison, although our model corresponds to the strong coupling limit $J_2 \to \infty$, it has been consistently shown (in 2D \cite{song2022phase} and 3D \cite{maccari2022effects,bojesen2014phase}) that the transition is qualitatively independent of the coupling strength.
    \item 2D numerics \cite{song2022phase} suggest that at arbitrary finite coupling $J_2 >0$ (though they did not go to the strong coupling limit), the $U(1)$ spin-spin correlations decay algebraically below the critical temperature and thus exhibit a BKT transition. 
    This is in contrast to the counter example constructed in Ref. \cite{van2002first,van2005first,van2006first}.
\end{enumerate}
\section{General Regime $\lambda \ge \alpha \ge  0$}
\label{sec:general}

\begin{figure}[ht]
\includegraphics[width=0.8\columnwidth]{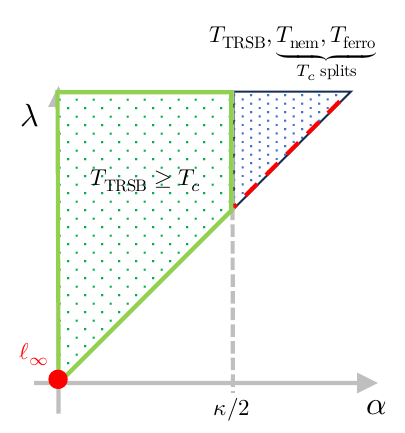}
\caption{Absence of Floating Phase. The $x,y$ axes denote the $\alpha, \lambda$ coupling described in Eq. \eqref{eq:U1-Z2-general} (where we have omitted the edge subscript).
The red dot denotes the critical Hamiltonian $\lambda=\alpha=0$ defined in Eq. \eqref{eq:U1-Z2} while the diagonal lines describes $\lambda = \alpha$.
Within the green region, there exists no floating phase on any lattice structure. 
Within the blue region, if one compares Eq. \eqref{eq:U1-Z2-general} with the generalized XY model in Eq. \eqref{eq:general-XY}, the SC transition $T_c$ may further split into a nematic $T_\text{nem}$ and $T_\text{ferro}$ transition and thus the order of transitions among $T_\text{TRSB},T_\text{nem},T_\text{ferro}$ is unclear.
}
\label{fig:nofloat}
\end{figure}

Using similar techniques as discussed previously in Sec. \eqref{sec:critical}, we can prove the following.
\begin{theorem}[see Appendix \eqref{app:cluster-general}]
    \label{thm:cluster-general}
    Consider the general Hamiltonian $H_{\lambda,\alpha}$ in Eq. \eqref{eq:U1-Z2-general} on a finite graph $G$, such that $\lambda_e \ge \alpha_e \ge 0$ and $\alpha_e/\kappa_e \le 1/2$ for all edges $e$. Then for any temperature
    \begin{equation}
        \bra \cos 2(\theta_0-\theta_R) \ket_{G,\beta} \le 2 \bra \tau_0\tau_R\ket_{G,\beta}
    \end{equation}
\end{theorem}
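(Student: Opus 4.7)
The plan is to lift the cluster-representation argument of Theorem \eqref{thm:cluster} to the enriched Hamiltonian $H_{\lambda,\alpha}$ and reduce the statement to a single pointwise inequality between edge probabilities. The new single-edge weight is
\begin{equation}
w_e(\sigma,\tau) = \exp\bigl[\beta\kappa_e(1+\tau_i\tau_j)\cos\nabla_e\theta + \beta\lambda_e\tau_i\tau_j + \beta\alpha_e\cos 2\nabla_e\theta\bigr],
\end{equation}
and the $\tau$- and $\sigma$-cluster edge probabilities $p_e^\tau(\sigma,\tau)$, $p_e^\sigma(\sigma,\tau)$ are defined by the same spin-flip formulas as in the critical case, now evaluated with this $w_e$. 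First I would verify that the two correlation-percolation correspondences survive: the extra term $\lambda_e\tau_i\tau_j$ is $\tau$-ferromagnetic and preserves the Edwards--Sokal cluster-flip invariance underlying the identity $\bra\tau_0\tau_R\ket = \dP[0\lr_{\hat{\tau}} R]$, while the extra term $\alpha_e\cos 2\nabla_e\theta$ is invariant under each of the $\sigma$-flips ($\theta\mapsto\pi-\theta$, $\theta\mapsto-\theta$, $\theta\mapsto\theta+\pi$) that enter the higher-order construction of Ref. \cite{dubedat2022random}, so the bound $\bra\cos 2(\theta_0-\theta_R)\ket \le 2\,\dP[0\lr_{\hat{\sigma}} R]$ still holds. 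The hypothesis $\alpha_e/\kappa_e\le 1/2$ enters precisely here, because it is exactly what keeps $c\mapsto w_e$ monotone in $c=\cos\nabla_e\theta\in[-1,1]$ in the $\tau_i\tau_j=+1$ sector, which is needed for $p_e^\sigma$ to be well-defined as a probability.

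The core step is to establish the analogue of Theorem \eqref{app-thm:cluster-relation}, namely the pointwise comparison $p_e^\tau(\sigma,\tau)\ge p_e^\sigma(\sigma,\tau)$ for every single-edge spin configuration. After factoring the common prefactor, the ratio $w_e(\sigma,\tau^i)/w_e(\sigma,\tau)$ picks up a multiplicative factor $e^{-2\beta\lambda_e\tau_i\tau_j}$ relative to the critical case (the $\alpha$-term is invariant under $\tau$-flips), while each of the ratios $w_e(\sigma^{\xi,i},\tau)/w_e(\sigma,\tau)$, $w_e(\sigma^{\eta,i},\tau)/w_e(\sigma,\tau)$, $w_e(\sigma^{\xi\eta,i},\tau)/w_e(\sigma,\tau)$ picks up an $\alpha$-dependent factor but no $\lambda$-dependence. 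The desired inequality thus reduces to a two-variable statement in $c_1=\cos\nabla_e\theta$ and $c_2=\cos(\theta_i+\theta_j)$ over $[-1,1]^2$, parametrized by $\tau_i\tau_j\in\{\pm 1\}$; the hypothesis $\lambda_e\ge\alpha_e$ is exactly the balance needed to dominate the new $\alpha$-contribution to $p_e^\sigma$ by the new $\lambda$-contribution to $p_e^\tau$, with the boundary case $\lambda_e=\alpha_e$ being tight in the obvious limit.

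Once the pointwise comparison is in hand, the coupling step from Sec. \eqref{sec:critical} transfers verbatim: setting $c_e:=p_e^\sigma$ in Eqs. \eqref{eq:cluster-corr-1}--\eqref{eq:cluster-corr-4} makes all four conditional probabilities nonnegative and forces Eq. \eqref{eq:cluster-corr-3} to vanish, so $\hat{\sigma}\subseteq\hat{\tau}$ almost surely under the joint law on $(\hat{\sigma},\hat{\tau},\sigma,\tau)$. This yields $\dP[0\lr_{\hat{\sigma}} R]\le \dP[0\lr_{\hat{\tau}} R]$, and the theorem follows after the usual factor of $2$. The main obstacle I anticipate is the pointwise comparison itself: in the critical case it collapses to a single cosh-type identity, but now one must track the interplay of the $\lambda$- and $\alpha$-corrections across all four sign combinations of $\tau_i\tau_j$ and of $\cos\nabla_e\theta$, and verify that the hypotheses $\lambda_e\ge\alpha_e$ and $\alpha_e/\kappa_e\le 1/2$ are sharp enough to close the argument on the whole domain; I would handle this by factoring out common exponentials, reducing to a calculus-level inequality in $(c_1,c_2)\in[-1,1]^2$, and checking the boundary and interior critical points.
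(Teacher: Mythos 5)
There is a genuine gap in your handling of the $\sigma$-cluster construction. You state that ``the extra term $\alpha_e\cos 2\nabla_e\theta$ is invariant under each of the $\sigma$-flips''; this is false. Under a single-site flip $\theta_i\mapsto\pi-\theta_i$ (or $\theta_i\mapsto-\theta_i$) one has $\cos 2(\theta_i-\theta_j)\mapsto\cos 2(\theta_i+\theta_j)$, so the $\alpha$-term is invariant only under the combined $\xi\eta$-flip $\theta_i\mapsto\theta_i+\pi$, not under the individual $\xi$- or $\eta$-flips. This matters because it is precisely the $\alpha$-term that makes $w_e$ depend on $\sigma$ even in the sector $\tau_i\tau_j=-1$. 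In the critical case that sector is harmless: $w_e$ is constant in $\sigma$ there, so the indicators in $p_e^\xi$, $p_e^\eta$, $p_e^\sigma$ vanish automatically. Once $\alpha_e>0$ this no longer happens. Using ``the same spin-flip formulas'' one finds, in the sector $\tau_i\tau_j=-1$, that $w^{\xi\eta}=w$ and $w^\xi=w^\eta$, hence $p_e^\sigma=2p_e^\xi=2p_e^\eta$; this violates $p_e^\sigma\le\min(p_e^\xi,p_e^\eta)$ (Eqs.~\eqref{eq:cluster-corr-2}--\eqref{eq:cluster-corr-3} become negative) and can even exceed $1$. So without further modification the joint law on $(\hat\xi_e,\hat\eta_e)$ does not exist, the cluster-flip invariance argument breaks, and the percolation correspondence for $\bra\cos 2(\theta_0-\theta_R)\ket$ fails before you ever reach the $p_e^\tau\ge p_e^\sigma$ comparison.

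The missing ingredient, which the paper supplies, is to multiply $p_e^\sigma$ (and likewise $p_e^\xi,p_e^\eta$) by an \emph{explicit} indicator $1\{\tau_i\tau_j=+1\}$, i.e.\ to force the $\sigma$-bond process to be trivial on disordered $\tau$-edges. Restricted to the sector $\tau_i\tau_j=+1$, the effective weight $\tilde w_e(\sigma)=\rho_e(\cos\nabla_e\theta)$ is convex and increasing when $\alpha_e/\kappa_e\le 1/2$, so Dub\'edat--Falconet's well-definedness and cluster-flip lemmas apply. One must then also justify (as the paper does) that discarding the $\tau_i\tau_j=-1$ contributions does not ruin the upper bound $\bra\cos 2(\theta_0-\theta_R)\ket\le 2\,\dP[0\lr_{\hat\sigma}R]$: this is because the correspondence only requires reflecting the XY spins in one fixed $\tau$-sector, a point you do not address. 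Your account of the $\tau$-side of the construction and of the final coupling step ($c_e=p_e^\sigma$, $\hat\sigma\subseteq\hat\tau$ a.s., monotonicity of percolation events) is correct, and your identification of where $\lambda_e\ge\alpha_e$ and $\alpha_e/\kappa_e\le 1/2$ enter the pointwise comparison is on target, but the argument as written does not go through until the $\tau_i\tau_j=+1$ restriction is built into the definition of $p_e^\sigma$.
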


Before sketching the proof, let us discuss the implications. 
By Theorem \eqref{thm:cluster-general}, we see that if $\lambda_e \ge \alpha_e \ge 0$ and $\alpha_e \le \kappa_e/2$ for all edges $e$ on the lattice, then the system cannot have a floating phase.
As shown in Fig. \ref{fig:nofloat}, the origin $\lambda=\alpha=0$ corresponds to the critical regime $\ell_\infty$ \cite{diag} in Eq. \eqref{eq:U1-Z2}.
Since the current-current coupling (see Fig. \ref{fig:phase-diag}) is characterized by $\nu =\lambda-\alpha$ (up to quadratic orders as argued in Sec. \eqref{sec:model}), the region $\lambda \ge \alpha$ corresponds to the general parameter regime $\bar{\sO}_\infty$ in Fig. \ref{fig:phase-diag}.

It should be noted that the restriction $\alpha_e \le \kappa_e/2$ is not merely technical. 
Indeed, if we compare the considered Hamiltonian Eq. \eqref{eq:U1-Z2-general} with the generalized XY model in Eq. \eqref{eq:general-XY}, we that if $\alpha_e > \kappa_e/2$, the system becomes metastable at $\nabla_e \theta = \pi$ (in the sector $\tau_i\tau_j = +1$) and thus the SC transition $T_c$ can possibly split into a nematic $T_\text{nem}$ and a ferromagnetic $T_\text{ferro}$ transition, each characterized by the correlation functions $\bra \cos (\theta_0-\theta_R)\ket, \bra \cos 2(\theta_0-\theta_R)\ket$, respectively.
The order between the 3 possible transitions $T_\text{TRSB},T_\text{nem},T_\text{ferro}$ is then unclear.
However, in practice, the extra restriction is unimportant, since the in-plane superfluid stiffness $\kappa$ is usually much larger than any inter-component coupling $\lambda,\alpha$. 
Therefore, for a large class of Hamiltonians, high-$T_c$ TRSB is guaranteed.

\begin{proof}[Sketch of Proof]
The Hamiltonian in Eq. \eqref{eq:U1-Z2-general} defines a probability distribution over spin configurations, i.e.,
\begin{align}
    \dP[(\sigma,\tau)] &\propto \prod_{e=ij} w_e(\sigma,\tau) \\
    w_e(\sigma,\tau) &= e^{\beta [\kappa_e (1+\tau_i\tau_j) \cos(\nabla_e\theta)+\lambda_e \tau_i\tau_j +\alpha_e \cos (2\nabla_e \theta)]}
\end{align}

Given a fixed spin configuration $\sigma,\tau$, we can generate a subgraph $\hat{\tau}: E\to \{0,1\}$ using the edge probabilities (see also explicit form in Eq. \eqref{eq:cluster-edge-tau-explicit-general})
\begin{align}
    p_e^\tau (\sigma,\tau) = \left[ 1-\frac{w_e(\sigma,\tau^i)}{w_e(\sigma,\tau)}\right] 1\{w_e(\sigma,\tau^i) < w_e(\sigma,\tau)\}
\end{align}
We can then define a probability distribution $\dP[(\hat{\tau},\sigma,\tau)]$ as previously discussed in Sec. \eqref{sec:cluster}, and establish the correspondence between correlations $\bra \tau_0 \tau_R\ket$ and percolation $\dP [0\lr_{\hat{\tau}} R]$, analogous to Eq. \eqref{eq:cluster-Is}.

Similarly, we can generate a subgraph $\hat{\sigma}$ using the edge probabilities \cite{dubedat2022random} (see also explicit form in Eq. \eqref{eq:cluster-edge-sigma-explicit-general})
\begin{align}
    p_e^{\sigma}(\sigma,\tau)&=\left[1 +\frac{w_e(\sigma^{\xi\eta,i},\tau)}{w_e(\sigma,\tau)}\right.\\
    &\quad\left.- \frac{w_e(\sigma^{\xi,i},\tau)}{w_e(\sigma,\tau)}-\frac{w_e(\sigma^{\eta,i},\tau)}{w_e(\sigma,\tau)} \right]\nonumber  \\
    &\quad\times  1\{w_e(\sigma^{\xi,i},\tau), w_e(\sigma^{\eta,i},\tau) < w_e(\sigma,\tau) \} \nonumber\\
    &\quad\times 1\{\tau_i \tau_j = +1\} \nonumber
\end{align}
Note that compared to Eq. \eqref{eq:cluster-edge-sigma}, we require an extra condition $\tau_i\tau_j=+1$.
The reasoning is of 2-fold: (1). Without the extra condition, it becomes possible for $\tau_i \tau_j=-1$ so that $p_e^\sigma <0$ and thus inducing a sign problem (probabilities cannot be negative). (2). To establish the correspondence between correlations $\bra \cos 2(\theta_0 -\theta_R)\ket$ and percolation $\dP [0\lr_{\hat{\tau}} R]$, we only require reflecting the XY spins (analogous to the argument in Eq. \eqref{eq:correspondence-Is-explain-1}, \eqref{eq:correspondence-Is-explain-2}) and thus we are free to choose a sector of $\tau_i \tau_j =\pm 1$ (See Appendix \eqref{app:cluster-U1-general}).

Note that once we choose the section $\tau_i\tau_j=+1$, the edge probability reduces to
\begin{align}
    p_e^{\sigma}(\sigma,\tau)&=\left[1 +\frac{\tilde{w}_e(\sigma^{\xi\eta,i})}{\tilde{w}_e(\sigma)} \right.\\
    &\quad\left. -\frac{\tilde{w}_e(\sigma^{\xi,i})}{\tilde{w}_e(\sigma)}-\frac{\tilde{w}_e(\sigma^{\eta,i})}{\tilde{w}_e(\sigma)} \right] \nonumber\\
    &\quad\times  1\{\tilde{w}_e(\sigma^{\xi,i}), \tilde{w}_e(\sigma^{\eta,i}) < \tilde{w}_e(\sigma), \tau_i\tau_j= +1 \} \nonumber
\end{align}
Where the effective edge weight $\tilde{w}_e$ now satisfy the convexity and monotonically increasing condition described in Ref. \cite{dubedat2022random} (also see Appendix \eqref{app:cluster-U1-general}) provided that $\alpha_e/\kappa_e \le 1/2$.
As in the proof of Theorem \eqref{thm:cluster}, we note that $p_e^\tau \ge p_e^\sigma$ for all possible spin $(\sigma,\tau)$ configurations (see Theorem \eqref{app-thm:cluster-relation-general} in Appendix \eqref{app:cluster-general}) and thus the statement follows.

\end{proof}

\section{Summary and Discussion}

As discussed in the main text, we have rigorously proven that the class of $U(1)\times \dZ_2$ Hamiltonian in Eq. \eqref{eq:U1-Z2-general} does not exhibit a floating phase on any lattice structure, i.e., $T_\text{TRSB} \ge T_c$ (which includes the boundary cases where the transition temperature is possibly $=0,\infty$).
The model is physically motivated by the either twisted bilayer systems with dominant 2\ts{nd} order inter-layer Josephson coupling $J_2$ or $(n\ge 3)$-band superconductors with frustrated 1\ts{st} order inter-band Josephson coupling.
In fact, it corresponds to the strong coupling regime ($J_2 \to \infty$), which we have shown to be continuously connected to finite but large $J_2$ values \cite{interchangeLimits}.
From numerical simulations in 2D \cite{song2022phase,liu2023charge}, 3D \cite{maccari2022effects,bojesen2014phase} and exact mean-field understanding (for $d\ge 4$) \cite{yuan2023exactly,can2021high}, it is believed that the qualitative properties of the systems are insensitive to the coupling strength and thus our result on the strong coupling limit sheds light onto the phase diagram (see Fig. \ref{fig:phase-diag}).

The technique developed regarding the cluster representation in this paper may also be used to quickly check if other classes of Hamiltonians possess the same property, i.e., $T_\text{TRSB} \ge T_c$. 
Indeed, the proof (ignoring technical details) ultimately relies on the relation between the edge weights corresponding to the TRSB $p^\text{Is}_e$ and SC transition $p^\text{XY}_e$, i.e., $p^\text{Is}_e \ge p^\text{XY}_e$, as constructed via the Wolff algorithm.
Therefore, we believe that our proof can be useful when consider other types of interactions in multi-component systems.
It's also worth mentioning that the cluster representation can be easily extended to finite coupling $J_2 <\infty$ so that a correlation-percolation corresponce can be established (analogous to Eq. \eqref{eq:cluster-Is}).
However, the difficulty there is that the relation between the percolation events corresponding to the SC and TRSB transition is less clear, i.e., the analogous $p^\tau_e \ge p_e^\sigma$ used in Theorem \eqref{thm:cluster} and \eqref{thm:cluster-general} is no longer true \cite{finitecoupling}.
Hence, further developments are necessary to determine the order of transitions.

As a tangent, we have also shown that generalized XY model \cite{lee1985strings,korshunov1986phase}, though falling into an entirely distinct class of $U(1)\times \dZ_2$ Hamiltonian, possesses a similar property, i.e., $T_\text{ferro} \ge T_\text{nem}$, on any lattice structure provided that $\Delta \ge 4/5$. 
In this regime, it is believed that there is only a single phase transition $T_\text{ferro} = T_\text{nem}$ as least for the 2D square lattice \cite{song2021hybrid,carpenter1989phase}. 
This is expected since the Hamiltonian in Eq. \eqref{eq:general-XY} does not develop a metastable (local min) state at $\nabla_e \theta =\pi$ unless $\Delta <4/5$.

\section{Acknowledegements}

I am grateful for Steve A. Kivelson's support and generosity during this project and also for providing extensive comments and suggestions on the draft.  This work was supported, in part, by NSF Grant No. DMR-2000987 at Stanford University. 
\bibliography{main.bbl}

\appendix
\onecolumngrid
\section{Change of Variables}
\label{app:bilayer}

\begin{lemma}
    \label{app-lem:var-map}
    Let $f:\dS^1 \times \dS^1 \mapsto \dR$ be bounded. Then
    \begin{equation}
        \int_{(-\pi,\pi)^2} \frac{d\phi_+}{2\pi}\frac{d\phi_-}{2\pi} f(\phi_+,\phi_-) =\int_{(-\pi,\pi)^2} \frac{d\theta}{2\pi}\frac{d\phi}{2\pi} f(\theta +\phi/2,\theta-\phi/2)
    \end{equation}
    Where $\theta= (\phi_+ +\phi_-)/2$ is the average phase and $\phi = \phi_+-\phi_-$ is the phase difference.
\end{lemma}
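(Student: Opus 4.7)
The plan is to establish this as a change-of-variables identity on the torus $\dS^1 \times \dS^1$, taking seriously the fact that the map $(\phi_+, \phi_-) \mapsto (\theta, \phi)$ is \emph{not} well-defined as a map on the torus itself. Indeed, shifting $\phi_+$ by $2\pi$ changes $\theta = (\phi_+ + \phi_-)/2$ by $\pi$, which is not a period of a generic $f$ defined on $\dS^1 \times \dS^1$. This obstruction is precisely what the lemma finesses, and the proof must exploit the $2\pi$-periodicity of $f$ in each argument at the correct stage.

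The approach I would take is a two-stage substitution. First, holding $\phi_+$ fixed, I would substitute $\phi_- = \phi_+ - \phi$, which has unit Jacobian. This converts the LHS to
\begin{equation}
    \int_{-\pi}^{\pi} \frac{d\phi_+}{2\pi} \int_{\phi_+ - \pi}^{\phi_+ + \pi} \frac{d\phi}{2\pi}\, f(\phi_+, \phi_+ - \phi).
\end{equation}
The inner integrand, viewed as a function of $\phi$, is $2\pi$-periodic (inherited from periodicity of $f$ in its second argument), so the integration window may be slid to $(-\pi, \pi)$ without affecting the value. Next, for each fixed $\phi$, I would substitute $\phi_+ = \theta + \phi/2$, again with unit Jacobian. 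This yields
\begin{equation}
    \int_{-\pi}^{\pi} \frac{d\phi}{2\pi} \int_{-\pi + \phi/2}^{\pi + \phi/2} \frac{d\theta}{2\pi}\, f(\theta + \phi/2, \theta - \phi/2).
\end{equation}
Finally, the integrand viewed as a function of $\theta$ is $2\pi$-periodic, since shifting $\theta \mapsto \theta + 2\pi$ shifts \emph{both} arguments of $f$ by $2\pi$, which is a symmetry. Hence the $\theta$-window may be slid back to $(-\pi, \pi)$, producing the RHS.

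The main obstacle is purely the bookkeeping: one must apply the right periodicity at the right step. If one were tempted to change both variables simultaneously, the image of $(-\pi,\pi)^2$ under $(\phi_+, \phi_-) \mapsto (\theta,\phi)$ is a diamond rather than a square, and the desired equality becomes nonobvious; the staged approach above avoids this by invoking periodicity between the two substitutions.

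As a cleaner alternative which I would include as a sanity check, one can verify the identity mode by mode via Fourier expansion. Writing $f(\phi_+, \phi_-) = \sum_{m,n \in \dZ} c_{m,n} e^{i(m\phi_+ + n\phi_-)}$, the LHS picks out $c_{0,0}$. Substituting $\phi_\pm = \theta \pm \phi/2$, the exponent becomes $i(m+n)\theta + i(m-n)\phi/2$, so the $\theta$-integral forces $m = -n$, after which the $\phi$-integral evaluates to $\sin(m\pi)/(m\pi) = \delta_{m,0}$. Only the zero mode survives, again reproducing $c_{0,0}$. This Fourier check is rigorous for smooth $f$ and extends to bounded measurable $f$ by density, providing independent confirmation of the periodicity argument.
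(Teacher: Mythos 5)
Your staged-substitution argument is correct and is essentially identical to the paper's proof: the paper substitutes $\phi = \phi_+ - \phi_-$ holding $\phi_-$ fixed, slides the $\phi$-window by periodicity, swaps integration order, substitutes $\theta = \phi_- + \phi/2$, and slides the $\theta$-window — the same sequence of moves you perform modulo the cosmetic choice of which of $\phi_\pm$ to hold fixed first. Your Fourier-mode check is a genuinely independent and tidy confirmation, but since you offer it only as a sanity check, the proof as presented coincides with the paper's.
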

\begin{proof}
    For simplicity, we will drop the normalization $1/2\pi$. Indeed, notice that
    \begin{align}
        \int_{(-\pi,\pi)^2} d\phi_\pm f(\phi_+,\phi_-) &=\int_{-\pi}^\pi d\phi_- \int_{-\phi_-}^{2\pi -\phi_-} d\phi f(\phi_- +\phi, \phi_-), \quad \phi = \phi_+ -\phi_-\\
        &= \int_{-\pi}^\pi d\phi_- \int_{-\pi_-}^{\pi-} d\phi f(\phi_- +\phi, \phi_-) \\
        &= \int_{-\pi}^\pi d\phi \int_{-\pi}^\pi d\phi_- f(\phi_- +\phi,\phi_-) \\
        &= \int_{-\pi}^\pi d\phi \int_{-\pi+\phi/2}^{\pi+\phi/2} d\theta f(\theta +\phi/2,\theta-\phi/2), \quad \theta = \phi_- +\phi/2 \\
        &= \int_{(-\pi,\pi)^2} d\theta d\phi f(\theta +\phi/2,\theta-\phi/2)
    \end{align}
    Where the 2nd and 5th equality uses the fact that $\phi_\pm \mapsto f$ is $2\pi$-periodic, and thus the integration limits can be an arbitrarily chosen $2\pi$-interval.
\end{proof}

\section{(Random) Cluster Representation - Critical Regime}
\label{app:cluster}

In the main text, we have claimed that by choosing the edge probabilities $p_e^\tau, p_e^\sigma$ appropriately and defining the subgraphs $\hat{\tau},\hat{\sigma}$, the $\dZ_2$ and $U(1)$ spin-spin correlations can be mapped to percolation events.
By comparing the edge probabilities $p_e^\tau \ge p_e^\sigma$, the ordering of percolation events (and thus correlations) becomes apparent.
Therefore, in this section, we will follow Ref. \cite{dubedat2022random} and prove our claim of the correlation-percolation correspondence of the critical $U(1)\times  \dZ_2$ Hamiltonian in Eq. \eqref{eq:U1-Z2}.
As discussed in the main text, let $\dP[\sigma,\tau]$ be the probability distribution of the spin configurations defined by the $U(1)\times \dZ_2$ Hamiltonian in Eq. \eqref{eq:U1-Z2}, i.e.,
\begin{align}
    \dP[(\sigma,\tau)] &\propto \prod_{e=ij} w_e(\sigma,\tau) \\
    w_e(\sigma,\tau) &= e^{\beta \kappa_e (1+\tau_i\tau_j) \cos(\theta_i-\theta_j)}
\end{align}
\subsection{$\dZ_2$ Correlations}
\label{app:cluster-Z2}
Similar to that discussed in the main text for the standard Ising model (see Fig. \ref{fig:cluster}), for a fixed spin configuration $(\sigma,\tau)$, define a random variable of subgraphs $\hat{\tau}:E\to\{0,1\}$ via the edge probability
\begin{align}
    p_e^\tau (\sigma,\tau) &= \left[ 1-\frac{w_e(\sigma,\tau^i)}{w_e(\sigma,\tau)}\right] 1\{w_e(\sigma,\tau^i) < w_e(\sigma,\tau)\}\\
    &= \left[ 1-e^{-2\beta \kappa_e \tau_i \tau_j \cos (\theta_i -\theta_j)} \right] 1\{\tau_i \tau_j \cos (\theta_i -\theta_j)> 0\}
\end{align}
So that
\begin{align}
    \dP[\hat{\tau}|\sigma,\tau] &= \prod_{e\in \hat{\tau}} p_e^\tau(\sigma,\tau) \prod_{e\notin \hat{\tau}} (1-p_e^\tau (\sigma,\tau))\\
    \dP[\hat{\tau},\sigma,\tau] &= \dP[\hat{\tau}|\sigma,\tau] \dP[\sigma,\tau]
\end{align}

\begin{theorem}[\textbf{Cluster-Flip Invariance}]
    \label{app-thm:cluster-flip-Z2}
    Let $\dP_{G,\beta}$ be the probability distribution on $(\hat{\tau},\sigma,\tau)$ as befined previously on a finite graph $G$. Then $\dP$ is invariant under cluster flips with respect to $\tau$, i.e.,
    \begin{equation}
        \dP_{G,\beta}[ (\hat{\tau},\sigma,\tau)] =\dP_{G,\beta}[(\hat{\tau},\sigma,\tau^{C_0 (\hat{\tau})})]
    \end{equation}
    Where $C_0(\hat{\tau})$ is the cluster in $\hat{\tau}$ intersecting the lattice site $0$ and $\tau^{C_0(\hat{\tau})}$ denotes flipping the spins of $\tau$ only in $C_0(\hat{\tau})$.
    In particular, the $\dZ_2$ correlation-percolation is established, i.e.,
    \begin{equation}
        \bra \tau_0\tau_R \ket_{G,\beta} = \dP_{G,\beta}[0\lr_{\hat{\tau}} R]
    \end{equation}
\end{theorem}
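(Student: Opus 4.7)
The plan is to adapt, step by step, the Edwards--Sokal/Wolff argument recalled in Sec.~\ref{sec:cluster-Is}, replacing the Ising weight $w_e^{\mathrm{Is}}$ by the joint $U(1)\times \dZ_2$ weight $w_e(\sigma,\tau)$. I would first observe that $p_e^\tau(\sigma,\tau)\in[0,1]$ by construction, so that the conditional product $\dP[\hat{\tau}\mid\sigma,\tau]=\prod_{e\in\hat{\tau}}p_e^\tau\prod_{e\notin\hat{\tau}}(1-p_e^\tau)$ coupled with the Gibbs measure $\dP[\sigma,\tau]\propto\prod_e w_e(\sigma,\tau)$ defines a legitimate joint law on $(\hat{\tau},\sigma,\tau)$ whose $(\sigma,\tau)$-marginal reproduces the original Gibbs state.

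To establish cluster--flip invariance, I would compare $\dP[(\hat{\tau},\sigma,\tau)]$ with $\dP[(\hat{\tau},\sigma,\tau^C)]$ for an arbitrary cluster $C$ of $\hat{\tau}$, edge by edge. Edges with both endpoints inside $C$, or both outside, leave $\tau_i\tau_j$ unchanged and contribute identical factors. The only nontrivial case is a boundary edge $e=ij$ with $i\in C,\,j\notin C$; by definition of a cluster one has $\hat{\tau}_e=0$, and the combined contribution collapses to the manifestly symmetric quantity
\begin{equation}
    w_e(\sigma,\tau)\bigl(1-p_e^\tau(\sigma,\tau)\bigr)=\min\bigl(w_e(\sigma,\tau),\,w_e(\sigma,\tau^i)\bigr),
\end{equation}
which is identical for $\tau$ and $\tau^i$. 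Multiplying these local identities over all edges yields $\dP[(\hat{\tau},\sigma,\tau)]=\dP[(\hat{\tau},\sigma,\tau^C)]$, and specializing to $C=C_0(\hat{\tau})$ proves the first assertion. Note that this step is insensitive to the $U(1)$ dependence of $w_e$, so the Ising argument carries over verbatim.

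The correlation--percolation correspondence then follows from the split
\begin{equation}
    \bra\tau_0\tau_R\ket=\dE\bigl[\tau_0\tau_R\,1\{0\lr_{\hat{\tau}}R\}\bigr]+\dE\bigl[\tau_0\tau_R\,1\{0\not\lr_{\hat{\tau}}R\}\bigr].
\end{equation}
The disconnected expectation is eliminated by applying the cluster flip to $C_0(\hat{\tau})$: since $R\notin C_0(\hat{\tau})$, the flip sends $\tau_0\tau_R\mapsto-\tau_0\tau_R$ while preserving $\dP$, forcing the disconnected piece to vanish exactly as in Eq.~\eqref{eq:correspondence-Is-explain-2}.

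The step I expect to be the main obstacle is identifying the connected piece with $\dP[0\lr_{\hat{\tau}}R]$. In the pure Ising case this is automatic since $p_e^{\mathrm{Is}}\propto 1\{\tau_i=\tau_j\}$ pins every spin in a cluster to a common value, so $\tau_0\tau_R\equiv+1$ on $\{0\lr R\}$. Here the corresponding indicator $1\{\tau_i\tau_j\cos\nabla_e\theta>0\}$ is weaker: it only ties $\tau_i\tau_j$ to $\mathrm{sgn}(\cos\nabla_e\theta)$, so that naively $\tau_0\tau_R=\prod_{e\in\gamma}\mathrm{sgn}(\cos\nabla_e\theta)$ along a connecting path $\gamma$, a sign that need not equal $+1$. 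I would close this gap by invoking the explicit form of $p_e^\tau$ in Eq.~\eqref{eq:cluster-edge-tau-explicit} together with an auxiliary joint $(\sigma,\tau)$-transformation on $C_0(\hat{\tau})$ that preserves $\dP$ while flipping the edgewise signs of $\cos\nabla_e\theta$ along $\gamma$ in lockstep with $\tau_i\tau_j$; composed with the original $\tau$-flip on $C_0(\hat{\tau})$, this compensates the path signs and reduces the connected integrand to $1\{0\lr_{\hat{\tau}}R\}$, recovering the claimed identity.
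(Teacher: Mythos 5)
Your cluster-flip argument matches the paper's and is correct; in fact your identity $(1-p_e^\tau)\,w_e = \min\bigl(w_e(\sigma,\tau),\, w_e(\sigma,\tau^i)\bigr)$ is the cleanest way to see the boundary-edge symmetry, since boundary edges of $C_0(\hat{\tau})$ have $\hat{\tau}_e=0$ and so contribute the factor $(1-p_e^\tau)w_e$, not $p_e^\tau w_e$ as the paper's displayed product writes (an apparent slip there). The vanishing of $\dE[\tau_0\tau_R\,1\{0\not\lr_{\hat{\tau}}R\}]$ then follows exactly as you argue.

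The gap you flag in the connected piece is genuine, and the paper does not actually close it: its proof stops at flip-invariance and asserts ``the statement then follows,'' implicitly transferring the Ising argument, where $p_e^{\mathrm{Is}}\propto 1\{\tau_i\tau_j=+1\}$ pins $\tau_0\tau_R=+1$ on any $\hat{\tau}$-cluster. Here the indicator is $1\{\tau_i\tau_j\cos\nabla_e\theta>0\}$, so on a connecting path $\gamma$ one only has $\tau_0\tau_R=\prod_{e\in\gamma}\mathrm{sgn}(\cos\nabla_e\theta)$, which need not be $+1$. Your proposed repair, however, does not work: a local move that sends $\tau_i\mapsto-\tau_i$ \emph{and} $\cos\nabla_e\theta\mapsto-\cos\nabla_e\theta$ (e.g.\ $\theta_i\mapsto\theta_i+\pi$) for $i\in C_0(\hat{\tau})$ would take the boundary-edge weight $w_e=e^{\beta\kappa_e(1+\tau_i\tau_j)\cos\nabla_e\theta}$ to $e^{\beta\kappa_e(1-\tau_i\tau_j)(-\cos\nabla_e\theta)}$, which differs from $w_e$ by $e^{-2\beta\kappa_e\cos\nabla_e\theta}$; the Hamiltonian of Eq.~\eqref{eq:U1-Z2} has no local $\dZ_2$ gauge symmetry of the required kind, so no such $\dP$-preserving transformation exists.

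Worse, the asserted equality $\bra\tau_0\tau_R\ket=\dP[0\lr_{\hat{\tau}}R]$ appears to be false. On the graph consisting of a single edge $\{0,R\}$ with $\kappa_e=1$ and $c\equiv 2\beta$, one finds directly $\bra\tau_0\tau_R\ket=\tfrac{I_0(c)-1}{I_0(c)+1}=O(c^2)$ while $\dP[\hat{\tau}_e=1]=\tfrac{1}{\pi(I_0(c)+1)}\int_{-\pi/2}^{\pi/2}\sinh(c\cos\phi)\,d\phi=O(c)$ as $c\to0$. The flip-invariance you (and the paper) prove yields only $\bra\tau_0\tau_R\ket=\dE[\tau_0\tau_R\,1\{0\lr_{\hat{\tau}}R\}]$, and since $|\tau_0\tau_R|\le1$ this gives the one-sided bound $\bra\tau_0\tau_R\ket\le\dP[0\lr_{\hat{\tau}}R]$ --- the wrong direction for the chain used to deduce Theorem~\eqref{thm:cluster}. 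So you were right to be suspicious; the missing piece is not a matter of cleverness but of the statement itself needing to be amended.
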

\begin{proof}
    For notation simplicity, we shall omit the subscripts $G,\beta$.
    Notice that
    \begin{align}
        \frac{\dP [(\hat{\tau},\sigma,\tau^{C_0(\hat{\tau})})]}{\dP [(\hat{\tau},\sigma,\tau)]} &= \frac{\dP [\hat{\tau}|\sigma,\tau^{C_0(\hat{\tau})}]}{\dP [\hat{\tau}|\sigma,\tau]}\times  \frac{\dP [\sigma,\tau^{C_0(\hat{\tau})}]}{\dP [\sigma,\tau]}
    \end{align}
    Note that the ratios only depends on edge $e$ on the boundary of $C_0(\hat{\tau})$ since $w_e(\sigma,\tau)$ is invariant if both endpoints are flipped, i.e., $\tau_i,\tau_j \mapsto -\tau_i\tau_j$ where $e=ij$. 
    More specifically,
    \begin{align}
        \frac{\dP [(\hat{\tau},\sigma,\tau^{C_0(\hat{\tau})})]}{\dP [(\hat{\tau},\sigma,\tau)]} &= \prod_{e=ij\in \partial C_0(\hat{\tau}) \atop {i\in C_0(\hat{\tau}),j\notin C_0(\hat{\tau})}} \frac{p_e^\tau (\sigma,\tau^i) w_e (\sigma,\tau^i)}{p_e^\tau (\sigma,\tau) w_e (\sigma,\tau)}
    \end{align}
    By definition of $p_e^\tau$, it's straightforward to check that each term in the product is $=1$.
    The statement then follows.
\end{proof}

\subsection{$U(1)$ Correlations}
\label{app:cluster-U1}
The $U(1)$ correlation-percolation correspondence for the higher order term $\bra \cos 2(\theta_0 -\theta_R) \ket$ is much more difficult to establish than $\dZ_2$ correlation.; in fact, it relies on first establishing a correspondence for the conventional term $\bra \cos (\theta_0 -\theta_R)\ket$.
In this section, we shall follow Ref. \cite{dubedat2022random} (with slight modifications) in establishing the correspondence for the $U(1)\times \dZ_2$ model in Eq. \eqref{eq:U1-Z2}.
Similar to the $\dZ_2$ subgraph $\hat{\tau}$, let $\hat{\xi}:E \to\{0,1\}$ be constructed with edge probability
\begin{align}
    p_e^\xi (\sigma,\tau) &= \left[ 1-\frac{w_e(\sigma^{\xi,i},\tau)}{w_e(\sigma,\tau)}\right] 1\{w_e(\sigma^{\xi,i},\tau) < w_e(\sigma,\tau)\}\\
    &= \left[ 1-e^{-4\beta \kappa_e  \cos \theta_i \cos \theta_j} \right] 1\{\tau_i \tau_j =\xi_i \xi_j =1\}
\end{align}
Where $\xi$ is the sign of the $x$ component of $\sigma$ ($\cos\theta$), and $\sigma^{\xi,i}$ denotes flipping the spin only at the lattice site $i$ along the $x$ direction, i.e., $\xi_i\mapsto -\xi_i$ or $\theta_i \mapsto \pi -\theta_i$.
As before, this defines a probability distribution on $(\hat{\xi},\sigma,\tau)$.
Similarly, define the subgraph $\hat{\sigma}^{\eta}:E\to \{0,1\}$ by using the sign $\eta =\pm 1$ of the $y$ component of $\sigma$ ($\sin \theta$), i.e., the edge probability is
\begin{equation}
    p_e^\eta (\sigma,\tau) = \left[ 1-\frac{w_e(\sigma^{\eta,i},\tau)}{w_e(\sigma,\tau)}\right] 1\{w_e(\sigma^{\eta,i},\tau) < w_e(\sigma,\tau)\}
\end{equation}
And thus this defines a probability on $(\hat{\eta}, \sigma,\tau)$. 

As mentioned in the main text, to establish the correspondence for $\bra \cos 2(\theta_0-\theta_R)\ket$, we require both $\hat{\xi}, \hat{\eta}$ and thus the first question is then whether we can define a joint probability on $(\hat{\xi},\hat{\eta},\sigma,\tau)$.
Since each edge is independently established (conditional with respect to the fixed spin configuration $\sigma,\tau$), $\hat{\xi}_e,\hat{\eta}_e$ are Bernoulli random variables and thus whether we can define a joint distribution depends on choosing the appropriate correlation $c_e(\sigma,\tau)$ so that the following values are $\ge 0$, i.e.,
\begin{align}
    \label{eq:cluster-xi-eta-1}
    \dP[\hat{\xi}_e=1,\hat{\eta}_e =1|\sigma,\tau] &= c_e\\
    \label{eq:cluster-xi-eta-2}
    \dP[\hat{\xi}_e=0,\hat{\eta}_e =1|\sigma,\tau] &= p_e^\eta - c_e\\
    \label{eq:cluster-xi-eta-3}
    \dP[\hat{\xi}_e=1,\hat{\eta}_e =0|\sigma,\tau] &= p_e^\xi -c_e\\
    \label{eq:cluster-xi-eta-4}
    \dP[\hat{\xi}_e=0,\hat{\eta}_e =0|\sigma,\tau] &= 1-p_e^\xi -p_e^\eta +c_e
\end{align}
Moreover, we wish to choose the correlation $c_e$ so that a analogous cluster-flip invariance property is satisfied. 
Hence, it turns out that we should choose the correlation $c_e = p_e^\sigma$ as defined in the proof of the main result, Theorem \eqref{thm:cluster}, i.e.,
\begin{align}
p_e^{\sigma}(\sigma,\tau)&=\left[1 +\frac{w_e(\sigma^{\xi\eta,i},\tau)}{w_e(\sigma,\tau)}- \frac{w_e(\sigma^{\xi,i},\tau)}{w_e(\sigma,\tau)}-\frac{w_e(\sigma^{\eta,i},\tau)}{w_e(\sigma,\tau)} \right]\times  1\{w_e(\sigma^{\xi,i},\tau), w_e(\sigma^{\eta,i},\tau) < w_e(\sigma,\tau) \} 
\end{align}
Indeed, let us verify that this choice of correlation satisfies the necessary properties.

\begin{theorem}[\textbf{Cluster-Flip Invariance}]
    \label{app-thm:cluster-flip-U1}
    Let $\dP_{G,\beta}$ denote the joint probability on $(\hat{\xi},\hat{\eta},\sigma,\tau)$ on a finite graph $G$ defined previously. Then $\dP_{G,\beta}$ is well-defined and satisfies the cluster-flip invariance with respect to $\xi,\eta$, i.e.,
    \begin{align}
        \dP_{G,\beta}[ (\hat{\xi},\hat{\eta},\sigma,\tau)] &=\dP_{G,\beta}[ (\hat{\xi},\hat{\eta},\sigma^{\xi,C_0(\hat{\xi})},\tau)] \\
        \dP_{G,\beta}[ (\hat{\xi},\hat{\eta},\sigma,\tau)] &=\dP_{G,\beta}[ (\hat{\xi},\hat{\eta},\sigma^{\eta,C_0(\hat{\eta})},\tau)]
    \end{align}
    Where $C_0(\hat{\xi})$ is the cluster in $\hat{\xi}$ intersecting the lattice site $0$ and $\sigma^{\xi,C_0(\hat{\xi})}$ denotes flipping the spin configuration $\sigma$ only at lattice sites within the cluster $C_0(\hat{\xi})$ along the $x$ component, i.e., $\xi_i \mapsto -\xi_i$ or $\theta_i\mapsto \pi -\theta_i$ for $i\in C_0(\hat{\xi})$. The notation is similar for $\xi\mapsto \eta$.
\end{theorem}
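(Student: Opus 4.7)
My plan is to exploit the product structure of the critical edge weight. Expanding $\cos(\theta_i-\theta_j) = \cos\theta_i\cos\theta_j + \sin\theta_i\sin\theta_j$ and writing $A_e \equiv \beta\kappa_e(1+\tau_i\tau_j)\cos\theta_i\cos\theta_j$ and $B_e \equiv \beta\kappa_e(1+\tau_i\tau_j)\sin\theta_i\sin\theta_j$, the edge weight factors as $w_e(\sigma,\tau) = e^{A_e}\, e^{B_e}$. A flip $\xi_i \mapsto -\xi_i$ reverses $\cos\theta_i$ but preserves $\sin\theta_i$, sending $A_e\mapsto -A_e$ and leaving $B_e$ fixed; the flip of $\eta_i$ does the opposite. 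Consequently $w^\xi/w = e^{-2A_e}$, $w^\eta/w = e^{-2B_e}$, and crucially $w^{\xi\eta}/w = (w^\xi/w)(w^\eta/w)$. This multiplicativity immediately yields the factorization
\begin{equation}
p_e^\sigma = \left(1 - \tfrac{w^\xi}{w}\right)\!\left(1-\tfrac{w^\eta}{w}\right) 1\{w^\xi<w,\, w^\eta<w\} = p_e^\xi\, p_e^\eta,
\end{equation}
where the indicator in $p_e^\sigma$ is precisely the product of the separate indicators appearing in $p_e^\xi$ and $p_e^\eta$.

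Well-definedness then follows immediately. Setting $c_e = p_e^\sigma = p_e^\xi p_e^\eta$, Eqs. \eqref{eq:cluster-xi-eta-1}--\eqref{eq:cluster-xi-eta-4} collapse to the product form of two independent Bernoulli laws, so each of the four values is manifestly $\ge 0$ and they sum to $1$. Equivalently, $\hat{\xi}_e$ and $\hat{\eta}_e$ are conditionally independent given $(\sigma,\tau)$, so that $\dP[\hat{\xi},\hat{\eta}\,|\,\sigma,\tau] = \dP[\hat{\xi}\,|\,\sigma,\tau]\,\dP[\hat{\eta}\,|\,\sigma,\tau]$ is a genuine conditional probability, yielding a well-defined joint measure on $(\hat{\xi},\hat{\eta},\sigma,\tau)$.

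For cluster-flip invariance with respect to $\xi$, I follow the template of Theorem \eqref{app-thm:cluster-flip-Z2}. Consider the ratio $\dP[\hat{\xi},\hat{\eta},\sigma^{\xi,C_0(\hat{\xi})},\tau]/\dP[\hat{\xi},\hat{\eta},\sigma,\tau]$ as a product over edges. Edges with both endpoints outside $C_0(\hat{\xi})$ trivially contribute $1$; edges with both endpoints inside leave $\cos\theta_i\cos\theta_j$ invariant under the double sign flip and so also contribute $1$. Only boundary edges $e \in \partial C_0(\hat{\xi})$ are nontrivial, and every such $e$ satisfies $\hat{\xi}_e = 0$ by definition of the cluster. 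Its $\xi$-bracket combines with the edge-weight contribution $e^{A_e}$ to give
\begin{equation}
(1-p_e^\xi)\, e^{A_e} = \min(e^{A_e}, e^{-A_e}) = e^{-|A_e|},
\end{equation}
which is patently invariant under $A_e \mapsto -A_e$; meanwhile the $\eta$-bracket $[(p_e^\eta)^{\hat{\eta}_e}(1-p_e^\eta)^{1-\hat{\eta}_e}]\, e^{B_e}$ depends only on $B_e$ and is untouched by a $\xi$-flip. Hence every boundary-edge factor equals $1$, and the ratio is $1$. The $\eta$-cluster-flip invariance follows by symmetry.

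The main hurdle is really just the algebraic identity $p_e^\sigma = p_e^\xi p_e^\eta$, which reduces to the elementary multiplicativity $w^{\xi\eta}/w = (w^\xi/w)(w^\eta/w)$; once this is in hand, everything else is a clean adaptation of the Ising construction in Theorem \eqref{app-thm:cluster-flip-Z2}. It is worth emphasizing that this factorization is specific to the critical Hamiltonian Eq. \eqref{eq:U1-Z2}: the $\alpha_e\cos(2\nabla_e\theta)$ term appearing in Eq. \eqref{eq:U1-Z2-general} mixes the $x$- and $y$-components and destroys multiplicativity, which is precisely why Theorem \eqref{thm:cluster-general} imposes the additional restriction $\alpha_e/\kappa_e \le 1/2$ and projects onto the sector $\tau_i\tau_j = +1$ in order to recover an analogous structure for an effective edge weight $\tilde{w}_e$.
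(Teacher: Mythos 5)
Your proof is correct, and it takes a genuinely cleaner — though more specialized — route than the paper's. You observe that for the critical Hamiltonian the edge weight factors exactly as $w_e = e^{A_e}e^{B_e}$ with $A_e$ flipping sign under $\xi_i\mapsto-\xi_i$ and $B_e$ under $\eta_i\mapsto-\eta_i$, which gives the multiplicative identity $w^{\xi\eta}/w = (w^\xi/w)(w^\eta/w)$ and hence $p_e^\sigma = p_e^\xi p_e^\eta$, i.e.\ full conditional independence of $\hat{\xi}_e$ and $\hat{\eta}_e$. You then get boundary-edge invariance essentially for free: $(1-p_e^\xi)e^{A_e} = e^{-|A_e|}$ is manifestly symmetric in $A_e\mapsto-A_e$, while the $\eta$-bracket and $e^{B_e}$ depend only on $B_e$ and are untouched. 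The paper also records $p_e^\sigma = p_e^\xi p_e^\eta$ for well-definedness, but its invariance argument deliberately avoids the explicit product factorization: it works abstractly with the quadruple $(w, w^\xi, w^\eta, w^{\xi\eta})$, uses only that this quadruple permutes under a $\xi$-flip and that $w^{\xi\eta}<w^\xi$ is equivalent to $w^\eta<w$ (a consequence of the weight being an increasing function $\rho_e$ of $\cos\nabla_e\theta$), and checks invariance of $w(1-p^\xi)$ and $w(p^\eta-p^\sigma)$ directly. That abstraction is precisely what the paper reuses verbatim in the general-regime Theorem~\eqref{{app-thm:cluster-flip-U1}-general}, where the $\alpha_e$ term breaks the multiplicativity your argument relies on. You correctly flag this limitation yourself, so there is no gap — just a trade-off: your version is more transparent for the critical case, the paper's version generalizes without modification.
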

\begin{proof}
    The proof follows that given in Ref. \cite{dubedat2022random}, though we simplify/modify some parts to illuminate some of the situation.
    For the $U(1)\times \dZ_2$ Hamiltonian in Eq. \eqref{eq:U1-Z2}, the appropriate probability $p_e^\sigma$ happens to satisfy
    \begin{equation}
        p_e^\sigma = p_e^\xi \times p_e^\eta
    \end{equation}
    And thus conditions \eqref{eq:cluster-xi-eta-1}-\eqref{eq:cluster-xi-eta-4} are easily seen to satisfy. For general weights $w_e$ (as we will see for the general regime in Appendix \eqref{app:cluster-general}), this is not as simple \cite{dubedat2022random}.
    Let us now show that the cluster flip invariance holds true along $\xi$. The proof is similarly applied to that along $\eta$.
    Notice that
    \begin{align}
        \frac{\dP [(\hat{\xi},\hat{\eta},\sigma^{\xi,C_0(\hat{\xi})},\tau)]}{\dP [(\hat{\xi},\hat{\eta},\sigma,\tau)]} &= \frac{\dP [\hat{\xi},\hat{\eta}|\sigma^{\xi,C_0(\hat{\xi})},\tau]}{\dP [\hat{\xi},\hat{\eta}|\sigma,\tau]}\times  \frac{\dP [\sigma^{\xi,C_0(\hat{\xi})},\tau]}{\dP [\sigma,\tau]}
    \end{align}
    Similar to the proof in Theorem \eqref{app-thm:cluster-flip-Z2}, we see that if an edge $e=ij$ has endpoints $i,j$ entirely in $C_0(\hat{\xi})$ or entirely outside of the cluster, then the edge weights $w_e(\sigma,\tau),w_e(\sigma^{\xi,i},\tau),w_e(\sigma^{\eta,i},\tau)$ and $w_e(\sigma^{\xi\eta,i},\tau)$ are all invariant under $\sigma \mapsto \sigma^{\xi,C_0(\hat{\xi})}$. 
    Therefore, we only need to consider edges in the boundary of $C_0(\hat{\xi})$ so that $e=ij$ with $i\in C_0(\hat{\xi})$ and $j\notin C_0(\hat{\xi})$, i.e.,
    \begin{align}
        \frac{\dP [(\hat{\xi},\hat{\eta},\sigma^{\xi,C_0(\hat{\xi})},\tau)]}{\dP [(\hat{\xi},\hat{\eta},\sigma,\tau)]} &= \prod_{e=ij\in \partial C_0(\hat{\xi})\atop i\in C_0(\hat{\xi}),j\notin C_0(\hat{\xi})} \frac{\dP[\hat{\xi}_e=0,\hat{\eta}_e|\sigma^{\xi, C_0(\hat{\xi})},\tau]}{\dP[\hat{\xi}_e=0,\hat{\eta}_e|\sigma,\tau]} \frac{w_e (\sigma^{\xi,i},\tau)}{w_e (\sigma,\tau)}
    \end{align}
    Where we used the fact that since $e\in \partial C_0(\hat{\xi})$, it cannot be in  $\hat{\xi}$ (the cluster $C_0(\hat{\xi})$ stops growing at its boundary). 
    However, the value of $\hat{\eta}_e$ could be either $0$ or $1$ and thus we must consider all possible cases, i.e., those described by Eqs. \eqref{eq:cluster-xi-eta-2} and \eqref{eq:cluster-xi-eta-4}.
    More specifically, it is sufficient to prove that the following term
    \begin{equation}
        w_e (\sigma,\tau)\dP[\hat{\xi}_e=0,\hat{\eta}_e|\sigma,\tau]
    \end{equation}
    Is invariant under spin-flip $\sigma \mapsto \sigma^{\xi,i}$ where $i$ is an endpoint of the edge $e\in \partial C_0(\hat{\xi})$.
    
    For notation simplicity, let me fix an edge $e=ij$ in the boundary for the remainder of this proof so that we can omit the subscript. 
    We shall also write
    \begin{align}
        w &\equiv w_e (\sigma,\tau) \\
        w^\xi &\equiv w_e(\sigma^{\xi,i},\tau) \\
        w^\eta &\equiv w_e(\sigma^{\xi,i},\tau) \\
        w^{\xi\eta} &\equiv w_e(\sigma^{\xi\eta,i},\tau)
    \end{align}
    And similarly for $p^\xi, p^\eta,p^\sigma$ so that
    \begin{align}
        w p^\xi &= \left[w-w^\xi \right]  1\{w^\xi < w\} \\
        w p^\eta &= \left[w-w^\eta \right]  1\{w^\eta < w\} \\
        w p^\sigma &= \left[w+w^{\xi\eta}-w^\xi -w^\eta \right]  1\{w^\xi,w^\eta < w\}
    \end{align}
    In fact, it is more illuminating if we write the previous equations in the following form
    \begin{align}
        w (1-p^\xi) &= \min (w,w^\xi) \\
        w (1-p^\eta) &= \min (w,w^{\eta}) 
    \end{align}
    Note that under spin-flip $\sigma \mapsto \sigma^{\xi,i}$ (where $i$ is one endpoint of the fixed edge $e$), we have
    \begin{align}
        w,w^\xi, w^\eta, w^{\xi\eta} &\mapsto w^\xi, w, w^{\xi\eta}, w^{\eta}
    \end{align}
    And thus $w(1-p^\xi)$ is invariant under the spin-flip. 
    If we compare the 2 possible cases (corresponding to $\hat{\eta}=1,0$) in Eq. \eqref{eq:cluster-xi-eta-2} and \eqref{eq:cluster-xi-eta-4}, we see that they differ by the invariant variable $w(1-p^\xi)$ and thus we only need to consider one of the two cases, say that corresponding to Eq. \eqref{eq:cluster-xi-eta-2}, i.e.,
    \begin{equation}
        w(p^\eta -p^\sigma) = [w-w^\eta] 1\{w^\eta <w <w^\xi\} +[w^{\xi} - w^{\xi\eta}] 1\{w^{\xi},w^{\eta} <w\}
    \end{equation}
    Under spin flip $\sigma \mapsto \sigma^{\xi,i}$, we have
    \begin{equation}
        w(p^\eta -p^\sigma) \mapsto  [w^{\xi}-w^{\xi\eta}] 1\{w^{\xi\eta} <w^{\xi} <w\} +[w - w^{\eta}] 1\{w,w^{\xi\eta} <w^\xi\}
    \end{equation}
    Note that the condition $w^{\xi \eta} <w^\xi$ is equivalent to $w^{\eta} <w$ and thus we can write
    \begin{align}
        w(p^\eta -p^\sigma) &\mapsto [w^{\xi}-w^{\xi\eta}] 1\{w^{\xi},w^{\eta} <w \} +[w - w^{\eta}] 1\{w^{\eta} <w <w^\xi\} \\
        &= w(p^\eta -p^\sigma)
    \end{align}
    Hence, the term $w(p^\eta -p^\sigma)$ is invariant under spin-flip $\sigma \mapsto \sigma^{\xi,i}$, and thus the statement follows.
    
\end{proof}

\begin{theorem}
    \label{app-thm:corr-U1}
    Let $\dP_{G,\beta}$ denote the joint probability on $(\hat{\xi},\hat{\eta},\sigma,\tau)$ on a finite graph $G$ defined previously. Then there exists constants $c>0$ depending only on $\beta$ and the degree (number of nearest neighbors) of the lattice sites $0,R$ in $G$ such that
    \begin{equation}
        c\dP_{G,\beta}[ 0\lr_{\hat{\xi}\hat{\eta}} R] \le \bra \cos 2(\theta_0 -\theta_R)\ket_{G,\beta} \le 2\dP_{G,\beta}[ 0\lr_{\hat{\xi}\hat{\eta}} R]
    \end{equation}
    Where $\hat{\xi}\hat{\eta}$ is the intersection of the two subgraphs $\hat{\xi},\hat{\eta}$ (edge-wise multiplication when viewed as a map $E\mapsto \{0,1\}$). Moreover, if $\xi,\eta$ denote the sign of the $x,y$ components of the XY spins $\sigma =e^{i\theta}$, then
    \begin{equation}
        \bra \xi_0 \eta_0 \xi_R \eta_R \ket_{G,\beta} = \dP_{G,\beta}[ 0\lr_{\hat{\xi}\hat{\eta}} R]
    \end{equation}
\end{theorem}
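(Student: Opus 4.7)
The plan is to first prove the exact identity $\bra \xi_0\eta_0\xi_R\eta_R\ket = \dP[0\lr_{\hat\xi\hat\eta}R]$ via the cluster-flip invariance of Theorem \eqref{app-thm:cluster-flip-U1}, and then use the $U(1)$ rotational invariance of the Hamiltonian in Eq. \eqref{eq:U1-Z2} to transfer this identity, up to multiplicative constants, to the higher-order XY correlator $\bra \cos 2(\theta_0-\theta_R)\ket$.

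For the sign identity, decompose the expectation along $\{0\lr_{\hat\xi\hat\eta}R\}$ and its complement. On the connected event, the indicator $1\{\xi_i\xi_j=1\}$ built into $p_e^\xi$ forces $\xi$ to be constant along any $\hat\xi$-cluster (and analogously for $\hat\eta$), so $\xi_0=\xi_R,\ \eta_0=\eta_R$ and hence $\xi_0\eta_0\xi_R\eta_R = +1$. On the complement, at least one of $0\not\lr_{\hat\xi}R$ or $0\not\lr_{\hat\eta}R$ holds; in the first case Theorem \eqref{app-thm:cluster-flip-U1} lets us flip $\xi$ on $C_0(\hat\xi)$ without altering $\dP$, which negates $\xi_0$ while leaving $\xi_R,\eta_0,\eta_R$ invariant, so the conditional expectation vanishes; the second case is symmetric.

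For the main inequality, $H$ depends only on $\nabla_e\theta$ and is thus invariant under the global rotation $\theta_i\mapsto\theta_i+\alpha$; averaging the spin marginal over $\alpha$ yields $\bra\cos 2\theta_0\cos 2\theta_R\ket = \bra\sin 2\theta_0\sin 2\theta_R\ket = \tfrac{1}{2}\bra\cos 2(\theta_0-\theta_R)\ket$, so that
\begin{equation}
    \bra\cos 2(\theta_0-\theta_R)\ket = 8\bra \xi_0\eta_0\xi_R\eta_R\,|\cos\theta_0\sin\theta_0||\cos\theta_R\sin\theta_R|\ket
\end{equation}
via $\sin 2\theta = 2\xi\eta|\cos\theta\sin\theta|$. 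The magnitude factors are non-negative and invariant under both $\xi$-flips ($\theta_i\mapsto\pi-\theta_i$) and $\eta$-flips ($\theta_i\mapsto-\theta_i$), so the same cluster-flip argument forces the contribution from $\{0\not\lr_{\hat\xi\hat\eta}R\}$ to vanish, and on the connected event $\xi_0\eta_0\xi_R\eta_R = +1$. The upper bound then follows from $|\cos\theta\sin\theta| = \tfrac{1}{2}|\sin 2\theta| \le \tfrac{1}{2}$, giving the claimed factor of $2$.

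The main obstacle is the lower bound, which requires a quantitative non-vanishing estimate for $|\cos\theta_0\sin\theta_0\cos\theta_R\sin\theta_R|$ on the event $\{0\lr_{\hat\xi\hat\eta}R\}$. The plan is a finite-energy (insertion-tolerance) argument: let $A_0$ be the event that $\theta_0$ lies within a small fixed neighborhood of $\pi/4$ modulo $\pi/2$, which forces $|\cos\theta_0\sin\theta_0|\ge\delta$ for some $\delta>0$, and let $A_R$ be the analogous event. Since the Boltzmann weights involving $\theta_0$ are bounded above and below by $e^{\pm 2\beta\kappa_{\max}\deg(0)}$, conditioning $\theta_0$ to $A_0$ distorts the local conditional density by a factor depending only on $\beta$ and $\deg(0)$; the same holds at $R$. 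The key technical step is to track how this local conditioning also perturbs the edge probabilities $p_e^\xi, p_e^\eta$ at edges incident to $\{0,R\}$, which in turn affects connectivity; since both the spin weights and the edge probabilities at these finitely many edges are uniformly bounded, the connectivity probability changes by at most a constant depending only on $\beta,\deg(0),\deg(R)$. Combining yields $\bra|\cos\theta_0\sin\theta_0\cos\theta_R\sin\theta_R|\,1\{0\lr_{\hat\xi\hat\eta}R\}\ket\ge c\,\dP[0\lr_{\hat\xi\hat\eta}R]$ with $c = c(\beta,\deg(0),\deg(R))>0$, completing the proof.
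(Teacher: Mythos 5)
Your proof takes essentially the same route as the paper, and the first two parts are essentially correct as stated. The sign identity and the upper bound match the paper's argument exactly: cluster-flip invariance kills the disconnected contribution, the indicators built into $p_e^\xi, p_e^\eta$ force $\xi$, $\eta$ constant on $\hat\xi$- and $\hat\eta$-clusters so that $\xi_0\eta_0\xi_R\eta_R=+1$ on the connected event, and the factor $|\cos\theta\sin\theta|\le 1/2$ gives the $2$ on top. Making the step $\bra\cos 2\theta_0\cos2\theta_R\ket=\bra\sin 2\theta_0\sin2\theta_R\ket$ explicit via global $U(1)$ rotation invariance is a nice touch (the paper uses it implicitly).

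For the lower bound, your sketch has the right idea (finite energy / insertion tolerance near $\theta_0,\theta_R\approx\pi/4$ mod $\pi/2$), but as written it glosses over the step that the paper makes precise and that actually carries the burden of the argument. The difficulty is that the edge variables $\hat\sigma_e$ adjacent to $0$ are functions of $\theta_0$ through hard indicator constraints (e.g.\ $1\{\xi_0\xi_j=\eta_0\eta_j=1\}$), so ``conditioning $\theta_0$ to $A_0$'' does not merely reweight a density by a bounded factor --- it can deterministically kill the connectivity event on some configurations of the outside spins. The paper resolves this by first conditioning on $\theta_{N_0}$, which decouples the measure into a product over (i) the finite star of edges adjacent to $0$ and (ii) the rest, and then decomposing $\{0\lr_{\hat\sigma}R\}$ by the reachable set $S\subseteq N_0$ in $\hat\sigma_{\not\sim 0}$. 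Crucially, since all of $S$ lies in the $R$-cluster of $\hat\sigma_{\not\sim 0}$, the signs $\xi_j,\eta_j$ are constant across $j\in S$, so there \emph{exists} a compatible quadrant for $\theta_0$ inside your set $A_0$; only then does the local finite-energy bound $\dP[\,0\lr_{\hat\sigma|\sim 0}S,\,\theta_0\in A_0\mid\theta_{N_0}]\ge\delta(\beta,\deg 0)$ hold uniformly in $\theta_{N_0}$. Your phrase ``the connectivity probability changes by at most a constant'' is where this decoupling and quadrant-compatibility argument needs to go. Fill that in along the lines above (and then repeat at $R$) and your lower bound becomes the paper's.
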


Before starting the proof, we remark that the percolation event $\{0\lr_{\hat{\xi}\hat{\eta}} R \}$ only depends on each subgraph $\hat{\xi},\hat{\eta}$ implicitly through their intersection, and thus if we were to fix a given subgraph $\hat{\sigma}$ and integrate the probability $\dP$ over all $\hat{\xi},\hat{\eta}$ which have an intersection $\hat{\xi}\hat{\eta}=\hat{\sigma}$, we would obtain a probability distribution on $(\hat{\sigma},\sigma,\tau)$. 
This is equal to that constructed by using the edge probability $p_e^\sigma$ as done in the main text. 
The ``hidden" parameters $\hat{\xi},\hat{\eta}$ were necessary to establish the cluster-flip property (and thus the correlation-percolation correspondence) but is not necessary in defining $\dP [0\lr_{\hat{\sigma}} R]$.
Hence, 
\begin{equation}
    \bra \cos 2(\theta_0 -\theta_R)\ket_{G,\beta} \cong \dP_{G,\beta}[ 0\lr_{\hat{\sigma}} R]
\end{equation}

We also note that for the intent of this paper (in which we prove the absence of a floating phase) the upper bound is sufficient and much simpler. 
However, for the sake of completeness, we will also prove the lower bound.
\begin{proof}[Proof of Upper Bound]
    The proof follows that given in Ref. \cite{dubedat2022random}, though we modify it to fit our consideration of free boundary conditions (where as the proof in Ref. \cite{dubedat2022random} considered fixed boundary conditions).
    For notation simplicity, we shall omit the subscripts $G,\beta$ unless otherwise stated.
    For the $U(1)\times \dZ_2$ Hamiltonian in Eq. \eqref{eq:U1-Z2}, we have
    \begin{align}
        \bra \cos 2(\theta_0 -\theta_R) \ket &= \bra \cos 2 \theta_0 \cos2\theta_R \ket +\bra \sin 2\theta_0 \sin 2\theta_R\ket \\
        &= 2\bra \sin 2\theta_0 \sin 2\theta_R\ket \\
        &= 2^3 \bra \sin \theta_0 \cos \theta_0 \sin \theta_R \cos \theta_R \ket
    \end{align}
    Using our constructed joint probability $\dP$, we find that
    \begin{align}
        \bra \sin \theta_0 \cos \theta_0 \sin \theta_R \cos \theta_R \ket &= \dE [\sin \theta_0 \cos \theta_0 \sin \theta_R \cos \theta_R (1\{0\lr_{\hat{\xi}} R\} +1\{0\not\lr_{\hat{\xi}} R\})] \\
        &= \dE [\sin \theta_0 \cos \theta_0 \sin \theta_R \cos \theta_R 1\{0\lr_{\hat{\xi}} R\}]
    \end{align}
    Where we used the fact that $\dP$ is invariant under cluster-spin flip and thus the second term is invariant under $\cos_0 \mapsto -\cos \theta_0$ (and keep the other terms invariant).
    Hence, the second term must be $=0$.
    Similarly, we have
    \begin{align}
        \dE [\sin \theta_0 \cos \theta_0 \sin \theta_R \cos \theta_R 1\{0\lr_{\hat{\xi}} R\}] &= \dE [\sin \theta_0 \cos \theta_0 \sin \theta_R \cos \theta_R 1\{0\lr_{\hat{\xi}} R\} (1\{0\lr_{\hat{\eta}} R\} +1\{0\not\lr_{\hat{\eta}} R\})] \\
        &= \dE [\sin \theta_0 \cos \theta_0 \sin \theta_R \cos \theta_R 1\{0\lr_{\hat{\xi}\hat{\eta}} R\}]
    \end{align}
    Since the term $\sin \theta_0 \cos \theta_0 \sin \theta_R \cos \theta_R \le 1/4$, the upper bounded with the extra factor of $2$ follows.
    Note that the same argument shows that
    \begin{equation}
        \bra \xi_0 \eta_0 \xi_R \eta_R \ket = \dP[ 0\lr_{\hat{\xi}\hat{\eta}} R]
    \end{equation}
\end{proof}

\begin{proof}[Proof of Lower Bound]
    Continuing the process in the proof of the upper bound, let $N_0$ denote the neighboring lattice sites of site 0 and let $S_0 \subseteq N_0$ denote the subset of lattice sites $i\in N_0$ which are connected to lattice site $R$ in $\hat{\sigma}$, if we remove all the edge adjacent to 0.
    Since $S_0$ only depends on $\hat{\sigma}$ implicitly through its value on edges not adjacent to 0, we write $S_0(\hat{\sigma}_{\not\sim 0})$.
    Note that $0\lr_{\hat{\sigma}} R$ if and only if there exist nonempty $\emptyset \ne S \subseteq N_0$ such that $S_0(\hat{\sigma}) = S$ and that $0$ is connected to $S$ within $\hat{\sigma}$, restricted on edges $e$ adjacent to $\sim 0$ (which we denote by $0\lr_{\hat{\sigma}|\sim 0} R$).
    Therefore,
    \begin{align}
        \dE [\sin \theta_0 \cos \theta_0 \sin \theta_R \cos \theta_R 1\{0\lr_{\hat{\sigma}} R\}] &=  \sum_{\emptyset \ne S\subseteq N_0} \dE[\sin \theta_0 \cos \theta_0 \sin \theta_R \cos \theta_R 1\{S_0(\hat{\sigma}_{\not\sim 0}) = S\} 1\{0\lr_{\hat{\sigma}|\sim 0} S\}]\\
        &\ge \sum_{\emptyset \ne S\subseteq N_0} \dE[\sin \theta_0 \cos \theta_0 \sin \theta_R \cos \theta_R 1\{S_0(\hat{\sigma}_{\not\sim 0}) = S\} 1\{0\lr_{\hat{\sigma}|\sim 0} S\} \\
        &\quad\quad\quad \times 1\{\pi/8 < \theta_0,\theta_R < 3\pi/8\}] \\
        &\ge 1/8 \sum_{\emptyset \ne S\subseteq N_0}\dP[S_0(\hat{\sigma}_{\not\sim 0}) = S, 0\lr_{\hat{\sigma}|\sim 0} S, \pi/8 < \theta_0,\theta_R < 3\pi/8]
    \end{align}
    Notice that if we fix the spin values on $N_0$, then we have effectively partitioned the graph structure into decoupled systems consisting of edges adjacent to $\sim 0$ and those not adjacent $\sim 0$ (the fixed spin value act as boundary conditions of the two partitions). 
    Therefore, the conditional probability of fixing spins on $N_x$ is given by
    \begin{align}
        \dP[S_0(\hat{\sigma}_{\not\sim 0}) = S, 0\lr_{\hat{\sigma}|\sim 0} S, \pi/8 < \theta_0,\theta_R < 3\pi/8|\theta_{N_0}] &= \dP[S_0(\hat{\sigma}_{\not\sim 0}) = S, \pi/8 < \theta_R < 3\pi/8|\theta_{N_0}]\\
        &\quad\quad\times \dP[0\lr_{\hat{\sigma}|\sim 0} S, \pi/8 < \theta_0 < 3\pi/8|\theta_{N_0}]
    \end{align}
    It's straightforward \cite{dubedat2022random} to check that there exists some constant $\delta(\beta, \deg 0)>0$ depending on $\beta$ and the degree of site $0$ such that
    \begin{equation}
        \dP[0\lr_{\hat{\sigma}|\sim 0} S, \pi/8 < \theta_0 < 3\pi/8|\theta_{N_0}] \ge \delta (\beta, \deg 0)
    \end{equation}
    Indeed, this probability corresponds to the finite system consisting of edges $e$ adjacent to site 0 with boundary conditions $\theta_{N_0}$. 
    Therefore,
    \begin{align}
        \dP[S_0(\hat{\sigma}_{\not\sim 0}) = S, 0\lr_{\hat{\sigma}|\sim 0} S, \pi/8 < \theta_0,\theta_R < 3\pi/8|\theta_{N_0}] &\ge \delta (\beta, \deg 0) \times \dP[S_0(\hat{\sigma}_{\not\sim 0}) = S, \pi/8 < \theta_R < 3\pi/8|\theta_{N_0}]
    \end{align}
    Integrating over all fixed spins $\theta_{N_0}$ and substituting back, we find that
    \begin{equation}
        \dE [\sin \theta_0 \cos \theta_0 \sin \theta_R \cos \theta_R 1\{0\lr_{\hat{\sigma}} R\}] \ge \frac{1}{8} \delta (\beta, \deg 0) \times\dP[0\lr_{\hat{\sigma}} R, \pi/8<\theta_R <3\pi/8]
    \end{equation}
    Repeat the argument for the lattice site $R$ to obtain
    \begin{equation}
        \dE [\sin \theta_0 \cos \theta_0 \sin \theta_R \cos \theta_R 1\{0\lr_{\hat{\sigma}} R\}] \ge \frac{1}{8} \delta (\beta, \deg 0)\delta (\beta, \deg R)\times \dP[0\lr_{\hat{\sigma}} R]
    \end{equation}
    Hence, the lower bound follows.
\end{proof}
\subsection{Relation between Percolation Events}
In the proof of Theorem \eqref{thm:cluster}, we noted that the key observation is that $p^\tau_e (\sigma,\tau) \ge p^\sigma_e (\sigma,\tau)$ for all spin configurations $(\sigma,\tau)$. Here we provide a short proof of the statement.
\begin{theorem}
    \label{app-thm:cluster-relation}
    Let $p_e^\tau,p_e^\sigma$ be defined with respect to the critical $U(1)\times \dZ_2$ Hamiltonian in Eq. \eqref{eq:U1-Z2} as shown in Appendices \eqref{app:cluster-U1} and \eqref{app:cluster-Z2}. Then
    \begin{equation}
        p_e^\tau (\sigma,\tau) \ge p_e^\sigma (\sigma,\tau)
    \end{equation}
    For all spin configurations $(\sigma,\tau)$.
\end{theorem}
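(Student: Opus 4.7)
The plan is to split into cases based on the sign of $\tau_i\tau_j$. When $\tau_i\tau_j = -1$, the prefactor $1+\tau_i\tau_j$ in the Hamiltonian vanishes, so $w_e(\sigma,\tau) = w_e(\sigma^{\xi,i},\tau) = w_e(\sigma^{\eta,i},\tau) = w_e(\sigma^{\xi\eta,i},\tau) = 1$. The strict inequalities in the indicator defining $p_e^\sigma$ then fail, forcing $p_e^\sigma = 0$, so $p_e^\tau \ge 0 = p_e^\sigma$ is trivial.

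For $\tau_i\tau_j = +1$, the first step is to verify by direct computation the factorization $p_e^\sigma = p_e^\xi \cdot p_e^\eta$. Using the product-to-sum identity $\cos(\theta_i-\theta_j) = \cos\theta_i\cos\theta_j + \sin\theta_i\sin\theta_j$ (and the analogous reflection formulas that give $w^\xi,w^\eta,w^{\xi\eta}$), one finds $w^{\xi\eta}/w = (w^\xi/w)(w^\eta/w)$, which allows $1 + w^{\xi\eta}/w - w^\xi/w - w^\eta/w$ to factor as $(1-w^\xi/w)(1-w^\eta/w)$. Likewise, direct computation gives $w_e(\sigma,\tau^i)/w_e(\sigma,\tau) = e^{-2\beta\kappa_e\cos(\theta_i-\theta_j)}$, so $p_e^\tau = [1 - e^{-2\beta\kappa_e\cos(\theta_i-\theta_j)}]\mathbf{1}\{\cos(\theta_i-\theta_j)>0\}$ in this sector.

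Next, on the support of $p_e^\sigma$ one has $\cos\theta_i\cos\theta_j > 0$ and $\sin\theta_i\sin\theta_j > 0$, so both are nonnegative; I set $a = 4\beta\kappa_e\cos\theta_i\cos\theta_j \ge 0$ and $b = 4\beta\kappa_e\sin\theta_i\sin\theta_j \ge 0$. Note that these also imply $\cos(\theta_i-\theta_j) > 0$, so the indicator of $p_e^\tau$ is satisfied as well. The inequality then reduces to the core analytic claim
\begin{equation}
    1 - e^{-(a+b)/2} \ge (1-e^{-a})(1-e^{-b}), \qquad a,b \ge 0.
\end{equation}
Off the support of $p_e^\sigma$, $p_e^\sigma = 0$ and the inequality is again trivial.

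To prove the core inequality, the plan is to substitute $u = e^{-a/2}, v = e^{-b/2} \in (0,1]$, after which the claim becomes $u^2 + v^2 \ge uv(1 + uv)$. This follows by chaining AM-GM $u^2+v^2 \ge 2uv$ with the bound $uv \le 1$, i.e., $2uv \ge uv(1+uv)$. The main obstacle is this core inequality: once one finds the right substitution it becomes a one-line AM-GM, but the substitution is not transparent from the original form, and one must be careful that the support condition of $p_e^\sigma$ is exactly what is needed to make both $a$ and $b$ nonnegative so that the bound applies.
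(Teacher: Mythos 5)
Your proof is correct. You take a slightly different route than the paper, and the difference is worth noting. The paper works directly with the difference $p_e^\tau - p_e^\sigma$, uses the identities $\cos(\theta_i - \theta_j) = \cos\theta_i\cos\theta_j + \sin\theta_i\sin\theta_j$ and $\cos(\theta_i + \theta_j) = \cos\theta_i\cos\theta_j - \sin\theta_i\sin\theta_j$ to factor out $e^{-2\beta\kappa_e\cos(\theta_i-\theta_j)}$ and isolate a $\cosh(2\beta\kappa_e\cos(\theta_i+\theta_j))$, and then finishes with $\cosh \ge 1$ together with $\cos(\theta_i-\theta_j)>0$. You instead invoke the factorization $p_e^\sigma = p_e^\xi\, p_e^\eta$ (which the paper records in the proof of Theorem \eqref{app-thm:cluster-flip-U1} but does not deploy here) to reduce the claim to the self-contained scalar inequality $1 - e^{-(a+b)/2} \ge (1-e^{-a})(1-e^{-b})$ for $a,b\ge 0$, which then falls to a substitution and AM--GM. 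At bottom the two elementary bounds are the same — the paper's $\cosh\ge 1$ is exactly your $u^2+v^2\ge 2uv$ after multiplying by $uv$, and the paper's $e^{-2\beta\kappa_e\cos(\theta_i-\theta_j)}\le 1$ is your $uv\le 1$ — but your packaging is more modular: the spin-model content is isolated in the identification of $a,b$, and the analytic content is a clean one-line lemma. Your handling of the degenerate sector $\tau_i\tau_j=-1$ (all $\sigma$-flipped weights equal $1$, so the strict indicators in $p_e^\sigma$ fail) is also a tidy alternative to the paper's equivalent observation that $p_e^\sigma=0$ whenever the sign constraints are violated.
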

\begin{proof}
    Note we can rewrite the edge probabilities as
    \begin{align}
        \label{eq:cluster-edge-tau-explicit}
        p^\tau_e &= [1-e^{-2\beta \kappa_e \cos (\theta_i -\theta_j)}] 1\{\tau_i\tau_j  \cos (\theta_i-\theta_j) > 0\} \\
        \label{eq:cluster-edge-sigma-explicit}
        p_e^\sigma &= [1+e^{4\beta \kappa_e \cos (\theta_i-\theta_j)} -e^{-4\beta \kappa_e \sin \theta_i \sin \theta_j}-e^{-4\beta \kappa_e \cos \theta_i \cos \theta_j}] 1\{\tau_i\tau_j = \xi_i\xi_j =\eta_i\eta_j = +1\}
    \end{align}
    Where $\xi,\eta = \pm 1$ are the signs of the $x,y$ components of the spin $\sigma$. 
    Notice that if the condition in $p_e^\sigma$ is not satisfied, i.e., we do not have $\tau_i\tau_j =\xi_i\xi_j =\eta_i\eta_j =1$, then $p_e^\sigma=0$ and must be trivially $\le p_e^\tau$.
    Hence, we shall consider the case where the condition is satisfied.
    In this case, we see that the condition for $p_e^\tau$ is also satisfied, i.e., $\tau_i \tau_j \cos (\theta_i -\theta_j) >0$.
    Hence,
    \begin{align}
        p_e^\tau -p_e^\sigma &= [e^{-4\beta \kappa_e \sin \theta_i \sin \theta_j}+e^{-4\beta \kappa_e \cos \theta_i \cos \theta_j}] -e^{-2\beta \kappa_e \cos (\theta_i -\theta_j)} [1+e^{-2\beta \kappa_e \cos (\theta_i -\theta_j)}] \\
        &=e^{-2\beta \kappa_e \cos (\theta_i -\theta_j)} [2 \cosh (2\beta \kappa_e \cos (\theta_i+\theta_j)) -1-e^{-2\beta \kappa_e \cos (\theta_i -\theta_j)}] \\
        &\ge 0
    \end{align}
    Where we used the fact that $\cos (\theta_i-\theta_j) >0$ and that $\cosh \ge 1$.
\end{proof}

\section{(Random) Cluster Representation - General Regime}
\label{app:cluster-general}

In the section, we will extend the arguments of Appendix \eqref{app:cluster} to the general $U(1)\times \dZ_2$ Hamiltonian in Eq. \eqref{eq:U1-Z2-general}.
As discussed in the main text, let $\dP[\sigma,\tau]$ be the probability distribution of the spin configurations defined by the $U(1)\times \dZ_2$ Hamiltonian in Eq. \eqref{eq:U1-Z2}, i.e.,
\begin{align}
    \dP[(\sigma,\tau)] &\propto \prod_{e=ij} w_e(\sigma,\tau) \\
    \label{eq:edge-weight-general}
    w_e(\sigma,\tau) &= e^{\beta [\kappa_e (1+\tau_i\tau_j) \cos(\nabla_e \theta) +\lambda_e \tau_i\tau_j +\alpha_e \cos (2\nabla_e \theta)]}
\end{align}

\subsection{$\dZ_2$ Correlations}
\label{app:cluster-Z2-general}
Similar to that discussed in the main text for the standard Ising model (see Fig. \ref{fig:cluster}) and the critical regime in Appendix \eqref{app:cluster}, for a fixed spin configuration $(\sigma,\tau)$, define a random variable of subgraphs $\hat{\tau}:E\to\{0,1\}$ via the edge probability
\begin{align}
    p_e^\tau (\sigma,\tau) &= \left[ 1-\frac{w_e(\sigma,\tau^i)}{w_e(\sigma,\tau)}\right] 1\{w_e(\sigma,\tau^i) < w_e(\sigma,\tau)\}\\
    &= \left[ 1-e^{-2\beta \tau_i \tau_j (\kappa_e \cos (\nabla_e \theta) +\lambda_e)} \right] 1\{\tau_i \tau_j (\kappa_e \cos (\nabla_e \theta) +\lambda_e) > 0\}
\end{align}
So that
\begin{align}
    \dP[\hat{\tau}|\sigma,\tau] &= \prod_{e\in \hat{\tau}} p_e^\tau(\sigma,\tau) \prod_{e\notin \hat{\tau}} (1-p_e^\tau (\sigma,\tau))\\
    \dP[\hat{\tau},\sigma,\tau] &= \dP[\hat{\tau}|\sigma,\tau] \dP[\sigma,\tau]
\end{align}

\begin{theorem}[\textbf{Cluster-Flip Invariance}]
    \label{app-thm:cluster-flip-Z2-general}
    Let $\dP_{G,\beta}$ be the probability distribution on $(\hat{\tau},\sigma,\tau)$ as befined previously on a finite graph $G$. Then $\dP$ is invariant under cluster flips with respect to $\tau$, i.e.,
    \begin{equation}
        \dP_{G,\beta}[ (\hat{\tau},\sigma,\tau)] =\dP_{G,\beta}[(\hat{\tau},\sigma,\tau^{C_0 (\hat{\tau})})]
    \end{equation}
    Where $C_0(\hat{\tau})$ is the cluster in $\hat{\tau}$ intersecting the lattice site $0$ and $\tau^{C_0(\hat{\tau})}$ denotes flipping the spins of $\tau$ only in $C_0(\hat{\tau})$.
    In particular, the $\dZ_2$ correlation-percolation is established, i.e.,
    \begin{equation}
        \bra \tau_0\tau_R \ket_{G,\beta} = \dP_{G,\beta}[0\lr_{\hat{\tau}} R]
    \end{equation}
\end{theorem}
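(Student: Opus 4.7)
The plan is to mimic the proof of Theorem~\ref{app-thm:cluster-flip-Z2} from the critical regime. The main new observation is that the two additional contributions in the general edge weight~\eqref{eq:edge-weight-general} --- namely $\lambda_e \tau_i \tau_j$ and $\alpha_e \cos(2\nabla_e\theta)$ --- both behave compatibly with the Wolff construction for the Ising sector: the $\alpha_e$ term does not depend on $\tau$ at all, so it cancels from every ratio $w_e(\sigma,\tau^i)/w_e(\sigma,\tau)$ entering the definition of $p_e^\tau$, while the $\lambda_e$ term depends on $\tau$ only through the product $\tau_i\tau_j$, exactly like the $\kappa_e(1+\tau_i\tau_j)\cos\nabla_e\theta$ term already treated in the critical case. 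In particular, both are invariant under the joint flip $(\tau_i,\tau_j)\mapsto(-\tau_i,-\tau_j)$, so the bookkeeping of the critical-case proof transfers essentially verbatim.

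For cluster-flip invariance, I would expand
\[
\frac{\dP[(\hat\tau,\sigma,\tau^{C_0(\hat\tau)})]}{\dP[(\hat\tau,\sigma,\tau)]} = \frac{\dP[\hat\tau\mid\sigma,\tau^{C_0(\hat\tau)}]}{\dP[\hat\tau\mid\sigma,\tau]} \cdot \frac{\dP[\sigma,\tau^{C_0(\hat\tau)}]}{\dP[\sigma,\tau]}
\]
and note that any edge with both endpoints on the same side of $\partial C_0(\hat\tau)$ leaves $\tau_i\tau_j$ unchanged and hence drops out. Only boundary edges $e=ij\in \partial C_0(\hat\tau)$ survive, and by construction of the cluster they satisfy $\hat\tau_e=0$. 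Each such edge then contributes a factor $(1-p_e^\tau(\sigma,\tau^i))\,w_e(\sigma,\tau^i)/[(1-p_e^\tau(\sigma,\tau))\,w_e(\sigma,\tau)]$, which equals unity by virtue of the identity $(1-p_e^\tau(\sigma,\tau))\,w_e(\sigma,\tau) = \min\!\bigl(w_e(\sigma,\tau),\,w_e(\sigma,\tau^i)\bigr)$ together with its manifest symmetry under $\tau\leftrightarrow\tau^i$.

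The $\dZ_2$ correlation-percolation identity $\bra\tau_0\tau_R\ket=\dP[0\lr_{\hat\tau}R]$ then follows by the standard argument of Eqs.~\eqref{eq:correspondence-Is-explain-1}--\eqref{eq:correspondence-Is-explain-2}: the $\{0\not\lr_{\hat\tau}R\}$ contribution vanishes by flipping $C_0(\hat\tau)$ (which reverses $\tau_0$ but fixes $\tau_R$), while on $\{0\lr_{\hat\tau}R\}$ the edge-inclusion rule forces $\tau_0\tau_R=+1$ along any connecting path. The main subtlety I anticipate --- distinct from the critical case --- is that the sign condition built into $p_e^\tau$ now reads $\tau_i\tau_j\bigl[\kappa_e\cos\nabla_e\theta+\lambda_e\bigr]>0$ rather than simply $\tau_i\tau_j=+1$, so one has to check that the cluster structure is consistent with a well-defined, path-independent value of $\tau_0\tau_R$ on the connectivity event. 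Fortunately the joint $(\tau_i,\tau_j)$-invariance of each $w_e$ ensures that $\tau$ is determined by $(\sigma,\hat\tau)$ only up to cluster-wise global flips, so $\tau_0\tau_R$ is constant on $\{0\lr R\}$, and compatibility with the strictly positive edge weights along any single path pins its value to $+1$.
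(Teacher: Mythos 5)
The cluster-flip-invariance half of your argument is sound and tracks the paper's (terse) proof, which simply states it is ``same as in Theorem~\ref{app-thm:cluster-flip-Z2}.'' Your observation that the $\alpha_e\cos(2\nabla_e\theta)$ term is $\tau$-independent while the $\lambda_e\tau_i\tau_j$ term transforms identically to $\kappa_e(1+\tau_i\tau_j)\cos\nabla_e\theta$ under $\tau_i\mapsto -\tau_i$ is precisely why the critical-case bookkeeping, reduced to the manifestly $\tau\leftrightarrow\tau^i$-symmetric identity $(1-p_e^\tau(\sigma,\tau))\,w_e(\sigma,\tau)=\min\{w_e(\sigma,\tau),\,w_e(\sigma,\tau^i)\}$ on boundary edges, carries over verbatim.

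For the correlation-percolation identity, however, the resolution you propose for the subtlety you yourself flagged does not close the gap. The inclusion rule $p_e^\tau>0$ requires $\tau_i\tau_j(\kappa_e\cos\nabla_e\theta+\lambda_e)>0$, so on an edge $e\in\hat\tau$ one has $\tau_i\tau_j=\mathrm{sgn}(\kappa_e\cos\nabla_e\theta+\lambda_e)$, \emph{not} $\tau_i\tau_j=+1$. The hypotheses $\lambda_e\ge\alpha_e\ge 0$, $\alpha_e\le\kappa_e/2$ allow $\lambda_e<\kappa_e$ (indeed $\lambda_e=0$ is permitted in the critical case), so the effective Ising coupling $\kappa_e\cos\nabla_e\theta+\lambda_e$ can be negative; along a connecting path $\gamma\subseteq\hat\tau$ one then gets $\tau_0\tau_R=\prod_{e\in\gamma}\mathrm{sgn}(\kappa_e\cos\nabla_e\theta+\lambda_e)$, which takes the value $-1$ with positive probability. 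Your clause ``compatibility with the strictly positive edge weights along any single path pins its value to $+1$'' is a non-sequitur: $w_e>0$ is automatic (it is an exponential) and says nothing about the sign of $\tau_i\tau_j$. Concretely, on a single edge $e=01$ with $\kappa_e=1$, $\lambda_e=\alpha_e=0$ one computes
\begin{equation*}
\bra\tau_0\tau_1\ket=\frac{I_0(2\beta)-1}{I_0(2\beta)+1},\qquad \dP[0\lr_{\hat\tau}1]=\frac{A_+-A_-}{I_0(2\beta)+1},\qquad A_\pm=\int_{\pm\cos\Delta>0}\frac{d\Delta}{2\pi}\,e^{2\beta\cos\Delta},
\end{equation*}
and since $A_-<\tfrac{1}{2}$ these differ. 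Closing this step requires either controlling the frustrated contribution (replacing $\dP[0\lr_{\hat\tau}R]$ by a signed percolation probability, \`a la the Edwards--Sokal treatment of mixed-sign couplings) or an extra hypothesis such as $\lambda_e\ge\kappa_e$ guaranteeing the effective coupling is nonnegative; neither your write-up nor the paper's one-line proof supplies such an argument. Contrast this with the $\hat\sigma$-sector of Theorem~\ref{app-thm:corr-U1}, where the inclusion rule explicitly enforces $\tau_i\tau_j=\xi_i\xi_j=\eta_i\eta_j=+1$ and the sign genuinely is pinned.
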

\begin{proof}
    Same as that in Theorem \eqref{app-thm:cluster-flip-Z2}.
\end{proof}
\subsection{$U(1)$ Correlations}
\label{app:cluster-U1-general}

Similar to Appendix \eqref{app:cluster-U1}, we can define $p_e^\xi, p_e^\eta,p_e^\sigma$. However, this time (as discussed in the main text in Sec. \eqref{sec:general}), we shall choose the sector where $\tau_i\tau_j = +1$, so that effectively, the edge weights in Eq. \eqref{eq:edge-weight-general} become
\begin{equation}
    \tilde{w}_e(\sigma) = \exp{\left[2\beta \alpha_e \left(\cos \nabla_e \theta + \frac{\kappa_e}{2\alpha_e}\right)^2 + \text{const}_e\right]} \equiv \rho_e (\cos \nabla_e \theta)
\end{equation}
In the parameter regime where $\kappa_e\ge 2\alpha_e$, we see that $x\mapsto \rho_e (x)$ is convex and increasing within the interval $[-1,1]$.
Note that this is the exact condition used in Ref. \cite{dubedat2022random} (albeit they are considering a model with only XY spins), and thus theorems analogous to those in Appendix \eqref{app:cluster} of the critical regime can also be proven.

To be concrete, let us define
\begin{align}
p_e^\xi (\sigma,\tau) &= \left[ 1-\frac{\tilde{w}_e(\sigma^{\xi,i})}{\tilde{w}_e(\sigma)}\right] 1\{\tilde{w}_e(\sigma^{\xi,i}) < \tilde{w}_e(\sigma), \tau_i\tau_j = +1\}\\
p_e^\eta (\sigma,\tau) &= \left[ 1-\frac{\tilde{w}_e(\eta^{\xi,i})}{\tilde{w}_e(\sigma)}\right] 1\{\tilde{w}_e(\sigma^{\eta,i}) < \tilde{w}_e(\sigma), \tau_i\tau_j = +1\}\\
p_e^{\sigma}(\sigma,\tau)&=\left[1 +\frac{\tilde{w}_e(\sigma^{\xi\eta,i})}{\tilde{w}_e(\sigma)}- \frac{\tilde{w}_e(\sigma^{\xi,i})}{\tilde{w}_e(\sigma)}-\frac{\tilde{w}_e(\sigma^{\eta,i})}{\tilde{w}_e(\sigma)} \right]\times  1\{\tilde{w}_e(\sigma^{\xi,i}), \tilde{w}_e(\sigma^{\eta,i}) < \tilde{w}_e(\sigma), \tau_i\tau_j= +1 \} 
\end{align}
Then define the probability distribution $\dP[(\hat{\xi},\hat{\eta},\sigma,\tau)]$ as done in Eq. \eqref{eq:cluster-xi-eta-1}-\eqref{eq:cluster-xi-eta-4}.

\begin{theorem}[\textbf{Cluster-Flip Invariance}]
    \label{{app-thm:cluster-flip-U1}-general}
    Let $\dP_{G,\beta}$ denote the joint probability on $(\hat{\xi},\hat{\eta},\sigma,\tau)$ on a finite graph $G$ defined previously. Then $\dP_{G,\beta}$ is well-defined and satisfies the cluster-flip invariance with respect to $\xi,\eta$, i.e.,
    \begin{align}
        \dP_{G,\beta}[ (\hat{\xi},\hat{\eta},\sigma,\tau)] &=\dP_{G,\beta}[ (\hat{\xi},\hat{\eta},\sigma^{\xi,C_0(\hat{\xi})},\tau)] \\
        \dP_{G,\beta}[ (\hat{\xi},\hat{\eta},\sigma,\tau)] &=\dP_{G,\beta}[ (\hat{\xi},\hat{\eta},\sigma^{\eta,C_0(\hat{\eta})},\tau)]
    \end{align}
    Where $C_0(\hat{\xi})$ is the cluster in $\hat{\xi}$ intersecting the lattice site $0$ and $\sigma^{\xi,C_0(\hat{\xi})}$ denotes flipping the spin configuration $\sigma$ only at lattice sites within the cluster $C_0(\hat{\xi})$ along the $x$ component, i.e., $\xi_i \mapsto -\xi_i$ or $\theta_i\mapsto \pi -\theta_i$ for $i\in C_0(\hat{\xi})$. The notation is similar for $\xi\mapsto \eta$.
\end{theorem}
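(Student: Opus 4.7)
The plan is to mirror the proof of Theorem \eqref{app-thm:cluster-flip-U1} from the critical regime, adapting it to the modified edge weights $\tilde w_e$ and the extra indicator $1\{\tau_i \tau_j = +1\}$ in $p_e^\sigma$. The argument splits naturally into two parts: (i) checking that the joint conditional distribution of $(\hat\xi_e, \hat\eta_e)$ given $(\sigma,\tau)$ is a bona-fide probability distribution, and (ii) verifying invariance under cluster flips along each of the $\xi$ and $\eta$ components.

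For part (i), the key observation is that on the sector $\tau_i \tau_j = +1$ the edge weight collapses to $\tilde w_e(\sigma) = \rho_e(\cos \nabla_e \theta)$, with $\rho_e$ convex and monotonically increasing on $[-1,1]$ precisely when $\alpha_e/\kappa_e \le 1/2$. This is the exact setup of Ref. \cite{dubedat2022random}, so the positivity of $p_e^\sigma$ and the inequality $p_e^\xi + p_e^\eta - p_e^\sigma \le 1$ follow from the convexity of $\rho_e$ evaluated at the four related angles. On the complementary sector $\tau_i \tau_j = -1$, all three of $p_e^\xi$, $p_e^\eta$, $p_e^\sigma$ vanish by definition, so the joint conditional distribution degenerates to $\hat\xi_e = \hat\eta_e = 0$ almost surely and positivity is trivial. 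Hence the analogs of Eqs.~\eqref{eq:cluster-xi-eta-1}–\eqref{eq:cluster-xi-eta-4} all give non-negative values.

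For part (ii), I would proceed as in Theorem \eqref{app-thm:cluster-flip-U1}: expand the ratio
\begin{equation}
    \frac{\dP[(\hat\xi,\hat\eta,\sigma^{\xi,C_0(\hat\xi)},\tau)]}{\dP[(\hat\xi,\hat\eta,\sigma,\tau)]}
\end{equation}
and note that edges entirely inside or entirely outside $C_0(\hat\xi)$ contribute factors of $1$, since $w_e$ depends only on the pair of endpoint values and the flip preserves them up to a common sign. Only boundary edges $e = ij$ with $i \in C_0(\hat\xi)$, $j \notin C_0(\hat\xi)$ can give nontrivial contributions, and since $e \notin \hat\xi$ for such $e$, it suffices to show that $\tilde w_e(\sigma)\dP[\hat\xi_e = 0, \hat\eta_e \mid \sigma,\tau]$ is invariant under $\sigma \mapsto \sigma^{\xi,i}$. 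Under this flip, the quadruple $(\tilde w_e, \tilde w_e^\xi, \tilde w_e^\eta, \tilde w_e^{\xi\eta})$ permutes as $(\tilde w_e^\xi, \tilde w_e, \tilde w_e^{\xi\eta}, \tilde w_e^\eta)$. The algebraic manipulation from the critical-regime proof — rewriting $\tilde w_e(1-p_e^\xi) = \min(\tilde w_e, \tilde w_e^\xi)$ and checking invariance of the remaining piece $\tilde w_e(p_e^\eta - p_e^\sigma)$ via a case analysis on the orderings of $\tilde w_e, \tilde w_e^\xi, \tilde w_e^\eta, \tilde w_e^{\xi\eta}$ — transfers verbatim, since it only uses this symmetry. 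Crucially, the extra indicator $1\{\tau_i\tau_j = +1\}$ multiplying $p_e^\sigma$ is invariant under an XY-only flip, so it does not spoil the argument. The same reasoning with $\xi$ and $\eta$ interchanged handles the $\eta$-cluster flip.

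The main obstacle is part (i): one must confirm that $p_e^\sigma \ge 0$ after incorporating the $\alpha_e$ and $\lambda_e$ terms. This is exactly where the hypothesis $\alpha_e/\kappa_e \le 1/2$ is essential, since without it $\rho_e$ develops a non-monotonicity on $[-1,1]$ and the convexity-based argument of Ref.~\cite{dubedat2022random} breaks. The $\lambda_e \tau_i \tau_j$ term cancels in the ratios defining $p_e^\xi, p_e^\eta, p_e^\sigma$ once the sector $\tau_i\tau_j = +1$ is fixed (it is constant in $\sigma$), so the only genuinely new ingredient beyond the critical case is controlling the $\alpha_e$ contribution — and this is handled cleanly by the convexity condition.
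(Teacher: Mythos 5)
Your proposal is essentially correct and mirrors the paper's own proof. Both establish well-definedness by invoking convexity and monotonicity of $\rho_e$ on the $\tau_i\tau_j = +1$ sector (with the degenerate case $\tau_i\tau_j=-1$ handled trivially), and both reduce the cluster-flip check to boundary edges and then to the invariance of $\tilde w_e \,\dP[\hat\xi_e=0,\hat\eta_e\mid\sigma,\tau]$ under a single-site flip.

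One small imprecision worth flagging: you assert that the case analysis from Theorem \eqref{app-thm:cluster-flip-U1} transfers ``verbatim, since it only uses this symmetry'' (the permutation of the quadruple). In fact the critical-regime case analysis also uses the ordering equivalence $\tilde w_e^{\xi\eta} < \tilde w_e^\xi \Leftrightarrow \tilde w_e^\eta < \tilde w_e$ to re-combine the indicator functions after the permutation; this equivalence does not follow from the permutation itself but from strict monotonicity of $\rho_e$. The paper makes this step explicit (restating it as $\rho_e(-X-Y) < \rho_e(-X+Y) \Leftrightarrow \rho_e(X-Y) < \rho_e(X+Y)$). You have already established that $\rho_e$ is strictly increasing when $\alpha_e/\kappa_e \le 1/2$, so the conclusion stands, but the justification should point to monotonicity rather than to the permutation symmetry alone.
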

\begin{proof}
    The proof is similar to that of Theorem \eqref{app-thm:cluster-flip-U1}, and thus we will only outline the differences.
    Using the fact that $x\mapsto \rho_e(x)$ is convex and increasing in the parameter regime $\kappa_e \ge 2\alpha_e$, we see that $\dP[(\hat{\xi},\hat{\eta},\sigma,\tau)]$ is well-defined \cite{dubedat2022random}.
    As before, it is sufficient to prove that the following term
    \begin{equation}
        w_e(\sigma,\tau) \dP[\hat{\xi}_e=0,\hat{\eta}_e|\sigma,\tau]
    \end{equation}
    Is invariant under spin-flip $\sigma \mapsto \sigma^{\xi,i}$ where $i$ is an endpoint of the edge $e\in \partial C_0(\hat{\xi})$.
    Note that since we defined $p_e^\xi,p_e^\eta,p_e^\sigma$ to be nonzero only in the sector $\tau_i\tau_j=+1$, we see that we can assume $\tau_i\tau_j= +1$ without loss of generality.
    Hence, it's sufficient to show that
    \begin{equation}
        \tilde{w}_e(\sigma) \dP[\hat{\xi}_e=0,\hat{\eta}_e|\sigma,\tau_i\tau_j=+1]
    \end{equation}
    Is invariant under spin-flip $\sigma \mapsto \sigma^{\xi,i}$ where $i$ is an endpoint of the edge $e=ij\in \partial C_0(\hat{\xi})$.

    We also note that in the proof of Theorem \eqref{app-thm:cluster-flip-U1}, we used the fact that $w^{\xi \eta} < w^\xi$ is equivalent to $w^{\eta} <w$. 
    In our case, this is equivalent to stating that the following 2 are equivalent, i.e.,
    \begin{equation}
        \rho_e (-X-Y) < \rho_e (-X+Y) \Leftrightarrow \rho_e (X-Y) < \rho_e (X+Y)
    \end{equation}
    This follows from the fact that $\rho_e$ is a strictly increasing function. Indeed, if $\rho_e (-X-Y) < \rho_e (-X+Y)$, then $-X-Y < -X+Y$ and thus $X-Y<X+Y$, which further implies $\rho_e (X-Y) <\rho_e(X+Y)$.
    The converse is similarly shown.
    The remainder of the proof is exactly the same as in Theorem \eqref{app-thm:cluster-flip-U1}.
\end{proof}

\begin{theorem}
    Let $\dP_{G,\beta}$ denote the joint probability on $(\hat{\xi},\hat{\eta},\sigma,\tau)$ on a finite graph $G$ defined previously. Then there exists constants $c>0$ depending only on $\beta$ and the degree (number of nearest neighbors) of the lattice sites $0,R$ in $G$ such that
    \begin{equation}
        c\dP_{G,\beta}[ 0\lr_{\hat{\xi}\hat{\eta}} R] \le \bra \cos 2(\theta_0 -\theta_R)\ket_{G,\beta} \le 2\dP_{G,\beta}[ 0\lr_{\hat{\xi}\hat{\eta}} R]
    \end{equation}
    Where $\hat{\xi}\hat{\eta}$ is the intersection of the two subgraphs $\hat{\xi},\hat{\eta}$ (edge-wise multiplication when viewed as a map $E\mapsto \{0,1\}$). Moreover, if $\xi,\eta$ denote the sign of the $x,y$ components of the XY spins $\sigma =e^{i\theta}$, then
    \begin{equation}
        \bra \xi_0 \eta_0 \xi_R \eta_R \ket_{G,\beta} = \dP_{G,\beta}[ 0\lr_{\hat{\xi}\hat{\eta}} R]
    \end{equation}
\end{theorem}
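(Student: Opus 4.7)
Since cluster-flip invariance for $\dP[(\hat{\xi},\hat{\eta},\sigma,\tau)]$ in the general regime has already been established in the preceding theorem, the plan is to mirror the proof of Theorem \eqref{app-thm:corr-U1} with only the minor bookkeeping required by the extra restriction $1\{\tau_i\tau_j = +1\}$ appearing in $p_e^\xi$, $p_e^\eta$, $p_e^\sigma$. The crucial structural fact, that $\{\hat{\xi}_e=1\}$ forces $\xi_i\xi_j = +1$ and $\{\hat{\eta}_e=1\}$ forces $\eta_i\eta_j = +1$, survives verbatim: under the hypothesis $\alpha_e/\kappa_e \le 1/2$ the effective weight $\tilde{w}_e = \rho_e(\cos \nabla_e \theta)$ is strictly increasing in $\cos \nabla_e \theta$, so $\tilde{w}_e(\sigma^{\xi,i}) < \tilde{w}_e(\sigma)$ reduces to $\xi_i\xi_j = +1$ (modulo measure-zero configurations), and similarly for $\eta$.

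For the upper bound, I would first use the global $U(1)$ rotation symmetry $\theta_i \mapsto \theta_i + \phi$ of $H_{\lambda,\alpha}$, which after averaging over $\phi$ yields
\begin{equation}
    \bra \cos 2(\theta_0 - \theta_R)\ket = 2 \bra \sin 2\theta_0 \sin 2\theta_R\ket = 8 \bra \sin\theta_0 \cos\theta_0 \sin\theta_R \cos\theta_R\ket.
\end{equation}
I would then insert the decomposition $1 = 1\{0\lr_{\hat{\xi}} R\} + 1\{0\not\lr_{\hat{\xi}} R\}$ and apply $\xi$-cluster-flip invariance on $C_0(\hat{\xi})$: on the disconnected event the flip sends $\cos\theta_0 \mapsto -\cos\theta_0$ while fixing $\cos\theta_R$, killing that contribution. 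Repeating with $\hat{\eta}$ collapses the expectation onto $\{0 \lr_{\hat{\xi}\hat{\eta}} R\}$. On this event $\xi_0\xi_R = +1$ and $\eta_0\eta_R = +1$, so the integrand is nonnegative, and bounding $|\sin\theta\cos\theta| \le 1/2$ at both endpoints yields the upper bound with constant $2$. The identity $\bra \xi_0\eta_0\xi_R\eta_R\ket = \dP[0 \lr_{\hat{\xi}\hat{\eta}} R]$ follows from the same decomposition, since on the connected event the product is identically $+1$.

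For the lower bound, I would follow the argument of Theorem \eqref{app-thm:corr-U1} essentially verbatim. Partition on the set $S \equiv S_0(\hat{\sigma}_{\not\sim 0}) \subseteq N_0$ of neighbors of $0$ still connected to $R$ after deleting edges incident to $0$, restrict to the event $\{\theta_0, \theta_R \in (\pi/8, 3\pi/8)\}$ so that each factor $|\sin\theta_i\cos\theta_i|$ is bounded below by a positive constant, and condition on the spins $\theta_{N_0}$ at the neighbors of $0$. The conditioning freezes the boundary between the small subgraph of edges incident to $0$ and its complement, factorizing the joint conditional probability into a local and a bulk contribution. Iterating the reduction at $R$ assembles the claimed lower bound.

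The step I expect to be the main obstacle is verifying that the local factor $\dP[0 \lr_{\hat{\sigma}|\sim 0} S, \theta_0 \in (\pi/8, 3\pi/8)|\theta_{N_0}]$ admits a strictly positive lower bound $\delta(\beta, \deg 0) > 0$ uniform in the boundary configuration $\theta_{N_0}$. This reduces to a compactness plus continuity estimate on the finite subsystem of edges adjacent to $0$: the general-regime weights $\tilde{w}_e$ are continuous, strictly positive, and bounded on the compact spin torus, so a strictly positive infimum over $\theta_{N_0}$ exists, with the constant absorbing the couplings $\{\kappa_e, \lambda_e, \alpha_e\}$ on incident edges. This step is also the most sensitive to the hypothesis $\alpha_e/\kappa_e \le 1/2$: without monotonicity of $\rho_e$, the local flip argument underlying the entire cluster construction would break down.
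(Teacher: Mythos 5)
Your proposal is correct and takes essentially the same approach as the paper: the paper's own proof for this general-regime statement is literally the remark that ``the proof is exactly the same as Theorem \eqref{app-thm:corr-U1},'' and your proof faithfully reproduces that argument (global $U(1)$ rotation averaging, successive insertion of $1\{0\lr_{\hat\xi}R\}+1\{0\not\lr_{\hat\xi}R\}$ and $\hat\eta$-decompositions killed by cluster-flip invariance, the $|\sin\theta\cos\theta|\le 1/2$ pointwise bound yielding the factor $2$, and the conditioning-on-$N_0$ factorization for the lower bound). Your added observation that monotonicity of $\rho_e$ under $\alpha_e/\kappa_e\le 1/2$ is exactly what makes the sign conditions $\xi_i\xi_j=+1$, $\eta_i\eta_j=+1$ persist in the indicators is the right thing to check and is consistent with how the paper sets up Appendix \eqref{app:cluster-U1-general}.
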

\begin{proof}
    The proof is exactly the same as Theorem \eqref{app-thm:corr-U1}
\end{proof}
\subsection{Relation between Percolation Events}
In the proof of Theorem \eqref{thm:cluster-general}, we noted that the key observation is that $p^\tau_e (\sigma,\tau) \ge p^\sigma_e (\sigma,\tau)$ for all spin configurations $(\sigma,\tau)$. Here we provide a short proof of the statement.

\begin{theorem}
    \label{app-thm:cluster-relation-general}
    Let $p_e^\tau,p_e^\sigma$ be defined with respect to the general $U(1)\times \dZ_2$ Hamiltonian in Eq. \eqref{eq:U1-Z2-general} as shown in Appendices \eqref{app:cluster-U1-general} and \eqref{app:cluster-Z2-general}. Then
    \begin{equation}
        p_e^\tau (\sigma,\tau) \ge p_e^\sigma (\sigma,\tau)
    \end{equation}
    For all spin configurations $(\sigma,\tau)$.
\end{theorem}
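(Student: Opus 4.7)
The plan mirrors the proof of Theorem~\eqref{app-thm:cluster-relation} for the critical regime, but must now track two additional contributions: the $\lambda_e \tau_i\tau_j$ term appearing in $p_e^\tau$, and the $\alpha_e \cos(2\nabla_e\theta)$ term that deforms the effective weight $\tilde{w}_e$ entering $p_e^\sigma$. The hypotheses $\lambda_e \ge \alpha_e \ge 0$ and $\alpha_e \le \kappa_e/2$ are precisely what is needed to control these two extra contributions.

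First I would dispose of the cases in which $p_e^\sigma$ vanishes trivially. Writing $X := \cos\theta_i\cos\theta_j$ and $Y := \sin\theta_i\sin\theta_j$, so that $\cos(\nabla_e\theta) = X+Y$, and using that $\rho_e(c) \propto \exp(2\beta\kappa_e c + 2\beta\alpha_e c^2)$ is strictly increasing on $[-1,1]$ (which is where $\alpha_e \le \kappa_e/2$ enters), the three indicators in the definition of $p_e^\sigma$ reduce to $\tau_i\tau_j = +1$, $X > 0$, and $Y > 0$. Whenever any of these fails, $p_e^\sigma = 0 \le p_e^\tau$ and the inequality holds immediately.

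In the remaining case $\tau_i\tau_j = +1$, $X > 0$, $Y > 0$, I set $a := 2\beta\kappa_e$, $b := 4\beta\alpha_e$, $\mu := 2\beta\lambda_e$. Using $X+Y > 0$ and $\lambda_e \ge 0$, the indicator in $p_e^\tau$ is active, so $p_e^\tau = 1 - e^{-a(X+Y) - \mu}$; a direct computation with the quadratic form of $\rho_e$ yields $p_e^\sigma = 1 + e^{-2a(X+Y)} - e^{-2aX - 2bXY} - e^{-2aY - 2bXY}$. The theorem thereby reduces to the scalar inequality
\begin{equation*}
e^{-2aX - 2bXY} + e^{-2aY - 2bXY} \ge e^{-2a(X+Y)} + e^{-a(X+Y) - \mu}
\end{equation*}
for all $X, Y > 0$ with $X+Y \le 1$. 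Exploiting the $X \leftrightarrow Y$ symmetry, I would assume $X \le Y$ and verify the two term-wise bounds $e^{-2aY - 2bXY} \ge e^{-2a(X+Y)}$ and $e^{-2aX - 2bXY} \ge e^{-a(X+Y) - \mu}$. The first rearranges to $a \ge bY$, immediate from $b \le a$ and $Y \le 1$. The second rearranges to $\mu \ge a(X-Y) + 2bXY$; since $X \le Y$ the first term on the right is non-positive, and AM--GM with $X+Y \le 1$ gives $2XY \le (X+Y)^2/2 \le 1/2$, so $2bXY \le b/2$, whence the bound follows from $\mu \ge b/2$, i.e., from $\lambda_e \ge \alpha_e$.

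The main obstacle is getting the pairing right: a more symmetric approach (e.g., factoring out $e^{-a(X+Y)}$ and invoking $\cosh \ge 1$, as in Theorem~\eqref{app-thm:cluster-relation}) fails in the regime $X \to 0$, $Y \to 1$, precisely because both hypotheses $\alpha_e \le \kappa_e/2$ and $\lambda_e \ge \alpha_e$ saturate simultaneously there. The asymmetric pairing is engineered so that each of the two hypotheses is used in exactly one of the two term-wise bounds, which also suggests that both hypotheses are essentially necessary for this style of argument.
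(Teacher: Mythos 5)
Your proof is correct: the reduction to the scalar inequality
\begin{equation*}
e^{-2aX-2bXY}+e^{-2aY-2bXY}\ \ge\ e^{-2a(X+Y)}+e^{-a(X+Y)-\mu}
\end{equation*}
is right, the trivial cases are correctly dispatched via strict monotonicity of $\rho_e$, and after WLOG $X\le Y$ both term-wise bounds check out ($bY\le a$ from $\kappa_e\ge 2\alpha_e$ and $Y\le 1$; $a(X-Y)+2bXY\le\mu$ from $X\le Y$, AM--GM with $X+Y\le 1$, and $\lambda_e\ge\alpha_e$).

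Your route differs from the paper's only in the final step. The paper factors out $e^{-a(X+Y)}$, obtaining the bracket $2\cosh(a(X-Y))\,e^{-2bXY}-e^{-\mu}-e^{-a(X+Y)}$, then invokes $\cosh\ge 1$ to discard the $\cosh$ factor entirely, and finishes by proving the two scalar bounds $e^{-2bXY}\ge e^{-\mu}$ (via $4XY\le 1$ and $\lambda_e\ge\alpha_e$) and $e^{-2bXY}\ge e^{-a(X+Y)}$ (via $X+Y-\tfrac{4\alpha_e}{\kappa_e}XY\ge(X-Y)^2\ge 0$, which uses $\kappa_e\ge 2\alpha_e$ and $X,Y\le 1$). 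No WLOG or case split is needed. Your pairing retains the asymmetry hidden in the $\cosh$, matching the larger exponential with $e^{-\mu}$ and the smaller with $e^{-a(X+Y)}$; both approaches use each hypothesis exactly once and are comparable in effort.

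Your closing diagnostic is incorrect and should be dropped. The symmetric $\cosh\ge 1$ route you dismiss is precisely the one the paper uses, and it does not fail near $X\to 0$, $Y\to 1$: there $2bXY\to 0$, so $e^{-2bXY}\to 1\ge\max(e^{-\mu},e^{-a(X+Y)})$ with ample slack, and the $\cosh$ factor the paper discards is in fact large in that corner. The bound $4XY\le 1$ saturates only at $X=Y=1/2$, so the tight corner for $\lambda_e\ge\alpha_e$ is the symmetric one, not the anisotropic limit. What would genuinely fail is the pairing with $X$ and $Y$ swapped relative to yours without the WLOG, since $a(Y-X)+2bXY\le\mu$ is false for $Y\gg X$; but that pairing is neither yours nor the paper's.
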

\begin{proof}
    Note we can rewrite the edge probabilities as
    \begin{align}
        \label{eq:cluster-edge-tau-explicit-general}
        p^\tau_e &= [1-e^{-2\beta (\kappa_e \cos (\nabla_e \theta) +\lambda_e)}] 1\{\tau_i\tau_j  (\kappa_e \cos (\nabla_e \theta) +\lambda_e) > 0\} \\
        \label{eq:cluster-edge-sigma-explicit-general}
        p_e^\sigma &= [1+e^{-4\beta \kappa_e \cos (\nabla_e \theta)} -(e^{-4\beta \kappa_e \sin \theta_i \sin \theta_j}+e^{-4\beta \kappa_e \cos \theta_i \cos \theta_j})e^{-2\beta \alpha \sin 2\theta_i \sin 2\theta_j}] \nonumber\\
        &\quad\times 1\{\tau_i\tau_j = \xi_i\xi_j =\eta_i\eta_j = +1\}
    \end{align}
    Where $\xi,\eta = \pm 1$ are the signs of the $x,y$ components of the spin $\sigma$. 
    Notice that if the condition in $p_e^\sigma$ is not satisfied, i.e., we do not have $\tau_i\tau_j =\xi_i\xi_j =\eta_i\eta_j =1$, then $p_e^\sigma=0$ and must be trivially $\le p_e^\tau$.
    Hence, we shall consider the case where the condition is satisfied.
    In this case, we see that the condition for $p_e^\tau$ is also satisfied, i.e., $\tau_i (\kappa_e \cos (\nabla_e \theta) +\lambda_e) >0$.
    
    For notation simplicity, let us omit the edge $e=ij$ subscript and also write $|X|=\cos \theta_i \cos \theta_j$ and $|Y| =\sin \theta_i \sin \theta_j$ (where $1\ge |X|,|Y|\ge 0$ due to the condition $\xi_i\xi_j=\eta_i\eta_j=+1$).  
    Then we have
    \begin{align}
        p_e^\tau -p_e^\sigma &= (e^{-4\beta \kappa |X|}+e^{-4\beta \kappa |Y|})e^{-2\beta \alpha 4|X||Y|} -e^{-2\beta \kappa (|X|+|Y|)} [e^{ -2\beta \lambda}+e^{-2\beta \kappa (|X|+|Y|)}] \\
        &=e^{-2\beta \kappa(|X|+|Y|)} [2 \cosh (2\beta \kappa (|X|-|Y|))e^{-2\beta \alpha 4|X||Y|} -e^{ -2\beta \lambda}-e^{-2\beta \kappa (|X|+|Y|)}] \\
    \end{align}
    Note that $\alpha \le \lambda$ and thus we have
    \begin{align}
        \cosh (2\beta \kappa (|X|-|Y|))e^{-2\beta \alpha 4|X||Y|} &\ge  e^{-2\beta \alpha 4|X||Y|} \\
        &\ge e^{-2\beta \lambda},  \quad 4|X||Y| = \sin 2\theta_i \sin 2\theta_j  \le 1
    \end{align}
    Note that $\kappa \ge 2\alpha$ and thus we have
    \begin{align}
        \cosh (2\beta \kappa (|X|-|Y|))e^{-2\beta \alpha 4|X||Y|} &\ge  e^{-2\beta \alpha 4|X||Y|} \\
        &\ge e^{-2\beta \kappa (|X|+|Y|)}
    \end{align}
    Where we used the fact that
    \begin{align}
        |X|+|Y|-\frac{2\alpha}{\kappa} 2|X||Y| &\ge |X|^2+|Y|^2-2|X||Y|,\quad |X|,|Y|\le 1 \\
        &\ge 0
    \end{align}
    Combining everything together, we get $p_e^\tau \ge p_e^\sigma$.
\end{proof}
\section{(Random) Current Representation - Critical Regime}
\label{app:current}

The current representation dates back to Griffith \textit{et. al.} \cite{griffiths1970concavity} and Aizenman \cite{aizenman2005geometric}, where it was developed to study the Ising model.
Similar to the cluster representation, the current representation establishes a correlation-percolation correspondence in the Ising model. 
This representation has proven useful in numerous occasions in obtaining rigorous results regarding the standard Ising model \cite{duminil2017lectures,aizenman1987phase,aizenman2015random,aizenman2021marginal}. 
However, unlike the cluster representation, the current representation has been mostly limited to the Ising model.
Only until recently has the representation been extended to the standard XY model \cite{van2023elementary}.
Therefore, in this section, we will provide a short review by considering the standard Ising and XY models.
The developed techniques will be used towards improving the correlation inequality in Theorem \eqref{thm:cluster} and proving Theorem \eqref{app-thm:current} for the critical Hamiltonian in Eq. \eqref{eq:U1-Z2}.
\subsection{A short review of the Ising model}

\begin{figure}[ht]
\subfloat[\label{fig:current-Is-part}]{%
  \centering
  \includegraphics[width=0.25\columnwidth]{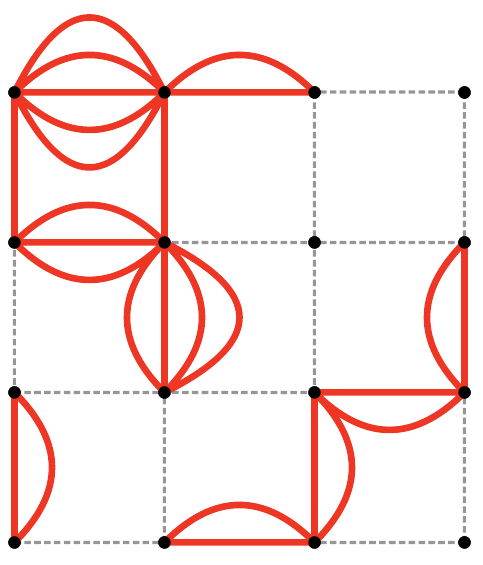}
}
\hspace{0.05\columnwidth}
\subfloat[\label{fig:current-Is-corr}]{%
  \centering
  \includegraphics[width=0.25\columnwidth]{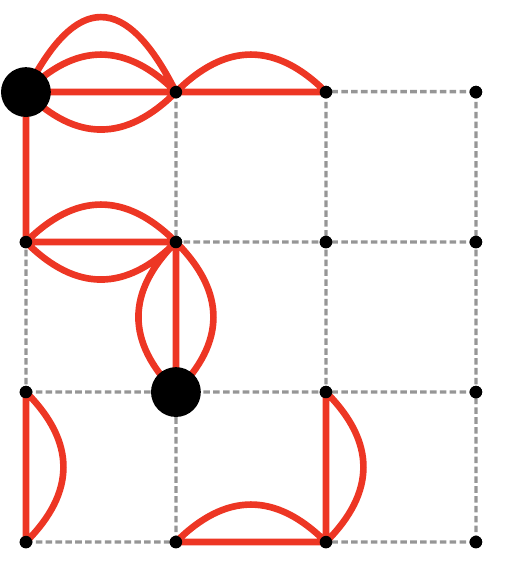}
}
\\
\subfloat[\label{fig:current-Is-double}]{%
  \centering
  \includegraphics[width=0.25\columnwidth]{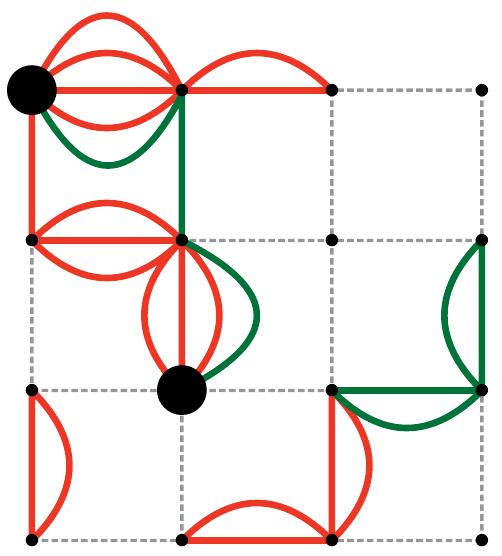}
}
\hspace{0.05\columnwidth}
\subfloat[\label{fig:current-Is-perc}]{%
  \centering
  \includegraphics[width=0.25\columnwidth]{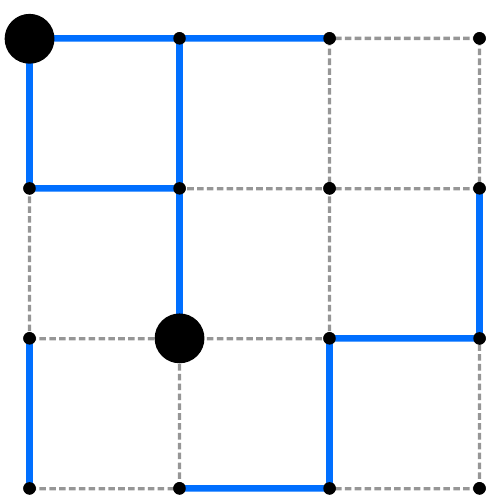}
}
\caption{Current representation of the Ising model on $\dZ^2$. 
(a) A current configuration $n:E\to \dN$ contributing to the partition function $Z^\text{Is}$. 
(b) A current configuration contributing to the unnormalized correlation function $Z^\text{Is}\bra \tau_0 \tau_R\ket^\text{Is}$ where the large dots denote the lattice sites $0,R$. 
(c). A double current representation, where the red and green lines representation distinct configurations $n_1,n_2$ contributing to the unnormalized correlation function $(Z^\text{Is}\bra \tau_0 \tau_R\ket^\text{Is})^2$. 
The large dots denote the lattice sites $0,R$.
(d). A double current representation, where only the trace $\hat{n}$ (compared to Fig. \ref{fig:current-Is-double}) is shown.}
\label{fig:current-Is}
\end{figure}

Consider the standard Ising model on a finite graph $G$,
\begin{equation}
    H^\text{Is}(\tau) = -\sum_{e=ij} \tau_i\tau_j
\end{equation}
So that the partition function is given by
\begin{equation}
    Z^\text{Is} = \sum_{\tau} \prod_{e=ij} e^{\beta \tau_i\tau_j}
\end{equation}
For each edge $e=ij$, we can expand the Boltzmann weight $e^{\beta \tau_i \tau_J}$ via Taylor series. 
This introduces an extra $\dN$ degree of freedom on each edge $e$ corresponding to the order of the Taylor series expansion, and thus a \textit{current} configuration on the edges $n:E \to \dN$ \cite{duminil2017lectures}.
Integrating over all spin configurations $\tau$ induces an interaction between the currents on distinct edges. 
More specifically, we have
\begin{equation}
    Z^\text{Is} = \sum_{n} \frac{\beta^n}{n!} \times 1\{\partial n=\emptyset \}
\end{equation}
Where $\beta^n \equiv  \prod_e \beta^{n_e}$ and $n! \equiv \prod_e n_e!$. 
Note that $\partial n$ denote the \textit{endpoints} of the current configuration $n$, that is, the set of vertices $i$ with an odd parity of neighboring edges in $n$, i.e., $\delta_i n = \sum_{e\sim i} n_e$.
A typical current configuration is shown in Fig. \ref{fig:current-Is-part}. 
Due to the restriction that $\partial n =\emptyset$, we see that each vertex must be adjacent to an even (possibly 0) number of edges in $n$.

One can perform a similar expansion for the unnormalized correlation function
\begin{equation}
    Z^\text{Is} \bra \tau_0\tau_R \ket^\text{Is} = \sum_{n} \frac{\beta^n}{n!} \times 1\{\partial n=0,R \}
\end{equation}
Where the only difference is that the summation only includes current configurations with endpoints at lattice sites $0,R$ (see Fig. \ref{fig:current-Is-corr} for a typical configuration).

Notice that the partition function and unnormalized correlation functions are written over different sums of current configurations, and thus cannot define a probability distribution.
Mathematicians have realized that one can circumvent this problem by applying a \textit{double} current representation \cite{duminil2017lectures,aizenman2015random}: create two identical copies of the same Hamiltonian so that one can compute the square of the unnormalized correlation functions, i.e.,
\begin{equation}
    ( Z^\text{Is} \bra \tau_0\tau_R \ket^\text{Is} )^2= \sum_{n_1,n_2} \prod_{i=1}^2 \frac{\beta^{n_i} }{n_i!} 1\{\partial n_i= 0,R \}
\end{equation}
The key observation is to note that while each (duplicated) current configuration $n_1,n_2$ must have endpoints at lattice site $0,R$, their sum\footnote{Performed over each edge $e$} $n\equiv n_1+n_2$ does not have any endpoints, as shown in Fig. \ref{fig:current-Is-double}.
The switching lemma \cite{duminil2017lectures,aizenman2005geometric} formalizes this intuition so that one can rewrite
\begin{align}
    ( Z^\text{Is} \bra \tau_0\tau_R \ket^\text{Is} )^2&= \sum_{n_1,n_2} 1\{0\underset{n_1+n_2}{\lr} R\}(\mu^\text{Is})^{\otimes 2}[n_1,n_2] \\
    (\mu^\text{Is})^{\otimes 2}[n_1,n_2] &=\prod_{i=1}^2 \frac{\beta^{n_i} }{n_i!} 1\{\partial n_i= \emptyset \} 
\end{align}
Where $\{0\lr_{n_1+n_2} R\}$ denotes the percolation event where lattice sites $0,R$ are connected in a cluster of $n_1+n_2$.
Since the partition functions $(Z^\text{Is})^2$ is the normalization factor of the weights $(\mu^\text{Is})^{\otimes 2}[n_1,n_2]$, we find that the correlation function can be written as the probability of a percolation event, i.e.,
\begin{align}
    (\bra \tau_0\tau_R \ket^\text{Is} )^2 &= (\dP^\text{Is})^{\otimes 2} [0\underset{n_1+n_2}{\lr}  R] \\
    (\dP^\text{Is})^{\otimes 2}[n_1,n_2] &= \frac{(\mu^\text{Is})^{\otimes 2}[n_1,n_2]}{(Z^\text{Is})^2}
\end{align}

We emphasize that the percolation event $\{0\lr_{n_1+n_2} R\}$ is only dependent on the sum $n \equiv n_1 +n_2$ of the two individual configurations; in fact, it only depends on the \textit{trace} (see Fig. \ref{fig:current-Is-perc}), which is a subgraph $\hat{n}:E\to \{0,1\}$ defined by including an edge $e$ in $\hat{n}$ if $n_e >0$, i.e., $\hat{n} = 1\{n >0\}$.
Therefore, we simplify our notation and write
\begin{equation}
    (\bra \tau_0\tau_R \ket^\text{Is} )^2 = (\dP^\text{Is})^{\otimes 2} [0\lr_{\hat{n}} R]
\end{equation}

\subsection{A short review of the XY model}

\begin{figure}[ht]
\subfloat[\label{fig:current-XY-part}]{%
  \centering
  \includegraphics[width=0.25\columnwidth]{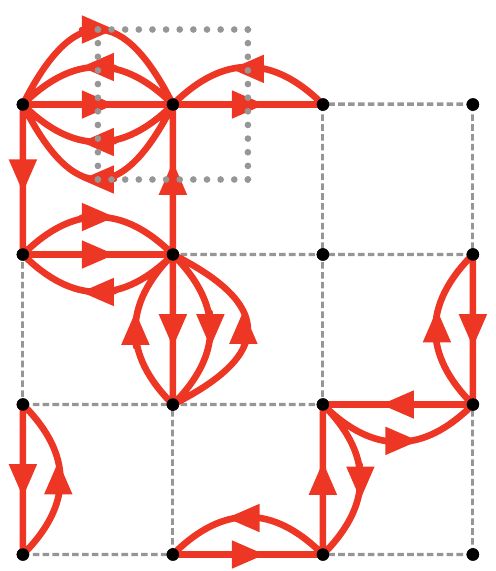}
}
\hspace{0.1\columnwidth}
\subfloat[\label{fig:current-XY-pair}]{%
  \centering
  \includegraphics[width=0.15\columnwidth]{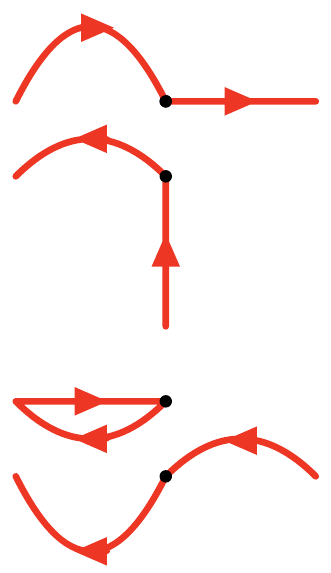}
}
\\
\subfloat[\label{fig:current-XY-loop}]{%
  \centering
  \includegraphics[width=0.3\columnwidth]{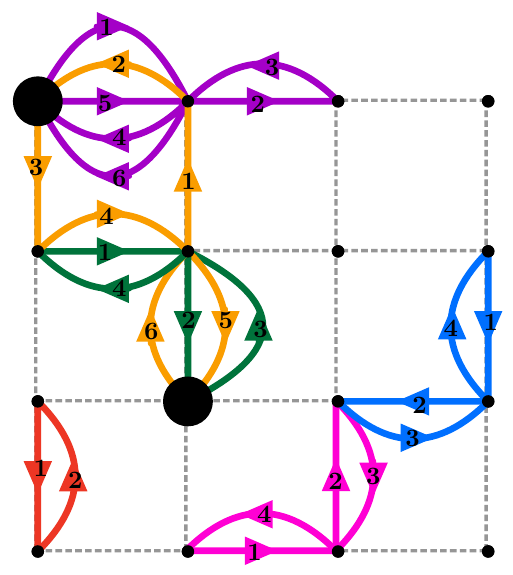}
}
\caption{Current representation of the XY model on $\dZ^2$. 
(a). A directed current configuration $\nn$ contributing to the partition function, i.e., with zero flow $\delta \nn =0$.
(b). A possible pairing of incoming and outgoing edges of the boxed vertex in Fig. \ref{fig:current-XY-part}.
(c). A possible cyclic decomposition $\bm{\sC}$ of the direct current $\nn$ in Fig. \ref{fig:current-XY-part}. Distinct colors denote distinct loops and the numbers denote the ordering, e.g., $1\to 2\to 3\to 4\to 5 \to 6\to 1$ for the orange cycle. Large black dots denote lattice site $0,R$.}
\label{fig:current-XY}
\end{figure}

Similar to the Ising model in the previous subsection, we can write the correlation functions of the standard XY model as percolation events in corresponding current representation.
Here, we shall outline the main argument and refer the reader to Ref. \cite{van2023elementary} for details (see Appendix \eqref{app:current-technical} for its application to the $U(1)\times \dZ_2$ Hamiltonian).
Consider the standard XY model on a finite graph $G$,
\begin{align}
    H^\text{XY} = -\sum_{e=ij} \sigma_i \cdot \sigma_j = -\frac{1}{2} \sum_{\ee=i\to j} \bar{\sigma}_i \sigma_j
\end{align}
Where $\sigma_i = e^{i\theta_i} \in \dS^1 \cong \dC$. 
Notice that in contrast to the Ising model, the summation is over directed edges $\ee \in \EE$ and thus when computing the partition function $Z^\text{XY}$, the induced current is a directed configuration, i.e., $\nn:\EE \to \dN$ and
\begin{equation}
    Z^\text{XY} = \sum_{\nn} \cdots \times 1\{ \delta \nn = 0\}
\end{equation}
Where we have omitted the corresponding weights $\cdots$ and chosen to focus on the interaction term induced by integrating over all XY spins.
More concretely, $\delta_i \nn$ is the \textit{flow} of the current out of the lattice site $i$, defined by $\delta_i \nn =\sum_{j\sim i} (\nn_{i\to j} - \nn_{j\to i})$ so that the condition $\delta \nn =0$ requires $\delta_i \nn =0$ at all lattice sites $i$.
A typical directed configuration satisfying this condition\footnote{
Note there is a subtlety regarding which edge copies should be pointing from $i\to j$ and which $j\to i$. 
Indeed, the directed current $\nn$ only specifies the number of directed edge copies pointing from $i\to j$ (and $j\to i$), but not the \textit{stacking} order.
For example, in Fig. \ref{fig:current-XY-part}, there are 5 edge copies on the upper-left most horizontal edge, of which 2 are point to the right and 3 to the left. 
Out of the 5 edge copies, we could've arranged the right and left arrows in any order.
See Appendix \eqref{app:redundancies} for explicit treatment on $U(1)\times \dZ_2$ model.
} is shown in Fig. \ref{fig:current-XY-part}.

It would then appear that we can repeat the procedure outlined for the Ising model and write the correlation functions as a percolation event of the direct currents $\nn$.
However, as first noticed by Ref. \cite{van2023elementary}, a further degree of freedom is required.
As shown in Fig. \ref{fig:current-XY-part}, the zero flow requirement implies that we can decompose the directed current configuration into distinct cycles.
More concretely, as shown in Fig. \ref{fig:current-XY-pair} (which corresponds to the boxed vertex in Fig. \ref{fig:current-Is-part}), we can pair up the incoming- and outgoing-edges in any manner.
By performing this decomposition at each vertex, we obtain a collection of directed cycles\footnote{
Since there are multiple ways of decomposition a given directed current $\nn$, the mapping $\nn \mapsto \bm{\sC}$ is multi-valued, i.e., the well-defined map $\bm{\sC} \mapsto \nn$ is not 1-1.} $\bm{\sC}$.
A typical configuration of $\bm{\sC}$ is shown in Fig. \ref{fig:current-XY-loop}, where each distinct color denotes a distinct directed loop and the numbers specify the ordering, e.g., $1\to 2\to 3\to 4\to 5 \to 6\to 1$ for the orange cycle.

By using an analogue of the switching lemma \cite{duminil2017lectures,aizenman2005geometric} for the Ising model, the following correspondence\footnote{A similar correspondence for the conventional correlations $\bra \cos(\theta_0-\theta_R)\ket^\text{XY}$ was also established using a corresponding \textit{double} current representation \cite{van2023elementary}.} was established
\begin{equation}
    \bra \cos 2(\theta_0 -\theta_R)\ket^\text{XY} \cong \dP^\text{XY}[ 0\lr_{\sC} R]
\end{equation}
Where $\cong$ denotes that the ratio of the two values are bounded above and below by constants\footnote{In this case, the ratio of the left-hand-side and the right-hand-side is bounded between $1/2 \le \cdots \le 1$.}, and $\{0\lr_{\sC} R\}$ denotes the percolation event in which there exists a directed loop in the collection $\bm{\sC}$ which connects lattice sites $0,R$ (e.g., orange cycle in Fig. \ref{fig:current-XY-loop}).
Similar to the Ising model, we emphasize that the percolation event $\{0\lr_{\sC} R\}$ is independent of the direction of each cycle\footnote{Also independent of the stacking order as defined previously.} in $\bm{\sC}$ and thus warrants the notation $\sC$ instead of $\bm{\sC}$.
\subsection{Application to the Critical Regime}

Similar to the cluster representation, the correlation-percolation correspondence in the current representation extends to the critical subclass in Eq. \eqref{eq:U1-Z2} and that the difficulty lies in finding the relation between two percolation event so that
\begin{theorem}[see Appendix \eqref{app:current-technical}]
    \label{app-thm:current}
    Let the $U(1)\times \dZ_2$ Hamiltonian $H$ in Eq. \eqref{eq:U1-Z2} be defined on a finite graph $G$. Then for any temperature,
    \begin{equation}
        \bra \cos 2(\theta_0-\theta_R) \ket_{G,\beta} \le \bra \tau_0\tau_R\ket_{G,\beta}
    \end{equation}
\end{theorem}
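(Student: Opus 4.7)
My plan is to encode the $\dZ_2$ and $U(1)$ sectors simultaneously in a single directed-current representation and then exhibit an event-wise containment that yields the sharper inequality without the factor of $2$ present in Theorem \eqref{thm:cluster}. I would Taylor-expand each Boltzmann weight $\exp[\beta\kappa_e(1+\tau_i\tau_j)\cos\nabla_e\theta]$ in powers of the interaction and decompose $\cos\nabla_e\theta=(e^{i\nabla_e\theta}+e^{-i\nabla_e\theta})/2$ to produce a directed current $\nn:\EE\to\dN$ with edge weight $\prod_e(\beta\kappa_e)^{|\nn_e|}/(\nn_e^+!\,\nn_e^-!)$, where $|\nn_e|=\nn_e^++\nn_e^-$. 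Integrating over the $\theta_i$ enforces the flow-zero condition $\delta\nn=0$. The key combinatorial identity $(1+\tau_i\tau_j)^{|\nn_e|}=2^{|\nn_e|}\mathbf{1}\{\tau_i=\tau_j\}$ for $|\nn_e|\ge 1$ then pins $\tau$ to be constant on every cluster of the trace $\hat{\nn}=\mathbf{1}\{\nn>0\}$, so that after fixing $\nn$ the remaining $\tau$ degrees of freedom are $2^{C(\hat{\nn})}$ independent cluster signs.

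The $\dZ_2$ correspondence then follows by the standard parity argument: inserting $\tau_0\tau_R$ and summing the independent cluster signs produces $0$ whenever $0$ and $R$ lie in different clusters of $\hat{\nn}$, and a uniform positive factor otherwise. After normalization this defines a probability measure $\dP$ on zero-flow directed currents such that
\begin{equation*}
\bra\tau_0\tau_R\ket = \dP[0\lr_{\hat{\nn}} R].
\end{equation*}
For the $U(1)$ side I would insert $\cos 2(\theta_0-\theta_R)=\tfrac12(e^{2i(\theta_0-\theta_R)}+e^{-2i(\theta_0-\theta_R)})$, which shifts the flow at $0$ and $R$ by $\pm 2$. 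Following the cycle-decomposition construction of van Engelenburg--Lis \cite{van2023elementary} recalled above, I would pair incoming and outgoing edge copies uniformly at every vertex to obtain an unordered loop collection $\bm{\sC}$, absorbing the flow imbalance at $\{0,R\}$ by attaching two virtual oriented edges $0\to R$, so that $\bm{\sC}$ contains exactly two directed loops traversing both $0$ and $R$. An analogue of the switching lemma, carrying the extra $\tau$-sum through the pairing, would give
\begin{equation*}
\bra \cos 2(\theta_0-\theta_R)\ket \le \dP[\,\exists\text{ loop in }\sC\text{ through }0\text{ and }R\,].
\end{equation*}
Because every loop of $\sC$ is a subgraph of $\hat{\nn}$, the right-hand event is contained in $\{0\lr_{\hat{\nn}} R\}$ and Theorem \eqref{app-thm:current} follows.

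The main obstacle is to verify that the \emph{same} measure $\dP$ governs both correlations after the $\tau$-sum. The $\tau_0\tau_R$ computation lives on currents with $\delta\nn=0$, while the $\cos 2(\theta_0-\theta_R)$ computation lives on currents with flow $\pm 2$ at $\{0,R\}$; these must be brought onto a common domain by the switching lemma together with the random pairing that defines $\bm{\sC}$. The technical heart of the argument is checking that the $2^{C(\hat{\nn})}$ cluster factor on the $\dZ_2$ side matches, up to a factor $\le 1$, the combinatorial weight produced by the uniform cycle pairing on the $U(1)$ side, so that no spurious factor of $2$ is introduced. This compatibility is also the mechanism by which the factor $2$ present in Theorem \eqref{thm:cluster} is removed in the current-representation proof.
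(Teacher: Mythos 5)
Your proposal is correct and follows essentially the same route as the paper's proof in Appendix~\ref{app:current}: a directed random-current expansion that produces a $2^{C(\hat{n})}$ cluster weight after the $\tau$-sum, the identity $\bra\tau_0\tau_R\ket=\dP[0\lr_{\hat{n}}R]$, the van Engelenburg--Lis cycle decomposition yielding $\bra\cos 2(\theta_0-\theta_R)\ket\le\dP[0\lr_{\sC}R]$ via path reversal, and the final event containment $\{0\lr_{\sC}R\}\subseteq\{0\lr_{\hat{n}}R\}$. The one cosmetic difference is that you obtain the $2^{C(\hat{n})}$ factor directly from the pointwise identity $(1+\tau_i\tau_j)^{n_e}=2^{n_e}\mathbf{1}\{\tau_i=\tau_j\}$ (for $n_e\ge 1$), whereas the paper expands the binomial and invokes a subgraph-counting lemma (Lemma~\ref{app-lem:num-subloops}) together with the switching lemma; your shortcut is slightly more elementary for the partition-function bookkeeping, but the two give the same measure on traces, and the paper's version is also what sets up the switching argument used to establish the $\dZ_2$ correspondence with the $\tau_0\tau_R$ insertion.
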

\begin{proof}[Sketch of Proof]
If we repeat the procedure for the standard XY models and define directed cyclic decomposition $\bm{\sC}$, we see that the $U(1)$ correlations are related to the percolation events $\{0\lr_{\sC} R\}$ (see Appendix \eqref{app:current-technical-U1}).
Since every directed cyclic decomposition corresponds to a directed current, i.e., $\bm{\sC} \to \nn$, and each directed current corresponds to an undirected current, i.e., $\nn \to n$ (remove the directions), we see that we can also establish a correspondence between the $\dZ_2$ correlations and the percolation event $\{0 \lr_{\hat{n}} R\}$ (see Appendix \eqref{app:current-technical-Z2}).
It's then clear that if $\bm{\sC}$ percolates between lattice sites $0,R$, then its induced undirected current configuration $n$ must also percolate, i.e.,
\begin{align}
    \{0\lr_{\sC} R\} &\subseteq \{0 \lr_{\hat{n}} R\} \\
    \dP [0\lr_{\sC} R] &\le \dP[0 \lr_{\hat{n}} R]
\end{align}
The statement then follows.
\end{proof}

One natural question that may arise is why not compare the conventional $U(1)$ correlations $\bra \cos (\theta_0 -\theta_R)\ket$ with the $\dZ_2$ correlations $\bra \tau_0\tau_R\ket$ since, schematically, both require a double current representation.
In short, the $U(1)\times \dZ_2$ model is (heuristically) ``already" a double current representation, since it is physically motivated by identical layers (e.g., twisted bilayer BSCCO).
More concretely, note that in the standard Ising model, the correlations are obtained by considering two identical copies of the same Hamiltonian, i.e.,
\begin{equation}
    H^\text{Is}(\sigma) +H^\text{Is}(\tau) = -\sum_{e=ij} (1+(\tau_i\sigma_i)(\tau_j \sigma_j)) \sigma_i \sigma_j
\end{equation}
Where we need to sum over all Ising spin configuration $(\sigma,\tau)$. Note, however, that $(\sigma,\tau)\mapsto (\sigma,\tau \sigma)$ (where $\tau \sigma$ denotes site-wise multiplication), is a bijective mapping between spin configurations, and thus the duplicated Ising Hamiltonian is equivalent to
\begin{equation}
    -\sum_{e=ij} (1+\tau_i\tau_j) \sigma_i \sigma_j
\end{equation}
This has the same form as the $U(1)\times \dZ_2$ Hamiltonian in Eq. \eqref{eq:U1-Z2} with constant $\kappa_e =1$, except that the $U(1)$ spins in the latter are replaced by the simpler $\dZ_2$ spins, i.e., $U(1)\mapsto \dZ_2$.
Therefore, heuristically, the $U(1)\times \dZ_2$ model is ``already" duplicated with respect to the Ising spins $\tau$.
\subsection{Suggestive Evidence of a Single Phase Transition}

Apart from providing a stronger inequality over Theorem \eqref{thm:cluster}, the current representation in Theorem \eqref{app-thm:current} also provides suggestive insight into why only a single phase transition is observed for the critical regime $\ell_\infty$ (at least for the constant $\kappa_e =1$ scenario on a regular lattice such as $\dZ^d$), as suggested by mean-field theory \cite{can2021high,yuan2023inhomogeneity} and numerics \cite{song2022phase,bojesen2014phase,maccari2022effects}.
More specifically, we have shown that
\begin{align}
    \bra \cos 2(\theta_0 -\theta_R) \ket \cong \dP[ 0\lr_{\sC} R] \\
    \bra \tau_0\tau_R \ket \cong \dP[ 0\lr_{\hat{n}} R]
\end{align}
Where the cyclic decomposition $\bm{\sC}$ induces the undirected current $n$ as argued previously, i.e., $\bm{\sC} \to \nn \to n$.
It is thus worth mentioning that the constructed current representation $\dP$ for the $U(1)\times \dZ_2$ Hamiltonian with constant $\kappa_e=1$ is related to that of the standard XY model $\dP^\text{XY}$ via the following relation
\begin{equation}
    \dP[\bm{\sC}] \propto 2^{C(\hat{n})}\dP^\text{XY}[\bm{\sC}]
\end{equation}
Where $C(\hat{n})$ denote the number of clusters in the induced trace $\hat{n}$ (e.g., $C(\hat{n})=6$ in the example of Fig. \ref{fig:current-Is-perc}, where isolated vertices are also counted).
Note the two percolation events $\{0 \lr_\sC R\}$ and $\{0 \lr_{\hat{n}} R\}$ also occur in the standard XY model.
However, since it's well-known that the standard XY model only has a single phase transition on $\dZ^d$, the two percolation events are \textit{expected} to coincide in the XY model\footnote{
This is not yet proven. 
However, such a conjecture was made in Ref. \cite{van2023elementary}, by comparing the cyclic decomposition $\bm{\sC}$ to simple random walks. More concretely, they consider the dimensional dependence of the XY phase transition (BKT or long-range order) and relate it to that of recurrence/transient property of the simple random walk on $\dZ^d$.}.
The extra factor $2^{C(\hat{n})}$ \textit{is not expected} to change this behavior, though it may change the critical temperature\footnote{One can compare this situation to the cluster representation of the $q$ states Potts model \cite{duminil2017lectures}, in which each probability distribution only differ by $q^{C(\omega)}$.}.
\subsection{Technical Details of the Current Representation}
\label{app:current-technical}
In this section, we shall derive some of the technical details for the random current representation given in previous sections of Sec. \eqref{app:current} for the $U(1)\times \dZ_2$ Hamiltonian in Eq. \eqref{eq:U1-Z2}.
\subsubsection{$\dZ_2$ Correlations}
\label{app:current-technical-Z2}
\begin{theorem}
    \label{app-app-thm:current-part-n}
    Let $H$ be that given in Eq. \eqref{eq:U1-Z2}. 
    Then the partition function $Z_{G,\beta}$ on a finite graph $G$ is given by
    \begin{equation}
        Z_{G,\beta} = \frac{1}{2^{V}}\sum_{\nn:\EE\to \dN} 2^{C(\hat{n})} \frac{(\beta \kappa)^{|\nn|}}{\nn !}  1\{\delta \nn=0\} 
    \end{equation}
    Where $|\nn|$ is the (undirected) current induced by $\nn$, i.e., $|\nn|_{ij}= \nn_{i\to j} +\nn_{j\to i}$, and 
    \begin{equation}
        \left(\beta \kappa\right)^{|\nn|} \equiv \prod_{e} \left(\beta \kappa_e\right)^{|\nn|_e}, \quad \nn! \equiv \prod_{\ee} \nn_{\ee} 
    \end{equation}
    $C(\hat{|\nn|})$ is the number of clusters in the trace $\hat{|\nn|}$ and $\delta \nn$ is the flow of $\nn$ as defined in the main text.
\end{theorem}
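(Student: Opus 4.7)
The plan is to derive this random current representation by Taylor-expanding each edge's Boltzmann factor, decoupling the $\theta$-integration from the $\tau$-sum, and then performing both explicitly.

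First, writing $\sigma_i\equiv e^{i\theta_i}$ and $\cos\nabla_e\theta=\tfrac{1}{2}(\bar{\sigma}_i\sigma_j+\sigma_i\bar{\sigma}_j)$, I would expand
\begin{align*}
e^{\beta\kappa_e(1+\tau_i\tau_j)\cos\nabla_e\theta}
&= \sum_{k\ge 0}\frac{\bigl(\beta\kappa_e(1+\tau_i\tau_j)\bigr)^k}{k!\,2^k}\bigl(\bar{\sigma}_i\sigma_j+\sigma_i\bar{\sigma}_j\bigr)^k \\
&= \sum_{\nn_{i\to j},\,\nn_{j\to i}\ge 0}\frac{\bigl(\beta\kappa_e(1+\tau_i\tau_j)\bigr)^{|\nn|_e}}{2^{|\nn|_e}\,\nn_{i\to j}!\,\nn_{j\to i}!}\;\bar{\sigma}_i^{\nn_{i\to j}}\sigma_j^{\nn_{i\to j}}\sigma_i^{\nn_{j\to i}}\bar{\sigma}_j^{\nn_{j\to i}},
\end{align*}
where the binomial coefficient $\binom{k}{\nn_{i\to j}}$ from the expansion of $(\bar{\sigma}_i\sigma_j+\sigma_i\bar{\sigma}_j)^k$ has been absorbed into the factorials so that the two directed occupation numbers $\nn_{i\to j},\nn_{j\to i}$ run independently. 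This precisely introduces the directed current $\nn\colon\EE\to\dN$.

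Next, I would exploit the elementary identity $(1+\tau_i\tau_j)^m=2^m\,\mathbf{1}\{\tau_i=\tau_j\}$ valid for every $m\ge 1$ (and trivially for $m=0$). This simultaneously cancels the $2^{|\nn|_e}$ in the denominator, leaving the clean weight $(\beta\kappa_e)^{|\nn|_e}/(\nn_{i\to j}!\,\nn_{j\to i}!)$, and imposes the constraint that $\tau$ be constant across every edge of the trace $\hat{|\nn|}$. I would then interchange sum and integral and integrate over $\theta$: collecting contributions from every incident edge, the exponent of $e^{i\theta_i}$ at vertex $i$ equals $\sum_{j\sim i}(\nn_{j\to i}-\nn_{i\to j})=-\delta_i\nn$, so Fourier orthogonality $\int_0^{2\pi}\tfrac{d\theta}{2\pi}e^{im\theta}=\delta_{m,0}$ enforces $\delta\nn=0$ at every site and produces $\mathbf{1}\{\delta\nn=0\}$. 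Finally, summing over $\tau$ with the normalized Ising measure $\tfrac{1}{2^V}\sum_\tau$, the constraint $\tau_i=\tau_j$ on every edge of $\hat{|\nn|}$ forces $\tau$ to be constant on each connected component of $(V,\hat{|\nn|})$ — with isolated vertices counted as singletons — yielding exactly $2^{C(\hat{|\nn|})}$ admissible configurations and hence the stated formula.

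The main bookkeeping obstacle is to align the binomial expansion of $(\bar{\sigma}_i\sigma_j+\sigma_i\bar{\sigma}_j)^k$ with the directed-edge labels so that the denominator becomes $\nn_{i\to j}!\,\nn_{j\to i}!$ rather than a single $|\nn|_e!$, and to verify that the factor $2^{|\nn|_e}$ produced by $(1+\tau_i\tau_j)^{|\nn|_e}$ exactly cancels the $2^{|\nn|_e}$ from splitting the cosine — so that the only surviving power of $2$ comes from the cluster count. Once these cancellations are checked, the remainder is a formal application of Fourier orthogonality on $\dS^1$ and a combinatorial count of constant-on-clusters Ising configurations on $\hat{|\nn|}$.
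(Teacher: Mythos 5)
Your proposal is correct and follows the same overall skeleton as the paper: Taylor-expand the edge weights, reorganize the expansion into a directed current $\nn:\EE\to\dN$, carry out the $\theta$-integration by Fourier orthogonality to force $\delta\nn=0$, and then count the admissible $\tau$-configurations to produce the factor $2^{C(\hat{|\nn|})}/2^V$. Where you genuinely diverge is in the treatment of the $\tau$-sum. The paper expands $(1+\tau_i\tau_j)^{|\nn|_e}$ by the binomial theorem into a sum over sub-currents $m\le|\nn|$, reducing the problem to counting even sub-multigraphs (those with $\partial m=\emptyset$), and then invokes a separate combinatorial fact (Lemma \eqref{app-lem:num-subloops}, which rests on Theorem~1.9.5 of Diestel) to obtain $2^{|\nn|-V+C(\hat{|\nn|})}$. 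You instead observe directly that $(1+\tau_i\tau_j)^m=2^m\mathbf{1}\{\tau_i=\tau_j\}$ for $m\ge1$, so that $\prod_e(1+\tau_i\tau_j)^{|\nn|_e}=2^{|\nn|}\prod_{e\in\hat{|\nn|}}\mathbf{1}\{\tau_i=\tau_j\}$, and the count is just the number of $\tau$'s constant on each cluster of the trace, namely $2^{C(\hat{|\nn|})}$. This is noticeably more elementary and self-contained, and it arrives at the same intermediate quantity $2^{|\nn|+C(\hat{|\nn|})}$. The paper's heavier machinery is not wasted, though: the binomial-expansion form and Lemma \eqref{app-lem:num-subloops} are reused in the subsequent correlation theorem, where the switching lemma is applied to the inner sum over $m$ with $\partial m=\{0,R\}$; your shortcut does not carry over to that step as written.

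One small inaccuracy in wording: the identity $(1+\tau_i\tau_j)^m=2^m\mathbf{1}\{\tau_i=\tau_j\}$ is \emph{not} ``trivially valid'' at $m=0$ (the left side is $1$ while the right side is the indicator). It doesn't affect your argument, since edges with $|\nn|_e=0$ are not in the trace $\hat{|\nn|}$ and contribute no constraint, and your cluster count correctly treats isolated vertices as singleton clusters — but the parenthetical should be struck or replaced by the observation that edges with $|\nn|_e=0$ simply contribute the trivial factor $1$.
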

\begin{proof}
    Notice that
    \begin{align}
        Z &= \sum_{\tau}\int_\theta \prod_{\ee=i\to j\in \EE} \exp\left(\frac{\beta \kappa_e}{2} (1+\tau_i\tau_j) \bar{\sigma}_i\sigma_j \right) \\
        &=\sum_{\tau}\int_\theta \prod_{\ee=i\to j\in \EE} \sum_{\nn_{\ee}=0}^\infty \frac{1}{\nn_{\ee}!} \left(\frac{\beta \kappa_e}{2}\right)^{\nn_{\ee}} (1+\tau_i\tau_j)^{\nn_{\ee}} e^{-i(\theta_i -\theta_j)} \\
        &= \sum_{\nn:\EE\to \dN} \frac{1}{\nn!} \left(\frac{\beta \kappa}{2}\right)^{|\nn|} \left[\sum_{\tau} \prod_{e=ij\in E} (1+\tau_i\tau_j)^{|\nn|_e} \right] \left[\int_\theta \prod_{i\in V} e^{-i\theta_i \delta_i \nn} \right]
    \end{align}
    Where $|\nn|$ is the (undirected) current induced by $\nn$, i.e., $|\nn|_{ij}= \nn_{i\to j} +\nn_{j\to i}$, and 
    \begin{equation}
        \left(\frac{\beta \kappa}{2}\right)^{|\nn|} = \prod_{e} \left(\frac{\beta \kappa_e}{2}\right)^{|\nn|_e} 
    \end{equation}
    Note that by Lemma \eqref{app-lem:num-subloops}, we have
    \begin{align}
        \sum_{\tau} \prod_{e=ij} (1+\tau_i\tau_j)^{|\nn|_e} &= \sum_{m:E\to \dN,m\le |\nn|} \left( |\nn| \atop m \right) 1\{ \partial m=\emptyset \} \\
        &= 2^{|\nn|-V+C(\hat{|\nn|})}
    \end{align}
    Where 
    \begin{equation}
        \left( |\nn| \atop m \right) = \prod_{e} \left( |\nn|_e \atop m_e \right)
    \end{equation}
    And $C(\hat{|\nn|})$ is the number of clusters in the trace $\hat{|\nn|}$ as defined in the main text.
    Also notice that
    \begin{equation}
        \int_\theta \prod_{i\in V} e^{-i\theta_i \delta_i \nn} = 1\{\delta \nn =0\}
    \end{equation}
    Where $\delta \nn$ is the flow of the current as defined in the main text. 
    Hence, the statement follows.
\end{proof}

\begin{theorem}
    Let $\dP_{G,\beta}$ be the probability distribution on directed currents $\nn$ defined on a finite graph $G$ with weights
    \begin{equation}
        \dP_{G,\beta}[\nn] \propto 2^{C(\hat{n})} \frac{(\beta \kappa)^{|\nn|}}{\nn !}   1\{\delta \nn=0\}
    \end{equation}
    Where the notation is as in the previous Theorem \eqref{app-app-thm:current-part-n}, and let $\bra\cdots \ket_{G,\beta}$ be the thermal average with respect to the $U(1)\times \dZ_2$ Hamiltonian in Eq. \eqref{eq:U1-Z2}. Then
    \begin{equation}
        \bra \tau_0 \tau_R \ket_{G,\beta} = \dP_{G,\beta}[0\lr_{\hat{|\nn|}} R]
    \end{equation}
\end{theorem}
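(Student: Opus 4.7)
The plan is to expand the numerator $Z_{G,\beta}\bra \tau_0\tau_R\ket_{G,\beta}$ using the same Taylor expansion of the Boltzmann weights as in Theorem \eqref{app-app-thm:current-part-n}, with the only modification being the insertion of $\tau_0\tau_R$ inside the Ising trace. The $\theta$-integration proceeds exactly as before and yields the flow constraint $1\{\delta\nn=0\}$. All the new content is in the Ising sum, where the extra factor $\tau_0\tau_R$ shifts the required parity at the two marked sites and ultimately produces a connectivity indicator $1\{0\lr_{\hat{|\nn|}} R\}$.

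Concretely, after Taylor-expanding each $\exp(\tfrac{\beta\kappa_e}{2}(1+\tau_i\tau_j)\bar\sigma_i\sigma_j)$ and integrating out $\theta$ as in Theorem \eqref{app-app-thm:current-part-n}, one is left with
\begin{equation*}
Z_{G,\beta}\bra\tau_0\tau_R\ket_{G,\beta}=\tfrac{1}{2^V}\sum_{\nn}\frac{(\beta\kappa/2)^{|\nn|}}{\nn!}\,1\{\delta\nn=0\}\sum_{\tau}\tau_0\tau_R\prod_{e=ij}(1+\tau_i\tau_j)^{|\nn|_e}.
\end{equation*}
Expanding each $(1+\tau_i\tau_j)^{|\nn|_e}$ binomially introduces an auxiliary subcurrent $m\le|\nn|$ living on the multigraph with $|\nn|_e$ parallel copies of each edge $e$. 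The $\tau$-sum then factors as $2^V$ times the number of such subcurrents whose endpoint set is exactly $\partial m=\{0,R\}$.

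The key step is to count these subcurrents. By the same multigraph even-subgraph argument behind Lemma \eqref{app-lem:num-subloops}, the number of $m\le|\nn|$ with a prescribed parity assignment at each vertex is either $0$, if the target parities are inconsistent with the cluster structure of the trace $\hat{|\nn|}$, or exactly $2^{|\nn|-V+C(\hat{|\nn|})}$ otherwise. For the parity target $\partial m=\{0,R\}$, consistency is precisely the requirement that $0$ and $R$ lie in the same cluster of $\hat{|\nn|}$. This is seen by fixing a spanning forest within each cluster of $\hat{|\nn|}$: any parity assignment consistent with the cluster structure can be uniquely realized by the forest edges, while each of the $|\nn|-V+C(\hat{|\nn|})$ non-forest edge copies contributes an unconstrained binary choice.

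Combining these pieces, the numerator becomes $\tfrac{1}{2^V}\sum_\nn\frac{(\beta\kappa)^{|\nn|}}{\nn!}\,2^{C(\hat{|\nn|})}\,1\{\delta\nn=0\}\,1\{0\lr_{\hat{|\nn|}} R\}$, which by Theorem \eqref{app-app-thm:current-part-n} is exactly $Z_{G,\beta}\cdot\dP_{G,\beta}[0\lr_{\hat{|\nn|}} R]$, and dividing through gives the claim. The main obstacle is the parity/connectivity step, since $|\nn|_e$ can be arbitrarily large so the graph on which we count even subgraphs is a genuine multigraph rather than a simple graph; however, this is exactly the content already encoded in Lemma \eqref{app-lem:num-subloops}, and the same spanning-forest argument extends verbatim to the shifted parity target at $\{0,R\}$.
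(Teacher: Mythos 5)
Your proof is correct and follows the same overall route as the paper's: Taylor-expand the Boltzmann weights, integrate out $\theta$ to get the flow constraint, and expand the $\tau$-sum binomially to reduce the question to counting sub-multicurrents $m\le|\nn|$ with $\partial m=\{0,R\}$. The one place you diverge is at the final counting step. The paper dispatches it by citing a switching-type identity from \cite{duminil2017lectures}, namely that the number of $m\le|\nn|$ with $\partial m=\{0,R\}$ equals $1\{0\lr_{\hat{|\nn|}} R\}$ times the number with $\partial m=\emptyset$, which is then evaluated by Lemma \eqref{app-lem:num-subloops}. You instead establish the same identity directly via the spanning-forest argument: the parity-constrained subgraphs form an affine coset of the cycle space of the multigraph on $|\nn|$ edge copies, which has dimension $|\nn|-V+C(\hat{|\nn|})$, and a coset representative exists iff every cluster of $\hat{|\nn|}$ contains an even number of prescribed-odd vertices, i.e.\ iff $0\lr_{\hat{|\nn|}}R$. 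This is the content underlying both Lemma \eqref{app-lem:num-subloops} and the cited switching lemma, so your version is a bit more self-contained at the cost of re-deriving a standard fact. Both approaches are sound and yield identical intermediate identities.
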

\begin{proof}
    For notation simplicity, let us omit the subscripts $G,\beta$.
    Notice that we can repeat the proof in obtaining the previous Theorem \eqref{app-app-thm:current-part-n} so that
    \begin{align}
        Z \bra \tau_0 \tau_R\ket &= \sum_{\nn:\EE\to \dN} \frac{1}{\nn!} \left(\frac{\beta \kappa}{2}\right)^{|\nn|} \left[\sum_{\tau} \tau_0 \tau_R\prod_{e=ij\in E}  (1+\tau_i\tau_j)^{|\nn|_e} \right] \left[\int_\theta \prod_{i\in V} e^{-i\theta_i \delta_i \nn} \right] \\
        &= \sum_{\nn:\EE\to \dN} \frac{1}{\nn!} \left(\frac{\beta \kappa}{2}\right)^{|\nn|} \left[\sum_{m:E\to \dN, m\le n} \left( |\nn| \atop m\right)  1\{\partial m = 0,R\} \right] 1\{\delta \nn=0\}
    \end{align}
    By the switching lemma \cite{duminil2017lectures} states that
    \begin{equation}
        \sum_{m:E\to \dN, m\le n} \left( |\nn| \atop m\right)  1\{\partial m = 0,R\}= 1\{0 \lr_{\hat{|\nn|} } R\}\sum_{m:E\to \dN, m\le n} \left( |\nn| \atop m\right)  1\{\partial m = \emptyset\}
    \end{equation}
    Therefore,
    \begin{align}
        Z \bra \tau_0 \tau_R\ket &= \sum_{\nn:\EE\to \dN} 1\{0 \lr_{\hat{|\nn|}} R \} \times 2^{C(\hat{n})} \frac{(\beta \kappa)^{|\nn|}}{\nn !}   1\{\delta \nn=0\} \\
        \bra \tau_0 \tau_R\ket_G &= \dP[ 0 \lr_{\hat{|\nn|}} R]
    \end{align}
\end{proof}

\begin{lemma}
    \label{app-lem:num-subloops}
    Let $G=(V,E)$ denote a finite graph with vertices $V$ and edges $E$. 
    Let $n:E\to \dN$ be a current configuration on $G$ with trace $\hat{n}$ as defined in the main text. Then
    \begin{equation}
        \sum_{m:E\to \dN,m\le n} \left( n \atop m \right) 1\{ \partial m=\emptyset \} = 2^{n-V+C(\hat{n})}
    \end{equation}
    Where $C(\hat{n})$ denotes the number of clusters in the subgraph $\hat{n}$, and $m\le n$ denote $m_e\le n_e$ for all edges $e$ and
    \begin{equation}
        \left( n \atop m \right) = \prod_{e\in E} \left( n_e \atop m_e \right)
    \end{equation}
\end{lemma}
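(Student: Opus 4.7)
The plan is to evaluate the spin sum $S(n) \equiv \sum_{\tau \in \{\pm 1\}^V} \prod_{e=ij\in E} (1+\tau_i\tau_j)^{n_e}$ in two different ways and then compare. Since this is exactly the combinatorial object that appeared one step earlier in Theorem \eqref{app-app-thm:current-part-n} (the computation that reduces the $\dZ_2$ trace in the random-current expansion), the lemma is essentially the \emph{consistency} between an algebraic and a geometric evaluation of $S(n)$, and I would organize the proof around this identity.

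\textbf{Algebraic side.} First I would binomially expand each factor $(1+\tau_i\tau_j)^{n_e} = \sum_{m_e=0}^{n_e} \binom{n_e}{m_e}(\tau_i\tau_j)^{m_e}$, interchange the sum and product, and regroup the product $\prod_e (\tau_i\tau_j)^{m_e}$ as $\prod_{v\in V}\tau_v^{\delta_v m}$. The sum over $\tau_v \in \{\pm 1\}$ then gives $\sum_{\tau_v}\tau_v^{\delta_v m} = 2 \cdot 1\{\delta_v m \text{ even}\}$, so taking the product over all $v$ yields
\begin{equation}
S(n) = 2^{|V|}\sum_{m\le n,\,\partial m =\emptyset} \binom{n}{m},
\end{equation}
which is $2^{|V|}$ times the left-hand side of the lemma.

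\textbf{Geometric side.} Next I would note that $1+\tau_i\tau_j \in \{0,2\}$, with value $2$ iff $\tau_i=\tau_j$. Therefore, for a fixed $\tau$, the product $\prod_e (1+\tau_i\tau_j)^{n_e}$ vanishes unless every edge $e$ with $n_e\ge 1$ (i.e.\ every edge of the trace $\hat{n}$) has $\tau_i=\tau_j$; when it does not vanish, the product equals $\prod_e 2^{n_e} = 2^{|n|}$. Hence $S(n) = 2^{|n|}\cdot \#\{\tau : \tau \text{ is constant on each cluster of }\hat{n}\}$. Since $\tau$ is free to take either sign independently on each of the $C(\hat{n})$ connected components of $\hat{n}$ (including isolated vertices, which carry no constraint), this count is exactly $2^{C(\hat{n})}$, giving $S(n) = 2^{|n|+C(\hat{n})}$.

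\textbf{Conclusion.} Equating the two evaluations of $S(n)$ and dividing by $2^{|V|}$ yields the stated identity $\sum_{m\le n,\partial m=\emptyset}\binom{n}{m}=2^{|n|-|V|+C(\hat{n})}$. No step here is delicate; the only thing to be careful about is that $C(\hat{n})$ is taken with the convention that isolated vertices count as their own clusters, which is exactly what the cluster-constancy argument in the geometric step requires (and is the convention used elsewhere in the paper, cf.\ the caption of Fig.\ \ref{fig:current-Is-perc}). Hence there is no real obstacle; the ``work'' of the lemma is simply to recognize that the switching-lemma-style weight $\binom{n}{m}1\{\partial m=\emptyset\}$ arises as the Fourier/character expansion of the $\dZ_2$ edge weight $(1+\tau_i\tau_j)^{n_e}$, and that the same weight has a transparent geometric evaluation in terms of clusters of the trace.
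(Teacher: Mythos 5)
Your proof is correct, but it takes a genuinely different route from the paper. The paper's proof reinterprets the left-hand side as counting even sub-multigraphs of the multigraph $\sN$ obtained by duplicating each edge $e$ exactly $n_e$ times, and then cites Theorem 1.9.5 of Diestel's \textit{Extremal Graph Theory} (the dimension of the cycle space of a multigraph, $|E|-|V|+C$) to evaluate that count. You instead introduce the auxiliary spin sum $S(n)=\sum_{\tau}\prod_{e}(1+\tau_i\tau_j)^{n_e}$ and evaluate it twice: binomial expansion plus $\dZ_2$ character orthogonality gives $2^{|V|}$ times the left-hand side, while the observation that $(1+\tau_i\tau_j)^{n_e}\in\{0,2^{n_e}\}$ forces $\tau$ to be constant on each cluster of $\hat n$ and gives $2^{|n|+C(\hat n)}$ directly. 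Equating the two yields the identity with no external input. Your route buys self-containedness (no black-box citation) and makes transparent why $C(\hat n)$ appears, including the convention that isolated vertices count as singleton clusters; the paper's route is shorter on the page but outsources the nontrivial counting to a standard graph-theory reference. There is no circularity concern: the paper invokes this lemma inside Theorem \eqref{app-app-thm:current-part-n} precisely to replace the spin sum by $2^{|n|-|V|+C(\hat n)}$, whereas you are proving the lemma by computing that same spin sum a second, independent way, so nothing is assumed.
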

\begin{proof}
    For a given $n:E\to \dN$, construct a multigraph $\sN$ as in Fig. \ref{fig:current-Is-part}, i.e., each edge $e$ is duplicated $n_e$ times.
    Note that if $m:E\to \dN$ is such that $m \le n$, then we can construct a sub-multigraph $\sM$ of $\sN$ by choosing $m_e$ edge copies of the total $n_e$ edge copies. 
    Since for each edge $e$, there are exactly
    \begin{equation}
        \left( n_e \atop m_e \right)
    \end{equation}
    many ways to select $m_e$ edges, we see that the summation computes the number of sub-multigraphs $\sM$ of $\sN$ without endpoints, i.e.,
    \begin{equation}
        \sum_{m:E\to \dN,m\le n} \left( n \atop m \right) 1\{ \partial m=\emptyset \} = \sum_{\sM\subseteq \sN} 1\{ \partial \sM=\emptyset \}
    \end{equation}
    Where $\partial \sM$ is defined similarly as $\partial m$.
    Since the summation counts the number of loops, by Theorem 1.9.5. of Ref. \cite{diestel2017extremal}, the statement follows.
\end{proof}
\subsubsection{$U(1)$ Correlations}
\label{app:current-technical-U1}
Similar to the standard XY model as discussed in the main text, the directed currents $\nn$ are insufficient to establish a correspondence between the $U(1)$ spin-spin correlation and percolation events in the corresponding current representation.
Extra degrees of freedom correspond to cyclic decompositions $\bm{\sC}$ is required. 
More specifically, we have
\begin{theorem}
    \label{app-app-thm:current-part-C}
    Let $H$ be that given in Eq. \eqref{eq:U1-Z2}. 
    Then the partition function $Z_{G,\beta}$ on a finite graph $G$ is given by
    \begin{equation}
        Z_{G,\beta} = \frac{1}{2^{V}}\sum_{\bm{\sC} } 2^{C(\hat{n})} \frac{(\beta  \kappa)^{n}}{n !}   \frac{1}{\lambda_n}1\{\bm{\sC} \mapsto n\} 
    \end{equation}
    Where $\nn$ is the directed current induced by $\bm{\sC}$ (unpair all the directed edges at each vertex as discussed in the main text. See Fig. \ref{fig:current-XY}) and $n$ is the undirected current induced by $\nn$, i.e., $\bm{\sC} \mapsto \nn \mapsto n$ and 
    \begin{equation}
        \lambda_n = \prod_{i\in V} (\delta_i n/2)!
    \end{equation}
\end{theorem}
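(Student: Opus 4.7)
The plan is to start from the directed current representation already established in Theorem~\ref{app-app-thm:current-part-n} and reorganize the sum over zero-flow directed currents $\nn$ as a sum over cyclic decompositions $\bm{\sC}$. Because both the Boltzmann factor $(\beta\kappa)^{|\nn|}/\nn!$ and the cluster factor $2^{C(\hat{|\nn|})}$ depend on $\nn$ only through the induced undirected current $n=|\nn|$, the entire task reduces to verifying the purely combinatorial identity
\begin{equation*}
\sum_{\nn:\,|\nn|=n,\,\delta\nn=0}\frac{1}{\nn!}\;=\;\sum_{\bm{\sC}:\,\bm{\sC}\mapsto n}\frac{1}{n!\,\lambda_n}.
\end{equation*}

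This identity is established by a fiber count over the composite map $\bm{\sC}\mapsto\nn\mapsto n$. First I would show that for each directed current $\nn$ with $\delta\nn=0$ and undirected counterpart $n$, the number of cyclic decompositions mapping to $\nn$ factors as $(n!/\nn!)\times\lambda_n$. The first factor $n!/\nn!=\prod_e\binom{n_e}{\nn_{i\to j}}$ counts the assignments of a direction to each of the $n_e$ indistinguishable edge-copies on each undirected edge $e=ij$ consistent with $\nn$. The second factor $\lambda_n=\prod_i(\delta_i n/2)!$ counts, once the directions are fixed, the bijections at each vertex $i$ between its $\delta_i n/2$ outgoing half-edges and $\delta_i n/2$ incoming half-edges; as illustrated in Fig.~\ref{fig:current-XY-pair}, such a bijection at every vertex is exactly the data required to trace out the collection of directed cycles that comprises $\bm{\sC}$. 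Summing over fibers then yields
\begin{equation*}
\sum_{\bm{\sC}:\,\bm{\sC}\mapsto n}\frac{1}{n!\lambda_n}=\sum_{\nn}\frac{1}{n!\lambda_n}\cdot\frac{n!}{\nn!}\,\lambda_n=\sum_{\nn}\frac{1}{\nn!},
\end{equation*}
and substituting back into the formula of Theorem~\ref{app-app-thm:current-part-n} gives the claim.

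The main obstacle is pinning down the convention for $\bm{\sC}$ precisely enough that the fiber count is unambiguous: one has to decide whether cycles in $\bm{\sC}$ carry a labeled starting edge, a direction of traversal, and whether identical cycles are distinguished. The cleanest approach is to define $\bm{\sC}$ as the pair (direction assignment on edge-copies, pairing of half-edges at each vertex) with edge-copies implicitly labeled $1,\dots,n$, so that the product $(n!/\nn!)\lambda_n$ arises directly from the two independent choices and no explicit automorphism factor needs to be tracked. Alternative conventions --- multisets of unlabeled directed cycles with explicit cycle-rotation and repetition factors --- would produce the same final formula but require a more delicate accounting of symmetries; the labeled convention keeps the combinatorics transparent and matches the weight $1/(n!\,\lambda_n)$ appearing in the theorem statement. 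No further analytic input beyond Theorem~\ref{app-app-thm:current-part-n} is required, since the weights $(\beta\kappa)^n$ and $2^{C(\hat n)}$ are direction-blind and can be pulled through the reorganization unchanged.
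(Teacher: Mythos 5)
Your proof is correct and follows essentially the same route as the paper's: pass through the directed-current formula of Theorem~\ref{app-app-thm:current-part-n}, group by the undirected current $n=|\nn|$, and perform the fiber count $\bm{\sC}\mapsto\nn\mapsto n$ with the factor $(n!/\nn!)\cdot\lambda_n$ coming from the stacking order plus vertex pairings. One small imprecision in the prose: the factor $(\beta\kappa)^{|\nn|}/\nn!$ does \emph{not} depend on $\nn$ only through $n$ (the $\nn!$ in the denominator does depend on the direction assignment); however, your displayed combinatorial identity $\sum_{\nn:|\nn|=n,\,\delta\nn=0}1/\nn! = \sum_{\bm{\sC}:\bm{\sC}\mapsto n}1/(n!\,\lambda_n)$ correctly keeps the $\nn!$-dependence, so the argument itself is sound. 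Your explicit remark that $\bm{\sC}$ should be read as (direction assignment on labeled edge-copies, half-edge pairing at each vertex) makes the fiber count unambiguous; the paper relies on the same convention but only signals it via the ``stacking order'' footnote and Appendix~\ref{app:redundancies}.
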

\begin{proof}
    For notation simplicity, we shall omit the subscripts $G,\beta$.
    From Theorem \eqref{app-app-thm:current-part-n}, we see that
    \begin{equation}
        Z= \frac{1}{2^{V}}\sum_{\nn:\EE\to \dN} 2^{C(\hat{n})} \frac{(\beta \kappa)^{|\nn|}}{\nn !}  1\{\delta \nn=0\} 
    \end{equation}
    Let us rewrite this as
    \begin{align}
        Z &= \sum_{n:E\to \dN} 2^{C(\hat{n})} \frac{\left(\beta \kappa\right)^n }{n!}\underbrace{\left[ \sum_{\nn:\EE \to \dN} \frac{n!}{\nn!} 1\{|\nn| = n\}1\{\delta \nn = 0\}\right]}_{\sI_n}
    \end{align}
    Let us now attempt to compute $\sI_n$ in terms of cycle collections $\bm{\sC}$. 
    Indeed, given an undirected current $n:E\to \dN$ (see Fig. \eqref{fig:current-Is-part}) and a directed current $\nn$ which induces $n$, for each edge $e=ij$, there are
    \begin{equation}
        \frac{n_e!}{\nn_{i\to j}! \nn_{j\to i}}
    \end{equation}
    Many ways to assign $\nn_{i\to j}$ edge copies with the direction $i\to j$ and the rest with $j\to i$ (see. Fig. \eqref{fig:current-XY-part}. 
    Since this can be done independently for each edge, there are exactly $n!/\nn!$ many ways to assign directions to the undirected current $n$ so that it is consistent with the directed current $\nn$ (we refer to this as the \textit{stacking order}).
    As discussed in the main text, at each vertex, we can then pair up incoming and outgoing edges in any manner (see Fig. \eqref{fig:current-XY-pair}).
    Since the flow $\delta \nn =0$ for all vertices, we see that there are an equal number of incoming and outgoing edges and thus $(\delta_i n/2)!$ many ways to pair up edges.
    Since each vertex is independent, we see that there are exactly $\lambda_n$ many ways to pair up edges.
    Then the mapping from $\bm{\sC} \to n$ (reversing the decomposition from $n$ to $\nn$ and then to $\bm{\sC}$) must be $(n!/\nn! \times \lambda_n)$-to-1.
    Hence, we have
    \begin{equation}
        Z= \sum_{n:E\to \dN} 2^{C(\hat{n})} \frac{\left(\beta \kappa\right)^n }{n!}\times \frac{1}{\lambda_n} \left[ \sum_{\bm{\sC}} 1\{\bm{\sC} \mapsto n\}\right]
    \end{equation}
    Hence, the statement follows.
\end{proof}

\begin{theorem}
    Let $\dP_{G,\beta}$ be the probability distribution on directed cycle collections $\bm{\sC}$ defined on a finite graph $G$ with weights
    \begin{equation}
        \dP_{G,\beta}[\bm{\sC}] \propto 2^{C(\hat{n})} \frac{(\beta  \kappa)^{n}}{n !}   \frac{1}{\lambda_n}1\{\bm{\sC} \mapsto n\} 
    \end{equation}
    Where the notation is as in the previous Theorem \eqref{app-app-thm:current-part-C}, and let $\bra\cdots \ket_{G,\beta}$ be the thermal average with respect to the $U(1)\times \dZ_2$ Hamiltonian in Eq. \eqref{eq:U1-Z2}. Then
    \begin{equation}
        \frac{1}{2} \dP_{G,\beta}[0\lr_{\sC} R] \le \bra \cos 2(\theta_0 -\theta_R) \ket_{G,\beta} \le \dP_{G,\beta}[0\lr_{\sC} R]
    \end{equation}
\end{theorem}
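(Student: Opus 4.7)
The plan is to rerun the Taylor-expansion argument used in the proofs of Theorems~\ref{app-app-thm:current-part-n} and~\ref{app-app-thm:current-part-C}, but now for the unnormalized correlation $Z\bra \cos 2(\theta_0 -\theta_R)\ket$ rather than for $Z$ itself. Writing $\cos 2(\theta_0-\theta_R) = \tfrac12\bigl(e^{2i(\theta_0-\theta_R)} + e^{-2i(\theta_0-\theta_R)}\bigr)$ and expanding each Boltzmann factor exactly as in Theorem~\ref{app-app-thm:current-part-n}, the angular integral $\int_\theta \prod_i e^{-i\theta_i \delta_i \nn}\, e^{\pm 2i(\theta_0-\theta_R)}$ now enforces the shifted flow constraint $\delta_i \nn = \pm 2(\delta_{i,0}-\delta_{i,R})$ in place of $\delta\nn=0$. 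The $\dZ_2$ sum is unaffected, so Lemma~\ref{app-lem:num-subloops} again produces the factor $2^{|\nn|-V+C(\hat{|\nn|})}$, and one arrives at
\begin{equation*}
Z\, \bra \cos 2(\theta_0-\theta_R)\ket = \frac{1}{2^{V+1}}\sum_{\sigma=\pm}\sum_{\nn} 2^{C(\hat{|\nn|})}\frac{(\beta\kappa)^{|\nn|}}{\nn!}\, 1\bigl\{\delta\nn = 2\sigma(\delta_0-\delta_R)\bigr\}.
\end{equation*}

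Next I would convert this constrained current sum into a sum over cycle collections $\bm{\sC}$ by mimicking the pairing procedure of Theorem~\ref{app-app-thm:current-part-C}. At every vertex other than $0$ and $R$ the pairing of incoming and outgoing half-edges proceeds exactly as before, with $(\delta_i n/2)!$-fold multiplicity. At the source $0$ and the sink $R$, however, an excess of two half-edges cannot be paired internally; linking them across vertices produces either two directed paths from $0$ to $R$, or a single such path together with a longer loop visiting both $0$ and $R$. In all cases, closing the residual flow with an auxiliary directed edge between $0$ and $R$ turns the result into an ordinary cycle collection of the type considered in Theorem~\ref{app-app-thm:current-part-C}, and by construction at least one loop in that collection contains both $0$ and $R$, so the indicator $1\{0\lr_{\sC} R\}$ is automatically present.

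Comparing the weight of each such $\bm{\sC}$ with its weight in the unconstrained sum of Theorem~\ref{app-app-thm:current-part-C} and dividing by $Z$ then yields the probabilistic expression. Two combinatorial factors need to be tracked carefully: the prefactor $\tfrac12$ from expressing $\cos 2\theta$ as a sum of two exponentials, and the twofold freedom in orienting the distinguished $0$-$R$ loop (the percolation event $\{0\lr_{\sC} R\}$ in the paper's convention does not distinguish the two orientations of that loop). These partially cancel against the modified multiplicity $\lambda_n$ at the source and sink — where the distinguished unpaired half-edges attach to a specific loop rather than being counted via the usual unordered pairings — to produce an overall multiplicative factor lying in $[\tfrac12,1]$ relative to $\dP_{G,\beta}[0\lr_{\sC} R]$.

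The main obstacle is precisely this combinatorial bookkeeping at $0$ and $R$: the source-sink excess forces the ``loop through $0,R$'' to play a distinguished role compared to ordinary cycles, and one must verify that the resulting overcounting lies exactly in $[\tfrac12,1]$ rather than picking up spurious factors depending on $\deg 0$ or $\deg R$. This in essence amounts to adapting the switching-type lemma of Ref.~\cite{van2023elementary} for the standard XY model to the present $U(1)\times\dZ_2$ weights, using the fact that each oriented loop in $\bm{\sC}$ through both $0$ and $R$ corresponds to exactly one unoriented percolation connection and to exactly two choices of $\sigma\in\{\pm\}$ in the original sum.
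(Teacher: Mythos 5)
Your setup (Taylor expansion of Boltzmann weights, shifted flow constraint $\delta\nn = \pm 2(\delta_0 - \delta_R)$, reuse of Lemma~\ref{app-lem:num-subloops}, conversion to cycle/path collections via vertex pairings) matches the paper's opening moves. But the core mechanism that produces the bracket $[\tfrac12,1]$ is missing, and the one you propose is not where the factor comes from.

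The factor of $\tfrac12$ from writing $\cos 2\theta = \tfrac12(e^{2i\theta}+e^{-2i\theta})$ plays no role in the final bound: by the $\theta\mapsto-\theta$ symmetry of the Hamiltonian the two choices $\sigma=\pm$ contribute equally, so the $\tfrac12$ is exactly compensated and one may simply work with $\bra \sigma_R^2\bar\sigma_0^2\ket$ and a single flow constraint $\delta\nn = 2\delta_0-2\delta_R$. Likewise, your ``twofold freedom in orienting the distinguished loop'' is a reparametrization of the already-counted directed decompositions, not a new source of slack. What the paper actually does at this point is a \emph{path-reversal} argument, the analogue of the Ising switching lemma: one cuts the decomposition at $0$ and $R$, observes that any $\bm{\sC}\in\bm{\sL}_n^{0,R}(f^{(2)})$ has $|\bm{\sP}_{0\to R}(\bm{\sC})|\ge 2$ directed paths from $0$ to $R$, and reverses one such path to land injectively in $\bm{\sL}_n^{0,R}(f\equiv 0)$. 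Summing the uniform weight $1/|\bm{\sP}_{0\to R}|$ over the pointers $(\bm{\sC},\bm{\gamma})$ and transporting it through the bijection yields, after re-gluing at all other vertices, the exact expectation
\begin{equation*}
\bra \sigma_R^2\bar\sigma_0^2\ket \;=\; \dE\!\left[\frac{|\bm{\sP}_{0\to R}(\bm{\sC})|}{|\bm{\sP}_{0\to R}(\bm{\sC})|+1}\right],
\end{equation*}
and the bracket $[\tfrac12,1]$ is then the elementary inequality $\tfrac12\,1\{k\ge 1\} \le \tfrac{k}{k+1} \le 1\{k\ge 1\}$. Your proposal does not identify this ratio and instead relies on an ``auxiliary directed edge between $0$ and $R$'' construction plus an asserted cancellation against $\lambda_n$ at the source and sink; neither step is justified, and in particular the auxiliary-edge trick changes the underlying graph and hence the weight $2^{C(\hat n)}$, which is not addressed. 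So there is a real gap: the switching/path-reversal lemma is the theorem, and your sketch stops exactly at the point where it is needed.
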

\begin{figure}[ht]
\subfloat[\label{fig:current-cut}]{%
  \centering
  \includegraphics[width=0.4\columnwidth]{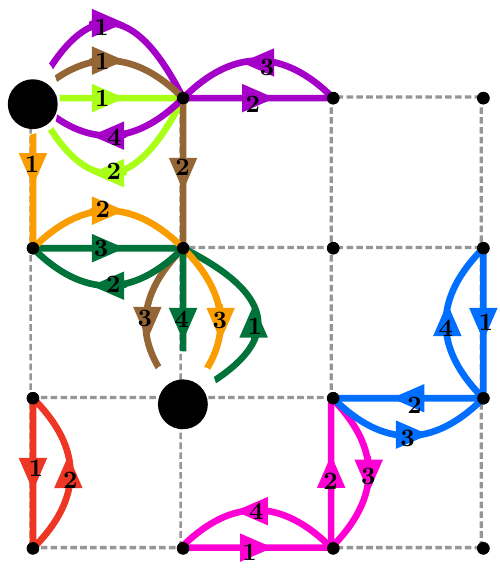}
}
\hspace{0.1\columnwidth}
\subfloat[\label{fig:cluster-cut-reverse}]{%
  \centering
  \includegraphics[width=0.4\columnwidth]{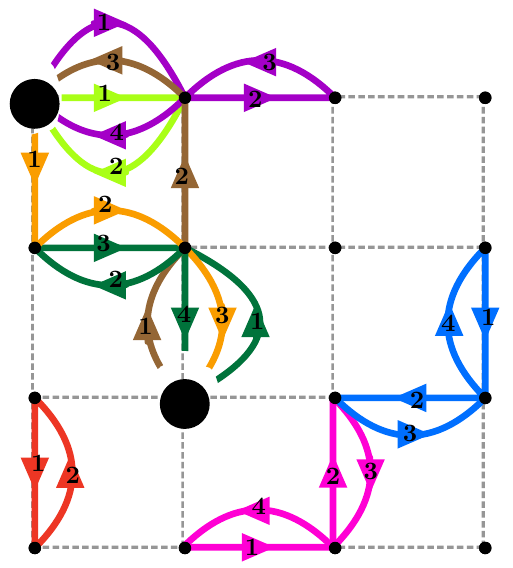}
}
\caption{Current Representation on $\dZ^2$. 
(a). Some cycle/path decomposition with cut at lattice sites $0,R$, wich are denoted by the large black dots. Distinct colors denote distinct loops/paths, while the numbers indicate the order of edges. 
Note, for example, the purple path start and ends at one of the cuts, and thus should be regarded as a path rather than a loop (the numbers are not modulo the length of the path, e.g., 4).
(b). The cycle/path decomposition of Fig. \ref{fig:current-cut}, but with the brown path reversed, i.e., $(\bm{\sC},\bm{\gamma})\mapsto (\bm{\sC}',\bar{\bm{\gamma}})$ where $\bm{\sC}'=\bm{\sC} \Delta \{\bm{\gamma},\bar{\bm{\gamma}}\}$ (where $\Delta$ denotes the symmetric difference).}
\label{fig:current-U1-Z2}
\end{figure}
\begin{proof}
    The proof follows that given in Ref. \cite{van2023elementary}, in which the $U(1)$ spin-spin correlation (but for the standard XY model) was related to a percolation event by reversing one of the 2 ``paths" from $0\to R$, and thus forming a new cycle. 
    Since path reversal does not change the undirected current $n$, the proof can be applied to our model, which only differs from the XY model by a extra weight of $2^{C(\hat{n})}$. 
    Indeed, we provide the details here for the $U(1)\times \dZ_2$ Hamiltonian, in a way that (the author feels) is more physically motivated (though ultimately the same as in Ref. \cite{van2023elementary}, which focuses a bit more on rigorous definitions).

    As before, we shall omit the subscripts $G,\beta$ for notation simplicity.
    Notice that we can repeat the proof of Theorem \eqref{app-app-thm:current-part-n} and obtain
    \begin{align}
        Z \bra \sigma_R^2 \bar{\sigma}_0^2 \ket &= \sum_{\nn:\EE\to \dN} 2^{C(\hat{|\nn|})} \frac{(\beta \kappa)^{|\nn|}}{\nn!} 1\{\delta \nn = 2\delta_0-2\delta_R\}
    \end{align}
    Where the flow $\delta \nn$ of the current is zero everywhere except at the latice sites 0,R, at which $\delta_i \nn =\pm 2$, respectively.
    It's then evident that the unnormalized correlation $Z \bra \sigma_R^2 \bar{\sigma}_0^2 \ket$ is equal to a summation over currents $\nn$ which have two ``paths" from $0\to R$; instead of loop configurations as in the partition function $Z$.
    Therefore, the intuition is to reverse one of the two ``paths" from $0\to R$ so that the resulting configuration is a loop configuration.
    The definition of a ``path" in its current form, however, is a bit ambiguous to achieve this, and thus warrants us to pair up incoming \& outgoing edges at each lattice site as done for the partition function in Theorem \eqref{app-app-thm:current-part-C}. 
    More concretely, rewrite the unnormalized correlation as such
    \begin{align}
        Z \bra \sigma_R^2 \bar{\sigma}_0^2 \ket &= \sum_{n:E\to \dN} 2^{C(\hat{n})}\frac{1}{n!} \left(\frac{\beta}{2}\right)^n \underbrace{\left[ \sum_{\nn:\EE \to \dN} \frac{n!}{\nn!} 1\{|\nn| = n\}1\{\delta \nn = f^{(2)}\}\right]}_{\sI_n^{(2)}}
    \end{align}
    Where we have introduced an extra degree of freedom corresponding to undirected currents $n:E\to \dN$ and $f^{(2)} = 2\delta_0-2\delta_R$.
    The summation $\sI_n^{(2)}$ is regulated by the condition $1\{|\nn| = n\}$, i.e., only sum over direct currents $\nn$ which induce $n$.
    As before in Theorem \eqref{app-app-thm:current-part-C}, given a fixed undirect current $n$, for each edge $e=ij\in E$, there are exactly $n!/\nn!$ ways to assign directions so that it is consistent with $\nn$ (see Fig. \eqref{fig:current-XY-part}).
    To generalize the notion of cycle decompositions $\bm{\sC}$ used in the previous theorem, let us define
    \begin{enumerate}
        \item A \textit{cycle/path decomposition} $\bm{\sC}$ of the undirected current $n$ is a partition of $n$ into directed paths and directed loops (loosely speaking, uses up all the edge copies in $n$ exactly once). See Fig. \ref{fig:current-cut}
        \item $\bm{\sC}$ has a \textit{cut} at lattice sites $S\subseteq V$ if every directed path/loop in $\bm{\sC}$ is segmented at lattice sites in $S$. See Fig. \ref{fig:current-cut}. 
        \item Denote $\bm{\sL}^{S}_n(f)$ to be the collection of all cycle/path decompositions $\bm{\sC}$ on $n$ with cuts $S$ and satisfies the flow equation $\delta \nn = f$ where $\bm{\sC}\mapsto \nn$. 
        In the case where $f=0$ everywhere and $S=\emptyset$, we denote $\bm{\sL}_n\equiv \bm{\sL}_n^\emptyset (f\equiv 0)$, and denote $\bm{\sL} \equiv \bigsqcup_{n} \bm{\sL}_n$ be the union of all possible cycle/path decompositions with no cuts and zero flow. The elements in $\bm{\sL}$ are exactly the cycle decomposition we defined previously for the partition function.
    \end{enumerate}
    Based on our previous observation, it's then clear that $\sI_n^{(2)} = |\bm{\sL}_n^V(f^{(2)})|$. 
    Moreover, notice that given cycle/path decomposition $\bm{\sC} \in \bm{\sL}_n^V(f^{(2)})$ (since we cut at every lattice site, $\bm{\sC}$ can be regarded as the collection of directed edge (copies)), and a lattice site $i\in V,i\ne 0,R$, there are exactly $\delta_i n/2$ incoming \& outgoing edges, respectively.
    Therefore, there are exactly $(\delta_i n/2)!$ many ways to pair up incoming \& outgoing edges at $i$, in which each pairing induces a distinct cycle decomposition $\bm{\sC}'\in \bm{\sL}_n^{V-i} (f^{(2)})$.
    Conversely, every cycle decomposition $\bm{\sC}'\in \bm{\sL}_n^{V-i} (f^{(2)})$ (with $i\ne 0,R)$ induces a cycle decomposition in $\bm{\sL}_n^{V} (f^{(2)})$ (by segmenting at $i$).
    Therefore, the mapping is $(\delta_i n/2)!$-to-$1$ from $\bm{\sL}_n^{V-i} (f^{(2)}) \to \bm{\sL}_n^{V} (f^{(2)})$, i.e., given fixed $\bm{\sC} \in \bm{\sL}_n^V(f^{(2)})$, we have
    \begin{equation}
        \sum_{\bm{\sC}'\in \bm{\sL}_n^{V-i} (f^{(2)})} 1\{\bm{\sC}' \mapsto \bm{\sC}\} = (\delta_i n/2)!
    \end{equation}
    And in particular,
    \begin{equation}
        \frac{|\bm{\sL}_n^{V-i}(f^{(2)})|}{|\bm{\sL}_n^V(f^{(2)})|} = (\delta_i n/2)!
    \end{equation}
    We can repeat this argument inductively on all lattice sites $i\ne 0,R$ to obtain
    \begin{equation}
        \sum_{\bm{\sC}'\in \bm{\sL}_n^{0,R} (f^{(2)})} 1\{\bm{\sC}' \mapsto \bm{\sC}\} =
        \frac{|\bm{\sL}_n^{0,R}(f^{(2)})|}{|\bm{\sL}_n^V(f^{(2)})|} =  \lambda^{0,R}_n
    \end{equation}
    Where we use $\lambda^S_n$ to denote
    \begin{equation}
        \lambda^S_n \equiv \prod_{i\notin S} (\delta_i n/2)!, \quad \lambda_n = \lambda^\emptyset_n
    \end{equation}
    Therefore, 
    \begin{equation}
        \sI_n^{(2)} = \frac{1}{\lambda^{0,R}_n} |\bm{\sL}_n^{0,R}(f^{(2)})| =\frac{1}{\lambda^{0,R}_n} \sum_{\bm{\sC}\in \bm{\sL}_n^{0,R}(f^{(2)})} 1
    \end{equation}
    Notice that for every $\bm{\sC}\in \bm{\sL}_n^{0,R}(f^{(2)})$ (see Fig. \eqref{fig:current-cut}), the number of directed paths from $0 \to R$ in the collection $\bm{\sC}$ must be 2 more than that from $R\to 0$, i.e., $|\bm{\sP}_{0\to R}(\bm{\sC})| =|\bm{\sP}_{R\to 0}(\bm{\sC})| + 2 \ge 2$, where $\bm{\sP}_{a\to b}(\bm{\sC})$ is the subset of $\bm{\sC}$ consisting of directed path from $a\to b$.
    For every directed path $\bm{\gamma}\in \bm{\sP}_{0\to R}(\bm{\sC})$, we can reverse the direction to obtain $\bar{\bm{\gamma}}$, and replace $\bm{\gamma}\mapsto \bar{\bm{\gamma}}$ within the collection $\bm{\sC}$ to obtain the new decomposition, i.e., $\bm{\sC}' =\bm{\sC}\Delta \{\bm{\gamma},\bar{\bm{\gamma}}\}$ (where $\Delta$ is the symmetric difference).
    See Fig. \ref{fig:cluster-cut-reverse} for an example.
    The induced cycle decomposition satisfies $\bm{\sC}'\in \bm{\sL}_n^{0,R}(f\equiv 0)$ and $\bar{\bm{\gamma}}\in \bm{\sP}_{R\to 0} (\bm{\sC}')$, and thus we obtain a mapping $(\bm{\sC},\bm{\gamma}) \mapsto (\bm{\sC}',\bar{\bm{\gamma}})$.
    A similar argument shows that the mapping is injective (1-to-1), and thus
    \begin{align}
         \sum_{\bm{\sC}\in \bm{\sL}_n^{0,R}(f^{(2)})} 1 &=  \sum_{\bm{\sC}\in \bm{\sL}_n^{0,R}(f^{(2)})} \sum_{\bm{\gamma}\in \bm{\sP}_{0\to R}(\bm{\sC})} \frac{1}{|\bm{\sP}_{0\to R}(\bm{\sC})|} \\
         &= \sum_{(\bm{\sC},\bm{\gamma})} \frac{1}{|\bm{\sP}_{0\to R}(\bm{\sC})|} \\
         &=  \sum_{(\bm{\sC}',\bm{\gamma}')} \frac{1}{|\bm{\sP}_{0\to R}(\bm{\sC'})| +1} \\
         &= \sum_{\bm{\sC}' \in \bm{\sL}_n^{0,R}(f\equiv 0)} \frac{|\bm{\sP}_{0\to R}(\bm{\sC'})|}{|\bm{\sP}_{0\to R}(\bm{\sC'})| +1}
    \end{align}
    Notice that by a similar argument, we have
    \begin{equation}
        \sum_{\bm{\sC}\in \bm{\sL}_n^{0,R} (f\equiv 0)} 1\{\bm{\sC} \mapsto \bm{\sC}'\} =
        \frac{|\bm{\sL}_n|}{|\bm{\sL}_n^{0,R}(f^{(2)})|} = \prod_{i= 0,R} (\delta_i n/2)! 
    \end{equation}
    Therefore,
    \begin{align}
        \frac{1}{\lambda_n^{0,R}}\sum_{\bm{\sC}\in \bm{\sL}_n^{0,R}(f^{(2)})} 1 &= \frac{1}{\lambda_n}\sum_{\bm{\sC} \in \bm{\sL}_n} \frac{|\bm{\sP}_{0\to R}(\bm{\sC})|}{|\bm{\sP}_{0\to R}(\bm{\sC})| +1}
    \end{align}
    Where we have abused notation and also use $\bm{\sP}_{0\to R}(\bm{\sC})$ to denote the collection of directed paths from $0 \to R$ after cutting $\bm{\sC}$ at $0,R$.
    In particular, we find that
    \begin{align}
        Z \bra \sigma_R^2 \bar{\sigma}_0^2 \ket &= \sum_{n:E\to \dN} 2^{C(\hat{n})}\frac{1}{n!} \left(\frac{\beta}{2}\right)^n  \frac{1}{\lambda_n}\sum_{\bm{\sC}\in \bm{\sL}_n} \frac{|\bm{\sP}_{0\to R}(\bm{\sC})|}{|\bm{\sP}_{0\to R}(\bm{\sC})| +1} \\
        &= \sum_{\bm{\sC}\in \bm{\sL}} 2^{C(\hat{n})} \frac{1}{n!} \left(\frac{\beta}{2}\right)^n  \frac{1}{\lambda_n} \times  \frac{|\bm{\sP}_{0\to R}(\bm{\sC})|}{|\bm{\sP}_{0\to R}(\bm{\sC})| +1}
    \end{align}
    Where it's understood that $n:E\to \dN$ the current obtained from $\bm{\sC}\in \bm{\sL}$, i.e., $\bm{\sC}\mapsto \nn \mapsto n$.
    Therefore,
    \begin{equation}
        \bra \sigma_R^2 \bar{\sigma}_0^2 \ket= \dE \left[ \frac{|\bm{\sP}_{0\to R}(\bm{\sC})|}{|\bm{\sP}_{0\to R}(\bm{\sC})| +1}\right]
    \end{equation}
    Notice, however, that
    \begin{equation}
         \frac{1}{2}  1\{ |\bm{\sP}_{0\to R}(\bm{\sC})| \ge 1\} \le \frac{|\bm{\sP}_{0\to R}(\bm{\sC})|}{|\bm{\sP}_{0\to R}(\bm{\sC})| +1}\le 1\{ |\bm{\sP}_{0\to R}(\bm{\sC})| \ge 1\}
    \end{equation}
    Therefore,
    \begin{equation}
        \frac{1}{2} \dP[|\bm{\sP}_{0\to R}(\bm{\sC})| \ge 1] \le \bra \sigma_R^2 \bar{\sigma}_0^2 \ket \le \dP [|\bm{\sP}_{0\to R}(\bm{\sC})| \ge 1]
    \end{equation}
    Notice that the event $\{ |\bm{\sP}_{0\to R}(\bm{\sC})| \ge 1\}$ is exactly the event $\{0\lr_{\sC} R\}$. Therefore, the statement follows.
\end{proof}
\subsubsection{Redundancies in the Percolation Event $\{0\lr_{\sC} R\}$}
\label{app:redundancies}
As discussed in the main text, there are redundancies when considering the percolation event $\{0 \lr_{\bm{\sC}} R\}$. 
\begin{enumerate}
    \item The event $\{0 \lr_{\bm{\sC}} R\}$ does not depend on the direction of each directed cycle $\bm{\gamma}_i\in \bm{\sC}$ in the cycle decomposition $\bm{\sC} = \{\bm{\gamma}_1,...,\bm{\gamma}_\ell \}$, and thus there exists a $2^\ell$ degeneracy.
    \item Since $\bm{\sC}$ is a cycle decomposition of a unique current $n:E\to \dN$, there is a $n!\equiv \prod_{e\in E} n_e!$ redundancy of which edge (copy) of $e=ij \in E$ the cycle traverses (referred to as the \textit{stacking order}).
    Since the percolation event $\{0 \lr_{\sC} R\}$ does not depend on the stacking order, we can integrate over this redundancy to obtain a factor of $n!$.
\end{enumerate}
After removing the direction and stacking order, the resulting equivalence class is a collection $\sC =\{\gamma_1,...,\gamma_\ell\}$ of undirected closed random walks $C_i$ on $G$, i.e., $\bm{\sC} \mapsto \sC$ is a $2^\ell n!$-to-1 mapping.
In particular, we find that
\begin{equation}
    \dP[\sC] \propto 2^{C(\omega_{\sC})}\left[\prod_{\gamma\in \sC} 2 \left(\frac{\beta}{2}\right)^{|\gamma|} \right] \frac{1}{\lambda_{\sC}}1\{\sC\in \sL\}  \propto 2^{C(\omega_\sC)} \dP_G^{\mathrm{XY}}[\sC]
\end{equation}
Where $|\gamma|$ is the length of the closed random walk $\gamma \in \sC$, while $\sL$ is the collection of all possible sets of closed random walks. 
We also denote $\omega_{\sC}:E\to \{0,1\}$ to be the subgraph consisting of edges traversed by the random walks $\gamma\in \sC$, so that $C(\omega_{\sC})$ is the number of clusters in $\omega_{\sC}$.
Similarly, $\lambda_{\sC}$ is defined to be
\begin{equation}
    \lambda_{\sC} = \prod_{i\in V} N_i (\sC)!, \quad N_i(\sC) \equiv \sum_{\gamma\in \sC} N_i(\gamma)
\end{equation}
Where $N_i(\gamma)$ is the number of times the closed random walk visits site $i\in V$. Therefore, we find that
\begin{align}
    \dP_G[0\lr_{\hat{n}} R] &= \dP_G[0\lr_{\omega_\sC} R] \\
    \dP_G[0\lr_{\bm{\sC}} R] &= \dP_G[0\lr_{\sC} R]
\end{align}
Where the percolation event $\{0\lr_{\sC} R\}$ corresponds to the event in which a random walk $\gamma\in \sC$ visits both lattice sites $0,R$.

\section{Simplified Models - Critical Regime}
\label{app:simple}

So far, we have shown that the system in Eq. \eqref{eq:U1-Z2} cannot have a floating phase, i.e., must satisfy $T_\text{TRSB} \ge T_c$ on any lattice. 
Since it corresponds to the critical regime $\ell_\infty$ in the strong coupling limit, it is expected to have a single phase transition and thus the converse inequality $T_\text{TRSB} \le T_c$ should also hold, albeit being much more difficult to prove.
Therefore, in this section, we will attempt to provide insight by studying two simplified models, i.e., replacing the $U(1)$ degree of freedom $\sigma$ in Eq. \eqref{eq:U1-Z2} with $\dZ_4$ and $\dZ_2$ clock models.
In fact, with this simplification, we are capable of studying a larger class of $U(1)\times \dZ_2$ Hamiltonian, i.e., before simplification,
\begin{equation}
    \label{eq:U1-Z2-rho}
    H_\rho (\sigma,\tau)= -\sum_{e=ij} \kappa_e (\rho_+ +\rho_- \tau_i\tau_j)\sigma_i \cdot\sigma_j
\end{equation}
Where $\rho \in [0,1]$ and $\rho_\pm =1\pm \rho$. Note that $\rho =0$ corresponds to the Hamiltonian in Eq. \eqref{eq:U1-Z2}.
The simplifications $U(1)\mapsto \dZ_2,\dZ_4$ can then be regarded as adding an arbitrary interaction $-\lambda_2 \sum_i \cos (2\theta_i)$ and $-\lambda_4 \sum_i \cos (4\theta_i)$ and taking the limit $\lambda_2,\lambda_4 \to \infty$, respectively.

Similar to Eq. \eqref{eq:U1-Z2}, the larger class $H_\rho$ with constant $\kappa_e = 1$ can be mapped to the strong coupling limit of twisted bilayer BSCCO and frustrated $n\ge 3$-band superconductors. 
Indeed, the latter was derived in Eq. (8) of Ref. \cite{bojesen2014phase}.
The former corresponds to the case where there is nonzero 1\ts{st} order Josephson coupling $J_1$ between the two layers, i.e., in addition to Eq. \eqref{eq:H-J2}, we have
\begin{equation}
    \label{eq:H-J1J2}
    H_{J_1,J_2} (\phi^\pm) = H_{J_2}(\phi^\pm) - J_1\sum_i \cos \phi_i
\end{equation}
In this case, $H_\rho$ corresponds to the Hamiltonian $H_{J_1,J_2}$ with $\rho = J_1/4J_2$ kept fixed while we take $J_2 \to \infty$.

Due to the correspondence, it is conjectured that the two transitions $T_c, T_\text{TRSB}$ coincide at $\rho=0$ and split for $\rho\ne 0$, i.e., $T_\text{TRSB} <T_c$. 
Indeed, an exact understanding of $H_{J_1,J_2}$ is known within the context of mean-field theory \cite{yuan2023inhomogeneity} without the need for simplifications.
Therefore, the simplifications $U(1)\mapsto \dZ_4, \dZ_2$ are to probe the behavior of the system in low dimensions (though we will prove the statements for any lattice).
\subsection{$U(1) \mapsto \dZ_2$}
Consider the simplification
\begin{equation}
    \label{eq:Z2-Z2-rho}
    H_\rho^{\dZ_2}(\sigma,\tau) = -\sum_{e=ij} \kappa_e (\rho_+ +\rho_- \tau_i\tau_j)\sigma_i \sigma_j
\end{equation}
Where $\sigma_i =\pm 1$.
Since $(\sigma,\tau)\mapsto (\sigma,\tau\sigma)$ (where $\tau \sigma$ denotes site-wise multiplication) is a bijective mapping between spin configurations, we see that $H_\rho^{\dZ_2}$ is equivalent to decoupled Ising models, and thus we have the following statement.
\begin{theorem}
    \label{thm:Z2-map}
    Let $G$ be any finite graph, and $\bra\cdots \ket^{\mathrm{Is}}_{\beta}$ denotes the thermal average with respect to the Ising model (with $s_i =\pm 1$) on $G$ with edge-coupling $\kappa_e$ and inverse temperature $\beta$. Then
    \begin{align}
        \bra \sigma_0\sigma_R \ket_{\rho,\beta}^{\dZ_2} &= \bra s_0 s_R\ket^\mathrm{Is}_{\beta\rho_+} \\
        \bra \tau_0\tau_R \ket_{\rho,\beta}^{\dZ_2} &= \bra s_0 s_R\ket^\mathrm{Is}_{\beta\rho_+} \bra s_0 s_R\ket^\mathrm{Is}_{\beta\rho_-}
    \end{align}
\end{theorem}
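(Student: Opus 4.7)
The plan is to perform an explicit bijective change of variables on the spin configurations that diagonalizes the interaction, reducing $H_\rho^{\dZ_2}$ to a pair of \emph{decoupled} Ising models. Concretely, define new spins $s_i \equiv \sigma_i$ and $t_i \equiv \tau_i\sigma_i$; since $\sigma_i,\tau_i\in\{\pm 1\}$, the map $(\sigma,\tau)\mapsto(s,t)$ is an involution on $\{\pm 1\}^V\times\{\pm 1\}^V$ (the inverse being $\sigma_i=s_i$, $\tau_i=s_it_i$), so it preserves the uniform counting measure used in the Gibbs sum.

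Next, I would substitute into the Hamiltonian. Using $\sigma_i^2=1$, we get
\begin{equation}
    \sigma_i\sigma_j = s_is_j, \qquad \tau_i\tau_j\sigma_i\sigma_j = (t_is_i)(t_js_j)s_is_j = t_it_j,
\end{equation}
so that
\begin{equation}
    H_\rho^{\dZ_2}(\sigma,\tau) = -\sum_{e=ij}\kappa_e\bigl[\rho_+ s_is_j + \rho_- t_it_j\bigr] \equiv H^{\mathrm{Is}}_{\rho_+\kappa}(s) + H^{\mathrm{Is}}_{\rho_-\kappa}(t).
\end{equation}
Hence under the new variables the Boltzmann weight factorizes as a product of two independent Ising Gibbs measures on $G$, one with coupling $\beta\rho_+\kappa_e$ (for $s$) and the other with coupling $\beta\rho_-\kappa_e$ (for $t$).

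Finally, I would read off the correlators by rewriting the observables in the $(s,t)$ variables. Since $\sigma_0\sigma_R=s_0s_R$, the first identity is immediate by independence from the $t$-sector:
\begin{equation}
    \bra\sigma_0\sigma_R\ket^{\dZ_2}_{\rho,\beta} = \bra s_0s_R\ket^{\mathrm{Is}}_{\beta\rho_+}.
\end{equation}
For the second identity, note $\tau_0\tau_R = (s_0t_0)(s_Rt_R) = (s_0s_R)(t_0t_R)$, and the two factors live in independent sectors, so
\begin{equation}
    \bra\tau_0\tau_R\ket^{\dZ_2}_{\rho,\beta} = \bra s_0s_R\ket^{\mathrm{Is}}_{\beta\rho_+}\,\bra s_0s_R\ket^{\mathrm{Is}}_{\beta\rho_-},
\end{equation}
as claimed. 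There is no genuine obstacle here: the only subtlety is to verify that the change of variables is a bijection on the configuration space (so the partition function transforms correctly) and that the $s$- and $t$-sectors are \emph{completely} decoupled in the new Hamiltonian, both of which are immediate from $\sigma_i^2=1$. Note the hypothesis $\rho\in[0,1]$ ensures $\rho_\pm\ge 0$, so the decoupled couplings remain ferromagnetic and the Ising correlators on the right-hand side are well-defined and nonnegative.
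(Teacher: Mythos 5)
Your proposal is correct and uses exactly the same change of variables as the paper: $\sigma\mapsto\sigma$, $\tau\mapsto\tau\sigma$ (what you call $(s,t)$, the paper calls $(\sigma,\tau')$ with $\tau'=\tau\sigma$), which decouples the Hamiltonian into two independent ferromagnetic Ising models at couplings $\beta\rho_\pm\kappa$. The only difference is that you spell out the transformation of the observables $\sigma_0\sigma_R$ and $\tau_0\tau_R$ explicitly, whereas the paper leaves it as ``a similar transform can be applied to the correlation functions.''
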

\begin{proof}
    Notice that the Hamiltonian is given by
    \begin{equation}
        H^{\dZ_2}_\rho = -\sum_{e=ij\in E} \kappa_e (\rho_+ +\rho_- \tau_i \tau_j) \sigma_i \sigma_j
    \end{equation}
    When computing the partition function, we need to sum over all spin configurations $(\tau,\sigma)$. 
    Notice, however, that $(\tau,\sigma)\mapsto (\tau \sigma,\sigma)$ is a bijective map, and thus we have
    \begin{align}
        Z_{\rho,\beta}^{\dZ_2} &= \sum_{\tau,\sigma} \exp\left[ \beta\sum_{e=ij\in E} \kappa_e (\rho_+ +\rho_- \tau_i \tau_j) \sigma_i \sigma_j \right] \\
        &= \sum_{\tau',\sigma} \exp\left[ \beta\sum_{e=ij\in E} \kappa_e (\rho_+ +\rho_- \tau'_i \tau'_j \sigma_i \sigma_j) \sigma_i \sigma_j \right], \quad \tau'=\tau \sigma\\
        &= \left[ \sum_{\sigma} \exp{\left(\beta\rho_+ \sum_{e=ij}\kappa_e \sigma_i \sigma_j\right)}\right]\left[ \sum_{\tau'} \exp{\left(\beta\rho_- \sum_{e=ij}\kappa_e \tau'_i \tau'_j\right)}\right]\\
        &= Z^{\mathrm{Is}}_{\beta \rho_+} \times Z^{\mathrm{Is}}_{\beta \rho_-,}
    \end{align}
    A similar transform can be applied to the correlation functions, and thus the statement follows.
\end{proof}
The previous correspondence then implies that
\begin{equation}
    ``T_c" = \rho_+ T^\mathrm{Is}_c \quad\ge \quad T_{\text{TRSB}} = \rho_- T^\mathrm{Is}_c
\end{equation}
Where equality only holds at $\rho =0$. 
Here, $T_c^\text{Is}$ is the critical temperature of the corresponding Ising model and we write $``T_c"$ since the original $U(1)$ spins are replaced by $\dZ_2$ spins. 

\subsection{$U(1) \mapsto \dZ_4$}
Note that the previous simplification $U(1)\mapsto \dZ_2$ was trivial in the sense that the system is can be exactly mapped to decoupled Ising models.
In this section, we shall consider the slightly more general case where $U(1)\mapsto \dZ_4$, i.e., $\sigma_i = (1,0),(0,1),(-1,0),(0,-1)$ at each lattice site $i$.
It should be noted that the $\dZ_4$ clock model is special in the sense that the $(X,Y)$ degree of freedom can be replaced by 2 independent Ising degrees of freedom, i.e., $\xi,\eta=X\pm Y$.
Therefore, the Hamiltonian can be rewritten as
\begin{equation}
    \label{eq:Z4-Z2-rho}
    H^{\dZ_4}_\rho = -\frac{1}{2}\sum_{e = ij } \kappa_e (\rho_+ +\rho_- \tau_i\tau_j) (\xi_i \xi_j +\eta_i \eta_j)
\end{equation}

To simplify notation, let $\bra \psi \ket$ be shorthand for the correlation $\bra \psi_0 \psi_R \ket^{\dZ_4}_{\rho,\beta} $ where $\psi$ can be any of the 7 options $\psi = \tau, \xi, \eta, \tau \xi, \tau \eta, \xi \eta, \tau \xi \eta$, each of which can (in principle) define a separate transition temperature $T_\psi$. 
However, using the symmetries of the Hamiltonian in Eq. \eqref{eq:Z4-Z2-rho} and the corresponding FKG inequalities \cite{velenik,duminil2017lectures,ginibre1970general}, the following (in)equalities can be shown.

\begin{theorem}
    For any $0\le \rho \le 1$,
    \begin{align}
        \bra \xi\ket &=\bra \eta\ket, \quad    \bra \tau \xi \ket = \bra\tau \eta \ket \\
        \bra \tau \ket &= \bra \tau \xi \eta\ket \ge \bra \tau \xi\ket \bra \xi\ket\\
        \bra \xi\ket  &\ge \bra \tau \xi \ket \ge \bra \tau\ket \bra \xi\ket \\
        \bra \xi\ket & \ge \bra \xi\eta\ket \ge \bra \xi\ket^2
    \end{align}
    Moreover,
    \begin{equation}
        \bra \xi \eta \ket \ge \bra \tau\ket
    \end{equation}
    Where equality holds if $\rho =0$.
\end{theorem}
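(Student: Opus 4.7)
The plan is to combine three ingredients: \emph{gauge symmetries} of $H_\rho^{\dZ_4}$, \emph{Ginibre's inequality} (applicable because $-H_\rho^{\dZ_4}$ decomposes into Ising monomials $\xi_i\xi_j$, $\eta_i\eta_j$, $\tau_i\tau_j\xi_i\xi_j$, $\tau_i\tau_j\eta_i\eta_j$ all with nonnegative coefficients whenever $\rho\in[0,1]$), and \emph{conditional decoupling}---the observation that, given any $\tau$-configuration, the $\xi$ and $\eta$ sectors become independent ferromagnetic Ising systems with couplings $\kappa_e(\rho_+ + \rho_-\tau_i\tau_j)/2\ge 0$.

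The equalities come first. The swap $\xi\leftrightarrow\eta$ is a manifest symmetry of Eq.~\eqref{eq:Z4-Z2-rho} and yields $\bra\xi\ket=\bra\eta\ket$ and $\bra\tau\xi\ket=\bra\tau\eta\ket$. For $\bra\tau\ket=\bra\tau\xi\eta\ket$ the key step is the sitewise change of variables $\tau_i\mapsto\zeta_i:=\tau_i\xi_i\eta_i$ (holding $\xi,\eta$ fixed), which is a bijection of $\{\pm1\}^V$. Substituting $\tau_i\tau_j=\zeta_i\zeta_j\xi_i\xi_j\eta_i\eta_j$ and invoking the identity $\xi_i\xi_j\eta_i\eta_j(\xi_i\xi_j+\eta_i\eta_j)=\xi_i\xi_j+\eta_i\eta_j$ (valid since $(\xi_i\xi_j)^2=(\eta_i\eta_j)^2=1$) shows that $H_\rho^{\dZ_4}$ re-expressed in $(\zeta,\xi,\eta)$ takes the identical functional form $-\tfrac12\sum_e\kappa_e(\rho_+ + \rho_-\zeta_i\zeta_j)(\xi_i\xi_j+\eta_i\eta_j)$. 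The Gibbs measure is therefore invariant and $\bra\tau_0\tau_R\ket=\bra\zeta_0\zeta_R\ket=\bra\tau\xi\eta\ket$.

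Three applications of Ginibre's inequality $\bra\sigma_A\sigma_B\ket\ge\bra\sigma_A\ket\bra\sigma_B\ket$ handle the product-type bounds: $(\sigma_A,\sigma_B)=(\tau_0\tau_R,\,\xi_0\xi_R)$ gives $\bra\tau\xi\ket\ge\bra\tau\ket\bra\xi\ket$; $(\sigma_A,\sigma_B)=(\tau_0\tau_R\xi_0\xi_R,\,\xi_0\xi_R)$, for which $\sigma_A\sigma_B=\tau_0\tau_R$, gives $\bra\tau\ket\ge\bra\tau\xi\ket\bra\xi\ket$; and $(\sigma_A,\sigma_B)=(\xi_0\xi_R,\,\eta_0\eta_R)$ combined with $\bra\xi\ket=\bra\eta\ket$ gives $\bra\xi\eta\ket\ge\bra\xi\ket^2$. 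Using $\bra\tau\ket=\bra\tau\xi\eta\ket$ recasts the second as the stated $\bra\tau\xi\eta\ket\ge\bra\tau\xi\ket\bra\xi\ket$.

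The remaining bounds follow by integrating out $\xi,\eta$ at fixed $\tau$. Writing $\bra\cdot\ket_\tau$ for the conditional thermal average and $\dE_\tau$ for the $\tau$-marginal expectation, conditional decoupling and $\xi\leftrightarrow\eta$ give $\bra\xi_0\xi_R\ket_\tau=\bra\eta_0\eta_R\ket_\tau\in[0,1]$ (Griffiths' first inequality) and $\bra\xi_0\eta_0\xi_R\eta_R\ket_\tau=\bra\xi_0\xi_R\ket_\tau^2$. Consequently,
\begin{align}
    \bra\xi\ket-\bra\tau\xi\ket &= \dE_\tau\bigl[(1-\tau_0\tau_R)\bra\xi_0\xi_R\ket_\tau\bigr]\ge 0, \\
    \bra\xi\ket-\bra\xi\eta\ket &= \dE_\tau\bigl[\bra\xi_0\xi_R\ket_\tau(1-\bra\xi_0\xi_R\ket_\tau)\bigr]\ge 0, \\
    \bra\xi\eta\ket-\bra\tau\xi\eta\ket &= \dE_\tau\bigl[(1-\tau_0\tau_R)\bra\xi_0\xi_R\ket_\tau^2\bigr]\ge 0,
\end{align}
yielding the two monotonicities and, via $\bra\tau\ket=\bra\tau\xi\eta\ket$, the main inequality $\bra\xi\eta\ket\ge\bra\tau\ket$. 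For equality at $\rho=0$, the Hamiltonian admits the additional gauge symmetry $\xi_i\mapsto\tau_i\xi_i$ (since $(1+\tau_i\tau_j)\tau_i\tau_j=1+\tau_i\tau_j$), under which $\xi_0\xi_R\eta_0\eta_R\mapsto\tau_0\tau_R\xi_0\xi_R\eta_0\eta_R$, forcing $\bra\xi\eta\ket=\bra\tau\xi\eta\ket=\bra\tau\ket$. The only delicate step is the algebraic identity underlying the gauge $\tau\mapsto\tau\xi\eta$; it is particular to the $\dZ_4\cong\dZ_2\times\dZ_2$ structure and would not generalize naively to $\dZ_n$ clock simplifications with $n>4$.
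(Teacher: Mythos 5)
Your proposal is correct and follows essentially the same route as the paper: the $\xi\leftrightarrow\eta$ symmetry, the gauge $\tau\mapsto\tau\xi\eta$, three applications of Ginibre's inequality, and conditional expectation over $\tau$ combined with Griffiths' first inequality. The only cosmetic differences are that you derive $\bra\xi\ket\ge\bra\xi\eta\ket$ via the conditional-decoupling identity $\bra\xi_0\eta_0\xi_R\eta_R\ket_\tau=\bra\xi_0\xi_R\ket_\tau^2$ rather than conditioning on $(\xi,\tau)$ as the paper does, and you exhibit the $\rho=0$ equality through the equivalent gauge $\xi_i\mapsto\tau_i\xi_i$.
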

\begin{proof}
    From the Hamiltonian in Eq. \eqref{eq:Z4-Z2-rho}, it's  clear that $\bra \xi\ket =\bra \eta\ket$ and $\bra \tau \xi\ket =\bra \tau \eta\ket$ due to the symmetry $\xi \lr \eta$.
    Also notice that when summing over the configuration $(\tau, \xi, \eta)$, the mapping $\tau \mapsto \tau\xi \eta$ is a bijective transformation which keeps the Hamiltonian invariant. 
    Therefore, $\bra \tau \ket=\bra \tau \xi\eta\ket$.
    By Ginibre's/FKG inequality (Prop 3, 5 and Example 4 of Ref. \cite{ginibre1970general}), we see that
    \begin{align}
        \bra \tau \xi \eta\ket &\ge \bra \tau \xi\ket \bra \eta\ket =\bra \tau \xi\ket \bra \xi\ket \\
        \bra \tau \xi\ket &\ge\bra \tau\ket \bra \xi\ket \\
        \bra \xi\eta \ket &\ge \bra \xi\ket \bra \eta \ket =\bra \xi\ket^2
    \end{align}

    It should be noted that the conditional expectation $\bra \xi\ket^\tau$ (i.e., computing the thermal average of $\xi_0 \xi_R$ if the $\dZ_2$ configuration $\tau$ were fixed) is equal to that of an Ising model $-\sum_e \kappa_e' \xi_i \xi_j$ with edge coupling
    \begin{equation}
        \kappa_e' = \frac{1}{2} \kappa_e (\rho_+ +\rho_- \tau_i\tau_j) \ge 0, \quad e=ij
    \end{equation}
    Therefore, by Griffiths 1\ts{st} inequality \cite{ginibre1970general,velenik,duminil2017lectures}, we see that $\bra \xi\ket^{\tau} =\bra \xi\ket^{\mathrm{Is}'} \ge 0$, and thus
    \begin{align}
        \bra \tau \xi\ket =\bra \tau \bra \xi\ket^{\tau} \ket \le \bra \bra \xi\ket^\tau \ket =\bra\xi\ket
    \end{align}
    Where the first equality can be understood as first averaging over $\xi, \eta$, then averaging over $\tau$. 
    The inequality uses the trivial fact $\tau_0 \tau_R \le 1$. 
    Similarly, we have
    \begin{equation}
        \bra \xi\eta\ket = \bra \xi \bra \eta\ket^{\xi,\tau} \ket \le \bra \bra \eta\ket^{\xi,\tau}\ket =\bra \eta\ket =\bra \xi\ket
    \end{equation}
    Where $\bra \eta\ket^{\xi,\tau} $ is the conditional expectation with respect to fixing $\xi, \tau$.

    For the last inequality, note that $\bra \tau \ket =\bra \tau \xi \eta\ket$ and as before, the conditional expectation $\bra \xi \eta \ket^\tau =\bra \xi\ket^{\mathrm{Is}'} \bra \eta \ket^{\mathrm{Is}'}$ (fixing the $\dZ_2$ spin configuration $\tau$) is equal to two independent Ising models with edge couplings $\kappa_e'$.
    Therefore, by Griffiths 1st inequality \cite{ginibre1970general,velenik,duminil2017lectures}, we see that $\bra \xi\ket^{\mathrm{Is}'}, \bra \eta\ket^{\mathrm{Is}'}\ge 0$ and thus
    \begin{align}
        \bra \tau \ket &=\bra\tau \xi\eta\ket \\
        &= \bra \tau \bra \xi \eta\ket^\tau\ket \\
        &\le \bra  \bra \xi \eta\ket^\tau\ket, \quad \tau_0\tau_R \le 1 \\
        &= \bra \xi\eta \ket
    \end{align}
    In the case where $\rho =0$, we see that $\tau \lr \xi \eta$ is a bijective transform which leaves the Hamiltonian invariant and thus we obtain equality,
    \begin{equation}
        \bra \tau\ket =\bra\xi\eta\ket
    \end{equation}
\end{proof}
Using the inequalities, it's straightforward to check that $T_\text{TRSB} \equiv  T_\tau$ and $``T_c" \equiv T_{\xi \eta}$ are the only (possibly) independent transition temperatures\footnote{
Due to the transform $\xi,\eta = X\pm Y$, we see that $\bra \cos 2(\theta_0 -\theta_R) \ket \sim 2 \bra \cos (2\theta_0) \cos (2\theta_R) \ket \sim \bra \xi \eta \ket$, and that $\bra \cos (\theta_0 -\theta_R )\ket \sim \bra \xi \ket$.
}, i.e.,
\begin{align}
    T_\xi &= T_\eta =T_{\xi\eta}\\
    T_\tau &= T_{\tau \xi \eta} \\
    T_{\tau \xi} &= T_{\tau \eta} = \min (T_\tau, T_{\xi\eta}) 
\end{align}
Moreover, the last inequality implies that
\begin{equation}
    T_\text{TRSB} \le ``T_c"
\end{equation}
Where equality holds\footnote{In contrast to the $U(1)\to \dZ_2$ simplification, we cannot tell whether the transitions split for $\rho \ne 0$.} at $\rho =0$. 
Hence, we see that the system cannot exhibit vestigial order on any lattice despite having short-range interactions (in comparison to claims of Ref. \cite{zeng2021phase}); at most, the two transitions coincide at the critical point $\rho =0$.

\end{document}